\documentclass[12pt]{article}
\usepackage[utf8]{inputenc}
\usepackage{hyperref}

\setlength{\topmargin}{0.5in}
\setlength{\topskip}{0in}
\setlength{\oddsidemargin}{0in}
\setlength{\evensidemargin}{0in}
\setlength{\headheight}{0in}
\setlength{\headsep}{0in}
\setlength{\textheight}{9in}
\setlength{\textwidth}{6.5in}

\usepackage{latexsym,amsmath,amssymb,amsthm,mathtools}
%\usepackage{natbib}
%\setcitestyle{authoryear, open={[[},close={]]}}
\usepackage{biblatex}

\newtheorem{theorem}{Theorem}
\usepackage{graphicx}
\graphicspath{ {./images/} } 
\newcommand{\ind}{\mathrel{\text{\scalebox{1.07}{$\perp\mkern-10mu\perp$}}}}

\usepackage{color,ulem}
\definecolor{dred}{rgb}{0.5,0,0}
\definecolor{dgreen}{rgb}{0,0.3,0}
\definecolor{dblue}{rgb}{0,0,0.5}
\definecolor{dmagenta}{cmyk}{0,1,0,0.6}
\definecolor{dcyan}{cmyk}{1,0,0,0.5}
\definecolor{grey}{gray}{0.9}
\definecolor{orange}{rgb}{1,0.65,0}
\definecolor{mellow}{rgb}{.847,.72,.525}
\definecolor{golden}{rgb}{.80392,.60784,.11373}
\definecolor{dgolden}{rgb}{.5451,.39608,.03137}
\definecolor{brown}{rgb}{.15,.15,.15}
\definecolor{darkolivegreen}{rgb}{.33333,.41961,.18431}

\newcommand{\COR}{{\mbox{COR}}}
\newcommand{\OR}{{\mbox{OR}}}
\newcommand{\pa}{\mbox{\rm pa}}
\newcommand{\nd}{\mbox{\rm nd}}
\newcommand{\pre}{{\mbox{\rm pre}}}

\newcommand{\PA}{\mbox{\rm PA}}
\newcommand{\ND}{\mbox{\rm ND}}
\newcommand{\PRE}{\mbox{\rm PRE}}
\newcommand{\nb}{\mbox{\rm nb}}

\usepackage{bm}

\def\bzero{{\bf 0}}

\def\bone{{\bf 1}}

\newcommand{\dD}{{\rm I\hspace{-.7mm}D}}

\newcommand{\cov}{{\mbox{\rm cov}}}

\newcommand{\step}{{\mbox{\rm \it step}}}

\def\bbeta{{\bm\beta}}

\def\blambda{{\bm\lambda}}
\def\bmu{{\bm\mu}}

\def\btau{{\bm\tau}}

\def\bLambda{{\bm\Lambda}}

\def\bmm{{\bf m}}
\def\bn{{\bf n}}

\def\bs{{\bf s}}

\def\bu{{\bf u}}

\def\by{{\bf y}}

\def\bB{{\bf B}}
\def\bC{{\bf C}}
\def\bD{{\bf D}}

\def\bI{{\bf I}}
\def\bJ{{\bf J}}

\def\bM{{\bf M}}

\def\bU{{\bf U}}
\def\bV{{\bf V}}

\def\bX{{\bf X}}

\def\cK{{\mathcal K}}

\def\cM{{\mathcal M}}

\def\cV{{\mathcal V}}

\def\cX{{\mathcal X}}

\begin{document}

\title{Marginal Models: an Overview}
\author{Tam\'as Rudas\\Department of Statistics, Faculty of Social
Sciences\\
E\"{o}tv\"{o}s Lor\'{a}nd University, Budapest\\
\texttt{trudas@elte.hu}
\and 
Wicher Bergsma\\
London School of Economics\\
\texttt{W.P.Bergsma@lse.ac.uk} }
\maketitle

\begin{abstract}
Marginal models involve restrictions on the conditional and marginal association structure of a set of categorical variables. They generalize log-linear models for contingency tables, which are the fundamental tools for modelling the conditional association structure.
This chapter gives an overview of the development of marginal models during the past 20 years.
After providing some motivating examples, the first few sections focus on the definition and characteristics of marginal models. Specifically, we show how their fundamental properties can be understood from the properties of marginal log-linear parameterizations. Algorithms for estimating marginal models are discussed, focussing on the maximum likelihood and the generalized estimating equations approaches.
It is shown how marginal models can help to understand directed graphical and path models, and a description is given of marginal models with latent variables.
\end{abstract}

\section{Introduction}

We start with motivating examples in Section 2, including repeated measurements, missing data, and graphical models. All these involve the application of models which apply restrictions only on subsets of the variables, that is, on marginals of the contingency table containing their joint distribution.

The restrictions imposed by marginal models apply to the association structures within subsets of variables. The association is captured by log-linear parameters calculated in marginals of the table and Section 3 deals with general aspects of parameters and parameterizations, including variation independence.

Marginal log-linear parameterizations are developed in Section 4 and some of their fundamental properties, like variation independence, smoothness, and collapsibility are also discussed. Depending on the choice of the marginals in which the log-linear parameters are determined, marginal log-linear parameterizations are appropriate to capture several characteristics of the marginal and conditional association structure.

Marginal log-linear models are defined by restricting some marginal log-linear parameters to zero, as described in Section~5.

Marginal log-linear parameters are the standard log-linear parameters calculated from marginal distributions and measure the strength of conditional and/or marginal association. Marginal log-linear parameters are based on ordinary odds and odds ratios and their higher-dimensional generalizations. But, as described in Section 6, other types of odds ratios, which are particularly useful for ordinal data, may also be used to define marginal log-linear models.

Section 7 contains results concerning the general type of conditional independence models, including the case when some conditional independences apply to subsets of the variables, which may be formulated as marginal log-linear models.

Section 8 deals with estimation and testing. Lagrangian and Fisher scoring methods for maximum likelihood estimation are described and compared. 
The generalized estimating equations (GEE) approach for estimating marginal models is described as well. 

Section 9 discusses areas of applications where marginal log-linear models either provide a general way of implementing the standard analysis or a new approach to answer the research question. These include directed graphical models, path models, and latent variable models, but many other applications are mentioned, too.

Very few proofs are included, as most of the results are quoted from research publications.

 \section{Motivation}
 
 There are several types of statistical problems where marginal distributions of higher dimensional joint distributions play a central role. In this section, we discuss three such broad types of problems.

\subsection{ 
Repeated measurements and panel studies
}\label{repmeas}
In many experimental and observational settings, subjects are measured or observed repeatedly. The reasons for measuring repeatedly include to study the within-subject variability of the measurements, or to reduce measurement error by taking the average measurement value. In such cases, the measurements are made close to each other in time. Another reason for repeated measurements is  to investigate the effect of a treatment applied to the subjects between the measurements, in which case one measurement is taken before, and another one after, the treatment. Sometimes the variability or stability of the measurement results over time is of interest, without any treatment being applied.

For example, variables $A_1$ and $B_1$ are observed in a first measurement, a treatment is applied, and then the same variables are measured again, denoted as $A_2$ and $B_2$. There are a number of relevant hypotheses to test. The first one, say $H_1$, is that $A$ and $B$ are independent both before and after treatment. One may argue that $H_1$ is true if and only if both $H_{11}:$ ``$A_1$ is independent of $B_1$'' and $H_{12}:$ ``$A_2$ is independent of $B_2$'' are true. This is correct, but a test of $H_1$ with a given level cannot be  constructed, in general, from separate tests of the hypotheses $H_{11}$ and $H_{12}$. This would be possible if the samples for the pairs of variables $A_1, \, B_1$ and $A_2, \, B_2$ were independent, which is not the case in the current repeated measurements setup. Instead, one has observations for each unit in the sample for the variables $A_1, \, B_1, \, A_2, \, B_2$, and $H_1$ states that in this $4-$dimensional distribution there are two marginal independences, one for $A_1$ and $B_1$, and one for $A_2$ and $B_2$. This is a marginal model.

Another relevant model, say $H_2$, in this setup is that the distributions of the two measurements of $A$ are identical, that is, the treatment does not change the distribution on the population level, and similarly for $B$, but the results of the second measurements are independent. Thus $H_2$ contains restrictions on the $A_1 \times A_2$ (marginal homogeneity), $B_1 \times B_2$ (marginal homogeneity), and $A_2 \times B_2$ (independence) marginals.

The hypotheses $H_1$ and $H_2$ assume marginal models about the joint distribution.

A closely related longitudinal design is called a panel study, see, e.g., Frees  and  Kim (2008), where the individuals in a sample  are interviewed repeatedly at regular intervals. The advantage\footnote{The design also has disadvantages, of course. These include panel attrition and the fact that, even if originally selected appropriately, with passing time the sample will become different in composition from the current population.} of such a design is that changes in opinions, preferences, or attitudes may be studied in a more valid way than by simultaneously asking about current and also previous positions in a cross-sectional study. The main limitation of such an approach is that earlier opinions or attitudes are often not remembered and sometimes are not reported truthfully.

In the analysis of panel data, the transition probabilities from one position into another one are of central interest. In particular, the dependence of the transition probabilities on earlier positions is an important question because this determines the  fragmentation of the data. More precisely, if the panel has, say, $5$ waves and $A_1, \,A_2,\, A_3,\, A_4,\, A_5$ denotes the positions of a respondent regarding a particular question during the waves, then one is interested in deciding whether, for instance,
$$
P(A_5| A_4, A_3) = P(A_5| A_4) 
$$
holds. If it does, then  the position at wave $5$ cannot be better predicted if, in addition to the position at wave $4$, the position at wave $3$ is also taken into account. For example, in this case the chance of supporting a particular political party at the time of wave $5$ may depend on the preferred party at the time of wave $4$, but if the latter is known, the party preference at the time of wave $3$ provides no additional information.

Slightly more generally, if
$$
P(A_t| A_{t-1}, A_{t-2}, \ldots , A_1) = P(A_t| A_{t-1}) 
$$
holds for all waves (time points) $t$, then the joint distribution is called a one-step Markov chain. It is easy to see that this property is equivalent to the following conditional independence
$$
A_t \ind  A_{t-2}, \ldots , A_1 | A_{t-1}.
$$
In detail, for the $5$ waves this means that 
$$
A_3 \ind   A_1 | A_{2},
$$
$$
A_4 \ind   A_2, A_1 | A_3,
$$
$$
A_5 \ind   A_3, A_2, A_1 | A_4.
$$
For the joint distribution of the variables $A_1, \, A_2, \, A_3, \, A_4, \, A_5$ the model prescribes  conditional independences on the $A_1 \times A_2$, $A_1 \times A_2 \times A_3$,  $A_1 \times A_2 \times A_3 \times A_4$ and $A_1 \times A_2, \times A_3 \times A_4 \times A_5$ marginals.

\subsection{ 
Missing data and data fusion
}

The statistical problems discussed next lead to the task of generating a joint distribution with given marginal distributions. Thus, the restrictions implied by design or the type of data collected in these cases fully determine some marginal distributions (or make it possible to estimate them) and do not only specify a model for them as in the previous examples.  

One group of such problems is related to incomplete observations or missing data (Little  and  Rubin, 2019). When the data are collected through a survey of a human population, usually not all individuals selected by the sampling procedure answer the questions. Some are not found, some are found but are not willing to participate in the survey, and some do participate but choose not to answer some questions. While dealing with those who do not provide any information is a serious issue, the best utilization of the often only partial answers collected is an important statistical problem (Little  and  Rubin, 2019). A similar situation occurs when data are collected in an experimental setting, because of the dropout of the participants. One approach is to consider the responses collected for a particular subset of the questions and use them to estimate the joint distribution of the answers. These are estimates of some marginal distributions of the joint distribution of all answers. This procedure is justified, because the smaller a subset of questions is, the more individuals gave responses to all of them, and their joint distribution may be better estimated than the joint distribution of all variables.

For example, let the questionnaire contain $4$ yes-no questions and let the variables $A_1, \, A_2, \, A_3, \, A_4$ contain the  answers. Then, the distribution of $A_1$ may be estimated based on all the answers provided to the first question, and similarly for all other variables. Thus, the one-way marginal distributions are estimated based on different subsets of the sample. Next, the $A_1 \times A_2$ marginal distribution is estimated based on the one-way marginal estimates already obtained and on the observations which contained responses  to both $A_1$ and $A_2$. From the latter, one may estimate the odds ratio (see, e.g., Rudas, 2018) between $A_1$ and $A_2$ and combine this with the one-way marginals to estimate the $A_1 \times A_2$ distribution.  In theory, the procedure can be continued until the $A_1 \times A_2 \times A_3 \times A_4$ distribution is estimated, although it raises many compatibility and optimality issues, and as will be seen later, the feasibility of such a procedure depends heavily on the patterns of missing data.

Sometimes, the missing data pattern is not observed but is implied by design. When the questionnaire is too long, or answering all questions could be seen as a  breach of the respondents' privacy, some of them may be asked $A_1, \,A_2, \,A_3$, others $A_1, \, A_4, \, A_5$, where now these may not be individual questions rather blocks of questions, and, of course, other patterns are also possible. The $A_1 \times A_2 \times A_3$ and the $A_1 \times A_4 \times A_5$ marginal distributions may be estimated, and from these the joint distribution. The design is called a split questionnaire (Rhemtulla  and  Little, 2012)  but similar problems arise in so-called register-based censuses, see, e.g., Eppmann et al. (2006).

In a register-based census, now applied by several countries, instead of collecting information from all inhabitants of the country, data from existing registers (driving licences, health care access, etc.) are combined to find out the relevant information. The individual registers provide certain conditional and/or marginal distributions, and the task is to estimate the joint distributions. This problem is called data fusion (see, e.g., D'Orazio et al., 2006), and it also occurs in other areas, see e.g., Cocchi (2019).

\subsection{ 
Graphical modelling
}\label{grm}

Graphical Markov models associated with directed acyclic graphs (also called Bayesian nets) are widely used in expert systems, artificial intelligence, and also in some approaches to modelling causal effects. 

A simple example of a directed acyclic graph (DAG) is shown in Figure \ref{fig1}. It has four nodes, $A, B, C, D$, which are identified with variables and the intuitive interpretation of the arrows is that they represent direct effects. The graph is acyclic, because there is no sequence of nodes in the order of arrows with the same starting and ending node.

\begin{figure}
\caption{A directed acyclic graph}
\vspace{-30mm}
\includegraphics[trim=0 600 0 0, clip]{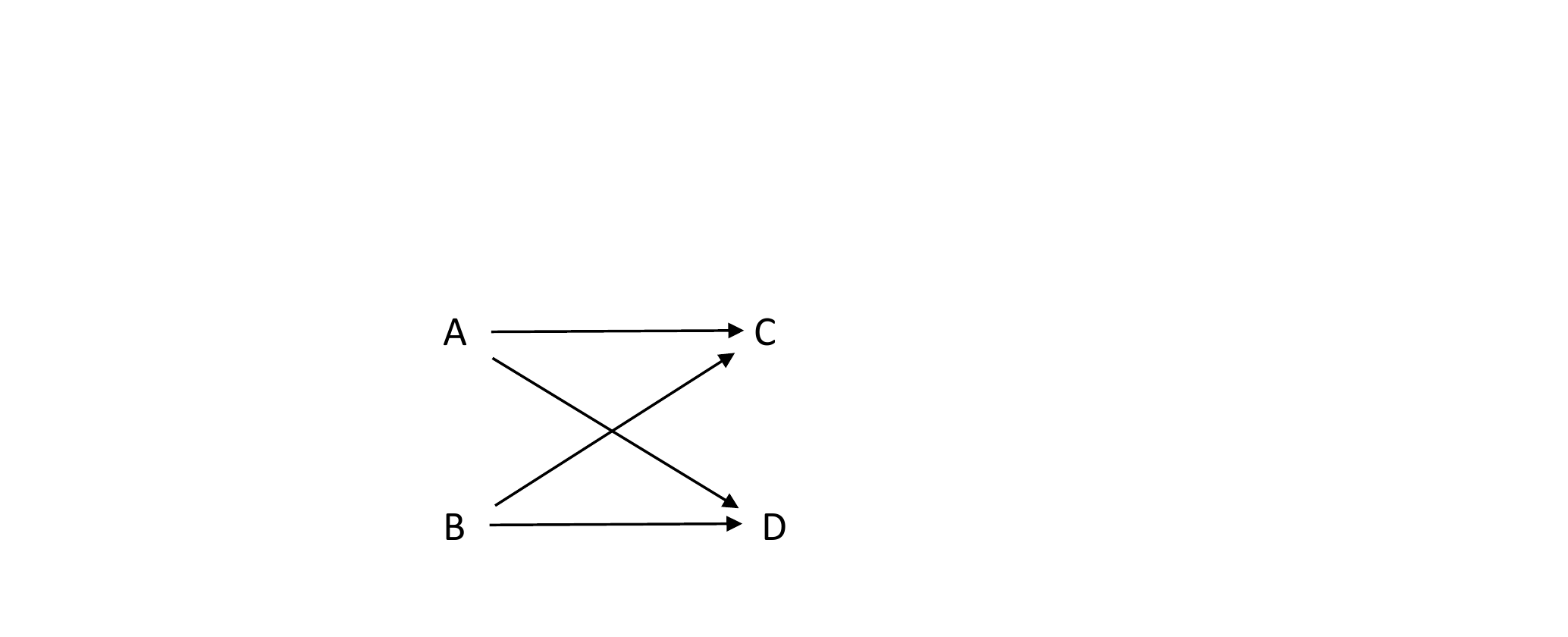}
%\caption{A directed acyclic graph}
\label{fig1}
\end{figure}

A precise interpretation of the Markov model is that it assumes (conditional) independences among the variables. It is the missing arrows
%\footnote{Interpretations of DAGs based on the arrows which are present is called structural equation modelling. For a comparison of the two approaches see Bergsma et al.\ (2009).} 
which imply the conditional independences defined by the directed Markov property, see, e.g., Lauritzen (1996).\footnote{See Bergsma et al.\ (2009), Chapter~5, for applications in causal analysis and the relationship with structural equation models.} 

Graphical Markov models associated with DAGs  generalize the conditional independence property of Markov chains. In a Markov chain, the conditional independence  is implied by the lack of temporal adjacency, and in the more general structures considered here, the temporal adjacency is replaced by the adjacency read off from a graph. When this graph is undirected, one Markov property, called the local Markov property, says that a variable is conditionally independent of its non-neighbours, given its neighbours.\footnote{The neighbours of a node $A$ are those nodes with which $A$ is connected by an edge.} Notice that such a conditional independence involves all variables.

The assumption that the joint distribution of the variables obeys the conditional independences implied by the local Markov property applied to a particular graph is a graphical log-linear model, see Lauritzen (1996). 

To define the Markov property associated with a DAG, call those nodes from which an arrow goes to $A$ the parents of node $A$ and denote them as $\pa(A)$. Further, call those nodes into which no directed path leads which starts in $A$ the non-descendants of $A$ and denote these nodes as $\nd(A)$. Note that $\pa(A) \subseteq \nd(A)$, because otherwise the graph would contain a directed cycle. Then the local Markov property is that 
$$
A \ind \nd(A) \setminus \pa(A) | \pa(A).
$$
In the case of the DAG in Figure \ref{fig1}, the directed Markov property implies that 
\begin{equation}
\label{ind}
A \ind B
\end{equation}
\begin{equation}
\label{cind}
 C \ind D\, | \, A, B.    
\end{equation}

Out of the two (conditional) independences, one is on the  $A \times B$ marginal, and the other one is on the full table.

In general, the conditional independences defining a Markov model associated with a DAG are on marginals containing a variable and its non-descendants. Therefore, these models are marginal models.

An important property of the distributions which are Markov according to a DAG is that they factorize into the product of conditional distributions of the variables given their parents, see, e.g., Lauritzen (1996). For the DAG in Figure \ref{fig1}, the factorization is $$
P(ABCD) = P(A)P(B)P(C|A,B)P(D|A,B).
$$

\section{Parameterizations of discrete probability distributions}
\label{sec-par}

A marginal model is defined most conveniently using a particular parameterization of the joint distribution of the variables of interest, the so-called marginal log-linear parameterization, to be discussed in Section~\ref{MLLP}. This section deals with various general characteristics of parameters and parameterizations. 

\subsection{ 
Parameters and parameterizations
}

A parameter is an arbitrary function of the distribution, and is often multidimensional, i.e.,  vector-valued. For example, in the case of a $2 \times 2$ distribution, with the usual notation, $(p_{11}, p_{12})$ is a parameter, so is $(p_{11}, p_{12}, p_{21}, p_{22})$ or $( p_{11}, p_{1+}, p_{+1}) $. The parameters which are of interest in statistical analysis usually express some relevant property of the distribution. For example, a widely used measure of association between the two variables forming the table, see e.g., Rudas (2018), is the odds ratio
$$
\frac{p_{11}p_{22}}{p_{12}p_{21}},
$$
which is also a parameter of the distribution. It measures a characteristic (strength and direction of association) which is not directly seen from the probabilities. 

Therefore, a parameter represents information from the distribution. In many cases, one is interested in looking at parameters that carry all information in the distribution. A more formal way of imposing this is to consider, instead of the value of a certain parameter, the function which yields that parameter and to require that this function is invertible. If this holds, the parameter is called a parameterization. 

In the case of a $2 \times 2$ distribution, the $(p_{11}, p_{12})$  parameter is not a parameterization, because if its value is known, the distribution cannot be reconstructed. But the $(p_{11}, p_{12}, p_{21}, p_{22})$ and $( p_{11}, p_{1+}, p_{+1}) $ parameters are parameterizations. The cell probabilities are given in the first case, and are easily determined in the second. Also, the odds ratio and the marginal probabilities $p_{1+}$ and $p_{+1}$ form a parameterization, see Rudas (2018). In this case, inverting the parameterization, i.e. calculating the cell probabilities, needs to be done using numerical algorithms such as  the Iterative Proportional Fitting or Scaling algorithm, see, e.g., Rudas (2018).

While a parameter vector may have arbitrary dimension, a parameterization has a minimal dimension, which is the dimension of the distribution. In the case of a $2 \times 2$ distribution, although one has $4$ probabilities, in the $4$-dimensional space the distributions are in a $3$-dimensional subspace, as their sum is $1$. The same fact may also be formulated by saying that out of the $4$ probabilities, only $3$ are linearly independent. Therefore, the minimal dimension of a parameterization is $3$.

\subsection{ 
Variation independence
}\label{par2}

One of the most desirable properties of parameters and parameterizations is the variation independence of their components. Before giving a general definition, a simple example adopted from Rudas  and  Bergsma (2004) is given to illustrate the concept.

Suppose in an experiment a $2 \times 2$ treatment by outcome table was observed, and
as a measure of  effect of the treatment,  the difference in
proportion of positive outcomes among those treated and among the control is used. Assume the data given in Tables \ref{tab1} and  \ref{tab2} 
 were observed for male and female participants, respectively.

\begin{table}
%\label{tab1}
\[
% \color{red}
\begin{array}{|r|r|r|r|}
\hline
  Outcome & Positive & Negative & Total \\
\hline
 Treatment & 20   &  80  & 100  \\
\hline
 Control & 10 & 90   & 100 \\
%  Outcome & Treatment & Control \\
%\hline
%   Positive & 20   &  10    \\
%\hline
% Negative & 70 & 90 \\
\hline
\end{array}
\]
\caption{Hypothetical experimental results for men }
\label{tab1}
\end{table}

\begin{table}
%\label{tab2}
\[
%\color{red}
\begin{array}{|r|r|r|r|}
\hline
  Outcome & Positive & Negative & Total \\
\hline
 Treatment & 60   &  40  & 100  \\
\hline
 Control & 40 & 60   & 100 \\
%  Outcome & Treatment & Control \\
%\hline
%   Positive & 60   &  40  \\
%\hline
% Negative & 40 & 50\\
\hline
\end{array}
\]
\caption{Hypothetical experimental results for women }
\label{tab2}
\end{table}

The selected measure of the size of the effect takes on the value of 
$$
\frac{20}{100} - \frac{10}{100} = 0.1
$$
for men and is
$$
\frac{60}{100} - \frac{40}{100} = 0.2
$$ 
for women. Is, then, the treatment twice as effective for women than for men?

The answer to this question is not clear. While the difference in the probabilities  of positive outcomes under treatment and control seems like a meaningful measure of effect and its value is twice as big for women as for men, the following argument is also possible: for both men and women, $100$ individuals received the treatment and $100$ the control, but for men, there were $30$ and for women $100$ positive outcomes. Given this, the maximum possible value of the measure of effect is    
$$
\frac{30}{100} - \frac{0}{100} = 0.3
$$
for men and is
$$
\frac{100}{100} - \frac{0}{100} = 1
$$ 
for women. Thus, the actual value of the measure of treatment efficacy is $1/3$ of its maximum  possible value for men, while for women the actual value is only  $1/5$ of its theoretical maximum.

This example illustrates that the possible range of the measure is affected by the values of the other parameters and, in such a case, the assessment of the actual value may be very different when the other parameters are or are not taken into  account. Put differently, a parameter which is not variation independent of the other parameters lacks calibration.

Variation independence means that the above dependence does not occur. Two parameters are variation independent if their joint range is the Cartesian product of their individual ranges, i.e., any otherwise possible value of one can be combined with any otherwise possible value of the other.

In the case of the example discussed above, the measure of treatment efficacy was
\begin{equation}\label{effi}
\frac{p_{11}}{p_{+1}} - \frac{p_{12}}{p_{+2}}
\end{equation}
and this was not variation independent of the other parameters $p_{1+}$ and $p_{+1}$.
Indeed, its minimum value is zero and its maximum value is 

$$
\text{min} \left(  1, \frac{p_{1+}}{p_{+1}} \right) ,
$$
so its range is $[0,\text{min}(1, p_{1+}/p_{+1})]$. To put it differently, the range of the measure (\ref{effi}) depends on the marginal distributions, and it is often not clear whether the inference should or should not condition on the marginals.   

The odds ratio is variation independent of $(p_{1+}, p_{+1})$ and is, therefore, a parameter of the association with a calibration which does not depend on the marginals. Consequently, its values may be compared, even if calculated for tables with different marginal distributions. The value of the odds ratio is $1,800/800=2.25$ for men and $3,600/1,600=2.25$ for women, suggesting that the strength of association between treatment and positive outcome is the same for men as for women, in contrast with the naive comparison of the measures calculated originally. This is, however, not to say that the odds ratio would be without problematic characteristics when used as a measure of treatment efficacy, see Rudas (2010).

For higher dimensional  tables, there are parameterizations which rely on the odds ratio and its generalizations with variation independence properties, leading to a natural definition of log-linear models, see Rudas (2018), but for the definition of marginal models another type of parameterization, based on marginal and conditional distributions, is of more immediate use.

In the case of a $2 \times 2$ distribution, a parameterization with the $p_{+1}$ marginal probability and the $p_{1|1}$ and $p_{1|2}$ conditional probabilities is also possible. Indeed,
$$
p_{11}= p_{1|1} p_{+1}
$$
$$
p_{21}= (1-p_{1|1}) p_{+1}
$$
$$
p_{12}= p_{1|2} (1-p_{+1})
$$
$$
p_{21}= (1-p_{1|1}) (1-p_{+1}).
$$
An important feature of this parameterization is that all three parameters in it are variation independent.

A similar parameterization of a distribution on a $4$-way $A \times B \times C \times D$ table may parameterize the distribution on the $A \times B$ marginal, and then parameterize the conditional distribution on $C \times D$, given the marginal distribution on $A \times B$. Here, the two groups of parameters are variation independent. Further, within this parameterization, one may impose the marginal independence of $A$ and $B$, and then the conditional independence of $C$ and $D$, given $A$ and $B$. Note that this is exactly the marginal model defined in (\ref{ind}) and (\ref{cind}), as implied by the Markov property applied to the graph in Figure \ref{fig1}.

Alternatively, the following parameters constitute a parameterization of the $4$-way table in the binary case, with $OR$ denoting  the odds ratio and $COR$ the conditional odds ratio:
$$
\theta_1=(P(A=1), P(B=1)
$$
$$
\theta_2= \OR(A,B)
$$
$$
\theta_3 =(P(C=1|A=1, B=1),P(C=1|A=1, B=2), $$ $$ P(C=1|A=2, B=1),  P(C=1|A=2, B=2), P(D=1|A=1, B=1),  $$ $$ P(D=1|A=1, B=2), P(D=1|A=2, B=1), P(D=1|A=2, B=2 )) 
$$
$$
\theta_4 = (\COR(C, D |A=1, B=1), \COR(C, D |A=1, B=2),$$ $$ \COR(C, D |A=2, B=1), \COR(C, D |A=2, B=2)).
$$

In this example, $\theta_1$ is equivalent to the marginal distributions of variables $A$ and of $B$ and $\theta_3$ gives the conditional distributions of $C$ and of $D$, given any possible category combinations of $A$ and $B$. The parameter $\theta_2$ is the odds ratio in the marginal distribution  $A \times B$, and $\theta_4$ is the collection of the conditional odds ratios of $C$ and $D$, given all possible category combinations of $A$ and $B$. Thus, $\theta_1$ and $\theta_2$ determine the $A \times B$ marginal distribution, and  $\theta_3$ and $\theta_4$ determine the $C \times D$ conditional distribution, given $A \times B$.

Here, $\theta_1$ and $\theta_2$ are variation independent. Further, $\theta_3$ is variation independent of $\theta_1$ and $\theta_2$, and $\theta_4$ is variation independent of all the other three parameters. But also, $\theta_2$ and $\theta_4$ are variation independent of $\theta_1$ and $\theta_3$.

A statistical model may be obtained by fixing the values of some parameters of a parameterization. If variation independence between the fixed and the other parameters holds, this has no implication for the possible values of the other parameters, that is, there will be exactly one distribution in the model for every possible value of the other parameters. In other words, the unrestricted (components of the) parameter parameterize the model obtained by restricting the other (components of the) parameter.

This is illustrated most easily with the parameterization of  a $2 \times 2$, $A \times B$ table with  $\eta_1= (p_{1+}, p_{+1})$ and $\eta_2$ the odds ratio $\OR(A,B)$. As $\eta_1$ and $\eta_2$  are variation independent, if one defines a model by imposing $\eta_2=1$, i.e., the independence of $A$ and $B$, then there is exactly one independent distribution for every choice of the marginal probabilities in $\eta_2$.  Related models are obtained by fixing the odds ratio at a different value, see Rudas (1991) and Rudas  and  Leimer (1992).

%Further, if the model is defined this way, that is by restricting values of parameters  which are variation independent of the  parameters left free, then  the degree of freedom associated with an asymptotic chi-squared test of the model is the dimension of the restricted (component of the) parameter.

In the case of the $2 \times 2$  example above, when $\eta_2$ is fixed at $1$ and one obtains the model of independence for the $2 \times 2$ table, the number of degrees of freedom is $1$. It may sound counter-intuitive that when more parameters are fixed by the model, then the number of degrees of freedom is higher. To accept this, one has to remember that the number of degrees of freedom is related to the amount of deviation between observed and expected frequencies tolerated before one would decide the data provide evidence against the model. The more restrictive is the model, the larger is the deviation between observed and expected frequencies that one is ready to tolerate without rejecting the model. Thus, to have  a testing procedure with fixed type I error probability, one wishes to use  critical values which increase monotonically with the number of parameters fixed by the model, and chi-squared distributions with larger  degrees of freedom have larger  critical values. Therefore, the number of degrees of freedom associated with a model is not related to the parameters left free, rather it is related to (more precisely, is equal to) the number of parameters fixed by the model. 

In the case of the $4$-dimensional example, setting $\theta_2=1$ implies (\ref{ind}) and setting $\theta_4=(1,1,1,1)$ implies (\ref{cind}), yielding the graphical model associated with the graph in Figure \ref{fig1}. These two parameters are variation independent of the other two parameters, thus, $\theta_1$ and $\theta_3$ parameterize all distributions which are Markov according to the graph in Figure \ref{fig1}. Further, as there are $5$ parameter values fixed by the model ($\theta_2$ and $\theta_4$), the standard Pearson and likelihood ratio statistics have an asymptotic chi-squared distribution on $5$ degrees of freedom, when the model holds true, and data and maximum likelihood estimates are compared to test model fit. 

Following Bergsma  and  Rudas (2002a), marginal models will be defined in this chapter as a generalization of the procedure above.

 \section{Marginal log-linear parameterizations}
 \label{MLLP}
 
 Marginal log-linear parameters and parameterizations generalize the example discussed for the $4$-way table in the last section. Marginal  models will be defined by setting some marginal log-linear parameters to zero. The definition of marginal log-linear parameters, and of marginal log-linear models, provides flexible applications which can capture various useful properties of the joint distribution of several categorical variables.

 \subsection{ 
 Definition
 }\label{def}
 
 Let $\mathcal{V}$ be a set of categorical variables, and let $\mathcal{M} \subseteq \mathcal{V}$ denote a marginal. In the sequel, the word marginal will be used, depending on the context, as a subset of the variables, a marginal table of the associated contingency table, or the marginal distribution derived from a joint distribution.
 
 Following Bergsma  and  Rudas (2002a), marginal log-linear parameters are defined as log-linear parameters  (see e.g., Bishop et al. (1975), Agresti (2013), Rudas (2018)), calculated in  marginals of the table. For simplicity, only distributions with positive cell probabilities are considered in this chapter.
 
 Every subset $\mathcal{E} \subseteq \mathcal{V}$ of the variables may have an effect associated with them, which affects the joint distribution. To emphasize this, subsets of the  variables will also be referred to as effects. The strength of the effect (associated with a subset of variables) may be quantified in different ways. A particular quantification is, of course, a parameter. In this section a particular choice of the parameters is used, and alternatives will be discussed later. Many of the properties of the models do not depend on the particular choice of the parameters; however, this becomes relevant when estimated parameter values are used to describe distributions in the model. 
 
 A classical log-linear parameter (see e.g., Bishop et al. (1975), Agresti (2013), Rudas (2018)) for an effect,  $\mathcal{E}$ associates a value with every  category combination $e$ of the variables $\mathcal{E}$, denoted as $\lambda^{\mathcal{E}}_e$. These parameters are defined via the following recursion:
 \begin{equation} \nonumber
\lambda^{\emptyset} =      \frac{1}{c_{\mathcal{V}}} \sum_v \log P(v),
 \end{equation}
 \begin{equation}
\lambda^{\mathcal{E}}_{e} = \frac{1}{c_{\mathcal{V}\setminus \mathcal{E}}} \sum_{v: (v)_{\mathcal{E}}=e} \log P(v) - \sum_{\mathcal{F}\subsetneq \mathcal{E}} \lambda^{\mathcal{F}}_{(e)_{\mathcal{F}}}
\end{equation}
where $e$ is a joint category of  the variables $\mathcal{E}$,  $c_{\mathcal{V}\setminus \mathcal{E}}$ denotes the number of joint categories of the variables in $\mathcal{V}  \setminus \mathcal{E}$, and $(v)_{\mathcal{E}}$ denotes the categories out of $v$ which belong to the variables in~$\mathcal{E}$.

 When all the variables are binary, the log-linear parameters can be shown to be equal to various  averages of the  logarithms of the roots of $(l-1)$th order conditional odds ratios of the $l$ variables  in $\mathcal{E}$, given all possible category combinations of the variables $\mathcal{V}\setminus \mathcal{E}$, (see, e.g., Rudas (2018) and Section 6 of this chapter). For example, in a binary $A \times B \times C$ table
 $$
 \lambda^{AB}_{21} = \frac{1}{2}  \left( \log P(2,1,1)+\log P(2,1,2)  \right)
 $$
 $$
 - \frac{1}{4} \left( \log P(2,1,1)+ \log P(2,1,2)+ \log P(2,2,1)+ \log P(2,2,2)    \right)
 $$
 $$
 - \frac{1}{4} \left( \log P(1,1,1)+ \log P(1,1,2)+ \log P(2,1,1)+ \log P(2,1,2)    \right)
 $$
 $$
 + \frac{1}{8} \left(\log P(1,1,1)+ \log P(1,1,2)+ \log P(1,2,1)+\log P(1,2,2) \right)
 $$
 $$
 + \log  \left( P(2,1,1)+\log P(2,1,2)+\log P(2,2,1)+\log P(2,2,2)    \right)
 $$
 $$
 =  \log \sqrt[8]{\frac{P(1,2,1)P(1,2,2)P(2,1,1)P(2,1,2)}{P(1,1,1)P(1,1,2)P(2,2,1)P(2,2,2)}}  
 $$
 $$
 =\frac{1}{2}  \left( \log \sqrt[4]{\frac{P(1,2,1)P(2,1,1)}{P(1,1,1)P(2,2,1)}} + \log \sqrt[4]{\frac{P(1,2,2)P(2,1,2)}{P(1,1,2)P(2,2,2)}}      \right)
 $$
 \begin{equation} \label{LLpar}
 = \frac{1}{2} \left( \log \sqrt[-4]{COR(A,B|C=1)}+ \sqrt[-4]{COR(A,B|C=2)}  \right).
 \end{equation}
 
 In general, the log-linear parameter for every effect will be considered as vector-valued, with one component for every category combination of the variables in $\mathcal{E}$, except for the combinations where any of the variables is in, say, its first category, in order to avoid linear dependence of the components of the parameter. So if $\mathcal{E} = \{ V_1, V_2, \ldots, V_l \}$, and these variables have $c_1, c_2, \ldots, c_l$ categories, then the log-linear parameter has
 \begin{equation}\label{numb}
 (c_1-1)(c_2-1) \dotsi (c_l-1)
 \end{equation}
 components and these components are, in general, linearly independent.

The $(l-1)$th order conditional odds ratio of the variables in
 $\mathcal{E}$ (when conditioned on any category combination of the variables $\mathcal{V} \setminus \mathcal{E}$) is variation independent of the marginal distributions of the variables in any proper subset of $\mathcal{E}$, see, e.g., Rudas (2018). This was illustrated above for the simple cases of $2$- and $4$-way tables. Therefore, the log-linear parameters, which are functions of the conditional odds ratios, are  widely used as measures of the amount of association within an effect, that cannot be attributed to a proper subset of the variables in the effect.

 Calculating the log-linear parameter for $\mathcal{E}$ in a marginal $\mathcal{M}$, with  $\mathcal{E} \subseteq \mathcal{M}$, means that the marginal probabilities of $\mathcal{M}$ are used, instead of the joint probabilities of $\mathcal{V}$. For example, in a $4$-way binary $A \times B \times C \times D$ table, in the $A \times B \times C$ marginal, the value of the marginal log-linear parameter for the $AB$ effect is
 \begin{equation} \label{MLL}
  \frac{1}{2} \sum_{k=1}^2 \log \left( \frac{P(1,1,k,+)P(2,2,k,+)}{P(1,2,k,+)P(2,1,k,+)}\right)^{1/4}.
 \end{equation}
 The value in (\ref{MLL}) is denoted as $\lambda^{ABC}_{AB}$ and in general as $\lambda^{\mathcal{M}}_{\mathcal{E}}$. 
 The parameter $\lambda^{ABC}_{AB}$ 
 is a measure of average (over categories of $C$) conditional association between variables $A$ and $B$, (with averaging over the categories of $C$), calculated in the $A\times B \times C$ marginal of the four-way distribution.

 The parameter $\lambda^{ABC}_{AB}$ 
 has a single value, as both $A$ and $B$ are binary. If, for instance, $B$ has three categories, so the table is of the size $2 \times 3 \times 2 \times 2$, then $ \lambda^{ABC}_{AB}$ has $2$ components, one for the $(2,2)$ and one for the $(2,3)$ indices of $A$ and $B$. Out of these, the one associated with $(2,2)$ is as given in (\ref{MLL}), and the one associated with $(2,3)$ depends on the type of odds ratio selected, see Section
\ref{altpar}. Such choices are governed by the characteristics of the research question.

 Let 
 $$
 \mathcal{M}_1, \mathcal{M}_2, \ldots, \mathcal{M}_k
 $$
 be a sequence of marginals, such that
 $$
 \mathcal{M}_j \nsubseteq \mathcal{M}_i, \makebox{ if } i < j
 $$
 and
 $$
 \mathcal{M}_k = \mathcal{V}.
 $$
 Such a sequence will be called {\it non-decreasing}. Marginal log-linear parameters calculated in such sequences of marginals play a central role in this chapter.

To define the marginal log-linear parameters, for every effect $\mathcal{E}$, let $\mathcal{M}(\mathcal{E})$ be the first marginal in the non-decreasing order, that contains it:
 \begin{equation}\label{mdef}
   \mathcal{M}(\mathcal{E}) = \mathcal{M}_i  \makebox{ if }\mathcal{E} \subseteq \mathcal{M}_i \makebox{ and } \mathcal{E} \nsubseteq \mathcal{M}_j \makebox{ if } j  < i.
\end{equation}
 
 Let now $\lambda^{\mathcal{M}(\mathcal{E})}_{\mathcal{E}}$ denote the log-linear parameter of the effect $\mathcal{E}$ calculated within the 
 $\mathcal{M(\mathcal{E})}$ marginal. This is a log-linear parameter calculated not in the joint distribution of all variables, but rather in a marginal distribution. As illustrated above, the parameter is usually vector-valued, but this fact will be suppressed in the sequel. Marginal log-linear parameters measure the strength of marginal and conditional associations at the same time. By the choice of the marginal, in which the parameter for an effect is defined, some variables are disregarded, and then one conditions upon the variables which are in the marginal but do not belong to the effect. For further discussion of conditional and marginal association, see Bergsma  and  Rudas (2013). The marginal log-linear parameters $\{\lambda^{\mathcal{M}(\mathcal{E})}_{\mathcal{E}}, \, \mathcal{E} \subseteq \mathcal{V}\}$ are called {\it hierarchical and complete}.
 
The marginal log-linear parameters, as defined here, obviously contain as special cases the ordinary log-linear parameters, but also the multivariate logistic parameters (McCullagh  and  Nelder, 1989, Glonek  and  McCullagh, 1995) as well as  the mixed parameters considered in Glonek (1995).

 Note that the parameters defined at the end of the previous section are one-to-one functions of marginal log-linear parameters. In this example, $\mathcal{V}= \{A, B, C, D  \}$, $\mathcal{M}_1= \{A, B  \}$, and   $\mathcal{M}_2= \{A, B, C, D  \}$. Thus $\mathcal{M}(\emptyset)=\mathcal{M}(A)=\mathcal{M}(B)=\mathcal{M}(A,B)=\mathcal{M}_1$ and $\mathcal{M}(C)=\mathcal{M}(A, C)=\mathcal{M}(B,C)=\mathcal{M}(A,B,C)=\mathcal{M}(D)=\mathcal{M}(A, D)=\mathcal{M}(B,D)=\mathcal{M}(A,B,D)=\mathcal{M}(C,D)=\mathcal{M}(A,C,D)=\mathcal{M}(B,C,D)=\mathcal{M}(A, B,C,D)=\mathcal{M}_2$. The parameters specified are  one-to-one functions of the marginal log-linear parameters of the effects. In particular, setting $\theta_2$ and $\theta_4$ equal to $1$ is the same as setting
 $$
 \lambda^{AB}_{AB} = 0 \makebox{ and } \lambda^{ABCD}_{CD} = 0. 
 $$

 In order to obtain analytical properties of marginal log-linear parameters, it is important to remove from among them those which are redundant in the sense that they can be calculated from the others; this is assumed to be the case throughout the whole chapter. The formula in (\ref{numb}) only took the non-redundant values of the parameter into account.

 \subsection{ 
 Basic properties
 }

 Marginal log-linear parameters have a number of desirable properties. With $\cM$ defined by~(\ref{mdef}):
 
 \begin{theorem}\label{par}
 The parameters $\{\lambda^{\mathcal{M}(\mathcal{E})}_{\mathcal{E}}: \mathcal{E} \subseteq \mathcal{V}  \}$ 
 constitute a parameterization of the joint distribution of the variables $\mathcal{V}$.
 \end{theorem}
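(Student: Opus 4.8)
The plan is to show that a positive distribution on $\mathcal{V}$ is uniquely recoverable from its marginal log-linear parameters; together with a dimension count this is exactly the assertion that these parameters form a parameterization. For the count, by~(\ref{numb}) the number of non-redundant components of $\{\lambda^{\mathcal{M}(\mathcal{E})}_{\mathcal{E}}:\emptyset\neq\mathcal{E}\subseteq\mathcal{V}\}$ is $\sum_{\emptyset\neq\mathcal{E}\subseteq\mathcal{V}}\prod_{V\in\mathcal{E}}(c_V-1)=c_{\mathcal{V}}-1$, which is the dimension of the set of positive distributions on $\mathcal{V}$ ($\lambda^{\emptyset}$ playing the role of the normalizing constant), so nothing beyond injectivity is needed. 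I would rely on two classical facts: (i) for a single table $\mathcal{M}$, the complete family of ordinary log-linear parameters $\{\lambda^{\mathcal{M}}_{\mathcal{E}}:\mathcal{E}\subseteq\mathcal{M}\}$ is a parameterization of the positive distributions on $\mathcal{M}$ (M\"obius inversion of the defining recursion); and (ii) the mixed parameterization property of a regular exponential family --- if the natural parameter is split as $\theta=(\theta',\theta'')$ with corresponding mean-value parameter $(\tau',\tau'')$, then $\theta$ is recovered from $(\tau',\theta'')$, since fixing $\theta''$ leaves a regular exponential family in $\theta'$ whose natural-to-mean map is injective by strict convexity of the cumulant function.

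Next I would argue by induction along the given non-decreasing sequence $\mathcal{M}_1,\ldots,\mathcal{M}_k$. Put $\mathcal{P}_i=\{\mathcal{E}\subseteq\mathcal{V}:\mathcal{M}(\mathcal{E})=\mathcal{M}_i\}$; these classes partition the subsets of $\mathcal{V}$, each $\mathcal{P}_i$ consists of subsets of $\mathcal{M}_i$, and $\mathcal{P}_1$ is the collection of all subsets of $\mathcal{M}_1$. The claim to prove by induction on $i$ is that the tuple $(P_{\mathcal{M}_1},\ldots,P_{\mathcal{M}_i})$ of marginal distributions is uniquely determined by $\{\lambda^{\mathcal{M}(\mathcal{E})}_{\mathcal{E}}:\mathcal{E}\in\mathcal{P}_1\cup\cdots\cup\mathcal{P}_i\}$; taking $i=k$ and using $\mathcal{M}_k=\mathcal{V}$ gives the theorem, since $\mathcal{P}_1\cup\cdots\cup\mathcal{P}_k$ exhausts the subsets of $\mathcal{V}$. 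The base case $i=1$ is fact~(i) in $\mathcal{M}_1$.

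For the inductive step I would work inside $\mathcal{M}=\mathcal{M}_{i+1}$. Let $\mathcal{O}$ be the family of $\mathcal{E}\subseteq\mathcal{M}$ with $\mathcal{E}\subseteq\mathcal{M}_j$ for some $j\le i$; this is downward closed, contains $\emptyset$, and --- crucially --- is a proper subfamily of the subsets of $\mathcal{M}$, because the non-decreasing property is exactly the statement that $\mathcal{M}_{i+1}\nsubseteq\mathcal{M}_j$ for all $j\le i$. Its inclusion-maximal elements lie among the sets $\mathcal{M}_j\cap\mathcal{M}_{i+1}$ ($j\le i$), and for each such element the corresponding marginal of $P_{\mathcal{M}}$ is already known, being a marginal of the already-determined $P_{\mathcal{M}_j}$. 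The complementary effects are exactly $\mathcal{P}_{i+1}$ (the subsets of $\mathcal{M}_{i+1}$ not in $\mathcal{O}$), whose parameters $\lambda^{\mathcal{M}_{i+1}}_{\mathcal{E}}$ are part of the data. Now I would invoke fact~(ii) for the log-linear exponential family on $\mathcal{M}$ with $\theta''=(\lambda^{\mathcal{M}}_{\mathcal{E}})_{\mathcal{E}\notin\mathcal{O}}$ and $\theta'=(\lambda^{\mathcal{M}}_{\mathcal{E}})_{\emptyset\neq\mathcal{E}\in\mathcal{O}}$: the mean-value parameter $\tau'$ conjugate to $\theta'$ is an invertible linear function of the marginals $\{P_{\mathcal{E}}:\mathcal{E}\in\mathcal{O}\}$, hence of the marginals of $P_{\mathcal{M}}$ on the inclusion-maximal elements of $\mathcal{O}$ (each $\mathcal{E}\in\mathcal{O}$ being contained in one of these). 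Thus $\tau'$ is known and $\theta''$ is known, so $P_{\mathcal{M}_{i+1}}=P_{\mathcal{M}}$ is recovered, completing the induction.

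The bulk of the work will be bookkeeping: confirming the combinatorial properties of the $\mathcal{P}_i$ and of $\mathcal{O}$, and checking that the mean-value parameter conjugate to the ``$\mathcal{O}$-block'' of log-linear parameters is an invertible linear image of the marginals on the generators of $\mathcal{O}$. The only genuinely substantive ingredients are facts~(i) and~(ii). If a self-contained argument is wanted, the main obstacle is re-proving~(ii), which comes down to non-singularity of the Fisher information, equivalently strict convexity of the log-partition function of the log-linear family --- the same fact one would need for the alternative, more analytic route of showing that the Jacobian of $P\mapsto\{\lambda^{\mathcal{M}(\mathcal{E})}_{\mathcal{E}}\}$ is everywhere non-singular and then passing from local to global invertibility.
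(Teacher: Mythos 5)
Your argument is correct and is essentially the proof the paper points to: the paper cites Bergsma and Rudas (2002a), describing the technique as repeated application of mixed parameterizations of exponential families (Barndorff-Nielsen, 1978) along the sequence of marginals, with Iterative Proportional Scaling as the algorithmic inverse --- which is precisely your induction on $\mathcal{M}_1,\ldots,\mathcal{M}_k$ using fact~(ii) at each step. Your write-up merely makes explicit the bookkeeping (the partition into the classes $\mathcal{P}_i$ and the identification of the mean-value block with the already-determined marginals) that the cited proof carries out.
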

 \begin{proof}
 This is part of Theorem 2 in Bergsma  and  Rudas (2002a). Technically, the proof is based on repeated applications of the Iterative Proportional Scaling procedure to determine joint distributions based on mixed parameterizations of exponential families, see, e.g., Rudas (2018). The general relevant result on mixed parameterizations is given by Barndorff-Nielsen (1978).
 \end{proof}

 The argument in the proof above also implies that for any $1 \le i \le k$, the marginal log-linear parameters calculated in $\mathcal{M}_1, \ldots 
 \mathcal{M}_i $ can be used to determine the joint distribution of the variables in $\mathcal{M}_i$. This implies the following result.

  \begin{theorem}\label{conddistr}
 If $\mathcal{M}_i \setminus \cup_{j < i} \mathcal{M}_j \neq \emptyset $, then  the marginal log-linear parameters
$\{\lambda^{\mathcal{M}(\mathcal{E})}_{\mathcal{E}}:\mathcal{M}(\mathcal{E}) = \mathcal{M}_i\}$
 determine the conditional joint distribution of the variables in $\mathcal{M}_i  \setminus \cup_{j < i} \mathcal{M}_j $, given the joint distributions of variables in $\mathcal{M}_i \cap( \cup_{j < i} \mathcal{M}_j)$.
 \end{theorem}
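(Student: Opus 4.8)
The plan is to reduce the statement to an elementary fact about log-linear expansions: conditioning a distribution on a block of variables deletes exactly the parameter terms involving only that block. Fix $i$, set $\cN=\bigcup_{j<i}\cM_j$, and split $\cM_i$ into the ``new'' variables $\cR=\cM_i\setminus\cN$ (nonempty, by hypothesis) and the ``carried'' variables $\cO=\cM_i\cap\cN$, so that $\cM_i=\cO\sqcup\cR$. The assertion to prove is then that $\{\lambda^{\cM(\cE)}_{\cE}:\cM(\cE)=\cM_i\}$ determines the conditional distribution of the $\cR$-variables given the $\cO$-variables; adjoining the $\cO$-marginal afterwards gives back $P_{\cM_i}$ in full, which is the claimed statement.

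The step where care is needed — and which I expect to be the only real obstacle — is combinatorial: identifying which effects are parameterized ``at'' $\cM_i$. The easy half is that every effect $\cE\subseteq\cM_i$ meeting $\cR$ has $\cM(\cE)=\cM_i$: picking $v\in\cE\cap\cR$, the variable $v$ lies in no $\cM_j$ with $j<i$, so $\cE\not\subseteq\cM_j$ for all $j<i$ while $\cE\subseteq\cM_i$, hence $\cM_i$ is the first marginal of the sequence containing $\cE$. Thus $\cD:=\{\cE\subseteq\cM_i:\cE\cap\cR\neq\emptyset\}\subseteq\{\cE:\cM(\cE)=\cM_i\}$. The temptation is to claim equality here, but it generally fails: an effect contained in $\cO$ may still not be contained in any single earlier marginal, so it too can satisfy $\cM(\cE)=\cM_i$. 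Only the inclusion $\cD\subseteq\{\cE:\cM(\cE)=\cM_i\}$ is true, and it is all that is needed, since enlarging a collection of parameters that already determines a quantity keeps it determining.

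The remainder is routine. From the defining recursion (with $\cM_i$ in the role of $\cV$) one has the log-linear expansion
$$
\log P_{\cM_i}(v)=\sum_{\cF\subseteq\cM_i}\lambda^{\cM_i}_{\cF}\bigl((v)_{\cF}\bigr)
$$
for every configuration $v$ of the variables in $\cM_i$. Write $v=(o,r)$ with $o$ a configuration of $\cO$ and $r$ of $\cR$, and split the sum according to whether $\cF\subseteq\cO$ or $\cF\cap\cR\neq\emptyset$: the first group contributes a quantity $h(o)$ depending on $o$ alone, the second contributes $g(o,r):=\sum_{\cF\in\cD}\lambda^{\cM_i}_{\cF}((o,r)_{\cF})$, so $\log P_{\cM_i}(o,r)=h(o)+g(o,r)$. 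Then the factor $e^{h(o)}$ cancels, and with $r'$ ranging over configurations of $\cR$,
$$
P(r\mid o)=\frac{e^{h(o)}\,e^{g(o,r)}}{e^{h(o)}\sum_{r'}e^{g(o,r')}}=\frac{e^{g(o,r)}}{\sum_{r'}e^{g(o,r')}},
$$
which is a function of $\{\lambda^{\cM_i}_{\cF}:\cF\in\cD\}$ only. By the combinatorial step these parameters lie among $\{\lambda^{\cM(\cE)}_{\cE}:\cM(\cE)=\cM_i\}$, so the latter determine the conditional distribution of the $\cR$-variables given the $\cO$-variables, as required. This is exactly the one-step version of the mixed-parameterization picture underlying Theorem~\ref{par}: the parameters attached to $\cM_i$ form the canonical block for the effects meeting $\cR$, variation independent of the mean-value block carried by the $\cO$-marginal; one could equally cite Barndorff-Nielsen's mixed parameterization, as in the proof of Theorem~\ref{par}, but the direct cancellation above is self-contained.
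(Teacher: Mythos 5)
Your proof is correct, and it takes a genuinely different route from the paper. The paper does not argue this theorem directly: it derives it as a consequence of the machinery behind Theorem~\ref{par}, namely the mixed-parameterization result of Barndorff-Nielsen for exponential families together with iterative proportional scaling, observing that the parameters attached to $\cM_1,\ldots,\cM_i$ determine the joint distribution on $\cM_i$ and that the block attached to $\cM_i$ plays the role of the canonical component complementary to the mean-value component carried by the earlier marginals. You instead give a self-contained, elementary computation: invert the defining recursion to get $\log P_{\cM_i}(v)=\sum_{\cF\subseteq\cM_i}\lambda^{\cM_i}_{\cF}((v)_{\cF})$, split the effects according to whether they meet $\cR=\cM_i\setminus\bigcup_{j<i}\cM_j$, and observe that the purely-$\cO$ part cancels in the conditional probability. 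Two things in your write-up are particularly well judged: the explicit warning that $\cD=\{\cE\subseteq\cM_i:\cE\cap\cR\neq\emptyset\}$ is only a subset of $\{\cE:\cM(\cE)=\cM_i\}$ (the paper's own example with $\cM_1=\{A\}$, $\cM_2=\{B\}$, $\cM_3=\{ABC\}$ shows the inclusion can be strict, since $\cM(AB)=\cM_3$), and the remark that a superset of a determining family still determines. What your approach buys is transparency and independence from the exponential-family apparatus; what the paper's approach buys is that the same argument simultaneously yields the stronger parameterization and variation-independence statements of Theorems~\ref{par} and~\ref{orddec}, which your cancellation argument alone does not give. The only point you gloss over is that the expansion uses all components of each $\lambda^{\cM_i}_{\cF}$ while the parameterization retains only the non-redundant ones; since the omitted components are linear functions of the retained ones, this does not affect the ``determines'' claim.
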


To illustrate Theorem \ref{conddistr}, for the variables $A$, $B$, $C$ let
$\mathcal{M}_1= \{AB\}$ and $\mathcal{M}_2= \{ABC\}$. Then $\mathcal{M}_2 \setminus \mathcal{M}_1 = \{ C \}$, and the effects which have their marginal log-linear parameters calculated in $\mathcal{M}_2$ are $C$, $AC$, $BC$, $ABC$ and they parameterize the conditional distribution of $C$, given the joint distribution  $AB$. As the joint distribution of $AB$ is parameterized in the marginal $\mathcal{M}_1$, the marginal log-linear parameters in the two marginals parameterize the $ABC$ joint distribution. The marginal log-linear parameters determined in the two marginals are variation independent.

For a less straightforward example, let $\mathcal{M}_1=\{A\}$, $\mathcal{M}_2=\{B\}$, and $\mathcal{M}_3=\{ABC\}$. In this case, the theorem is about the conditional distribution of $C$, given $AB$, but now the marginal log-linear parameters determined in  $\mathcal{M}_1=\{A\}$ and $\mathcal{M}_2=\{B\}$ do not determine the $AB$ joint distribution, only its $1$-way marginal distributions. In this case, out of the marginal log-linear parameters determined in $\mathcal{M}_3$, those belonging to the effects  $C$, $AC$, $BC$, $ABC$ determine the conditional distribution. This is most easily seen by including the $AB$ marginal as the third one, which would not change these parameters but would make the setup essentially the same as in the previous example, because in this case the $AB$ joint distribution would be parameterized before the parameters in the $\{ABC\}$ marginal are calculated.

If, however,
$\mathcal{M}_1= \{AB\}$, $\mathcal{M}_2= \{AC\}$, and $\mathcal{M}_3= \{ABC\}$, then {\color{black}$\mathcal{M}_3 \setminus \big(\mathcal{M}_1 \cup \mathcal{M}_2\big) = \emptyset $}, and Theorem \ref{conddistr} does not apply. Indeed, if conditioned on {\color{black}$AB\cup AC$}, no conditional distribution remains. What do the marginal log-linear parameters given in $\mathcal{M}_3$, which are for the effects $BC$ and $ABC$, determine? In this case, they determine the parameters which are needed in addition to the 
$AB$ and $AC$ marginal distributions to parameterize the $ABC$ distribution: the second-order odds ratio of $ABC$ and the conditional odds ratio of $B$ and $C$, given $A$, see, e.g., Rudas (2018).

 It is not true, in general, that all components of a marginal log-linear parameterization would be variation independent. The following result gives a necessary and sufficient condition for the components of a hierarchical and complete marginal log-linear parameterization to be variation independent.
 
 \begin{theorem}\label{orddec}
 The components of a hierarchical and complete marginal log-linear parameterization based on a non-decreasing sequence of marginals
 $$
 \mathcal{M}_1, \mathcal{M}_2, \ldots, \mathcal{M}_k = \mathcal{V}
 $$
 are variation independent, if and only if the following condition holds. Either   $k=2$ or for every $j=3, \ldots , k$, the maximal elements out of
 $$
 \mathcal{M}_1, \mathcal{M}_2, \ldots, \mathcal{M}_j,
 $$
 say
 $$
 \mathcal{H}_1, \mathcal{H}_2, \ldots, \mathcal{H}_l
 $$
 are such that either $l=2$ or for every  $3 \le h \le l$, there is $1 \le g=g(h) \le h-1$, such that 
 $$
     \left(\mathcal{H}_1 \cup \ldots \cup \mathcal{H}_{h-1}   \right) \cap \mathcal{H}_h = \mathcal{H}_g \cap \mathcal{H}_h.
 $$
 
 \end{theorem}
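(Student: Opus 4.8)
The plan is to reduce the global claim to a sequence of one-step claims and then translate each one into a property of decomposable set systems. Write $L_j:=\{\lambda^{\mathcal{M}(\mathcal{E})}_{\mathcal{E}}:\mathcal{M}(\mathcal{E})=\mathcal{M}_j\}$ for the block of parameters attached to $\mathcal{M}_j$; since $\mathcal{M}_j$ is not contained in any earlier marginal, $\mathcal{M}_j$ is always maximal among $\mathcal{M}_1,\dots,\mathcal{M}_j$, and $L_j$ is always non-empty. A general and easy fact about Cartesian-product ranges (proved by induction on $j$) is that $L_1,\dots,L_k$ are jointly variation independent if and only if, for every $j$, $L_j$ is variation independent of the aggregate $(L_1,\dots,L_{j-1})$. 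So it suffices to characterise, for each fixed $j$, when this conditional variation independence holds. The two ingredients for the translation are: (a) the theory of mixed parameterizations of exponential families (Barndorff-Nielsen, 1978), already used in the proof of Theorem~\ref{par}, which gives that for a single proper marginal the log-linear parameters computed inside it are variation independent of the log-linear parameters of the effects not contained in it; and (b) the consistency theory for decomposable models (Lauritzen, 1996): a family of subsets with the running intersection property is the clique family of a chordal graph, and then any system of marginal distributions on those subsets that is pairwise consistent on the separators extends to a distribution on their union.

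Sufficiency. Fix $j$ and assume the condition at stage $j$ holds. By Theorem~\ref{par} and the remark following it, the earlier blocks $(L_1,\dots,L_{j-1})$ amount to prescribing the joint distributions on $\mathcal{M}_1,\dots,\mathcal{M}_{j-1}$, hence on their maximal members $\mathcal{H}_1,\dots,\mathcal{H}_{l-1}$ (with $\mathcal{H}_l=\mathcal{M}_j$ in the notation of the statement); every value of $(L_1,\dots,L_{j-1})$ in its product range produces such a system, which is automatically pairwise consistent. The condition at stage $j$ asserts the running intersection property for $\mathcal{H}_1,\dots,\mathcal{H}_l$, so $\mathcal{H}_1,\dots,\mathcal{H}_{l-1}$ has it too, and $\mathcal{M}_j$ meets $\mathcal{H}_1\cup\dots\cup\mathcal{H}_{l-1}$ inside a single $\mathcal{H}_g$. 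By ingredient (b) the prescribed marginals on $\mathcal{H}_1,\dots,\mathcal{H}_{l-1}$ extend to a distribution on $\bigcup_{h<l}\mathcal{H}_h$, and because $\mathcal{M}_j$ intersects that union within one clique, the restriction to $\mathcal{M}_j$ can be completed freely --- over the effects new to $\mathcal{M}_j$ and over any variables occurring only in $\mathcal{M}_j$; by the classification of effects through $\mathcal{M}(\cdot)$ these free degrees of freedom are exactly the components of $L_j$. Hence $L_j$ varies over its full product range regardless of $(L_1,\dots,L_{j-1})$; when $j\le 2$ this is just ingredient (a). Combining over all $j$ gives joint variation independence.

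Necessity. Suppose the condition fails. Take the least stage $j$ at which it fails and, among the maximal sets $\mathcal{H}_1,\dots,\mathcal{H}_l$ of $\mathcal{M}_1,\dots,\mathcal{M}_j$, the least $h\ge 3$ with no $g<h$ satisfying $(\mathcal{H}_1\cup\dots\cup\mathcal{H}_{h-1})\cap\mathcal{H}_h=\mathcal{H}_g\cap\mathcal{H}_h$. Then $\mathcal{H}_1,\dots,\mathcal{H}_h$ admits no perfect ordering ending at $\mathcal{H}_h$, and after restricting to the variables that matter it contains the familiar minimal obstruction, the three-set cycle $\{X,Y\},\{Y,Z\},\{X,Z\}$ on binary variables, for which there exist pairwise-consistent $2\times 2$ marginals with no common $2\times 2\times 2$ joint distribution. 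The plan is to pull such an obstruction back along the sequence: assign to $(L_1,\dots,L_{j-1})$ values that are individually admissible and determine a bona fide joint distribution on $\bigcup_{i<j}\mathcal{M}_i$, but whose induced marginals on the cyclic subsets are mutually incompatible, so that no distribution on $\mathcal{M}_j$ can reproduce them together with some otherwise-admissible value of $L_j$. This exhibits a point in the product range of $(L_1,\dots,L_k)$ that is not attained, so variation independence fails.

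Main obstacle. I expect the delicate part of sufficiency to be verifying that the degrees of freedom remaining after one fixes a pairwise-consistent system of marginals on the maximal sets are \emph{exactly} the components of $L_j$ --- no more, no fewer --- which requires tracking how $\mathcal{E}\mapsto\mathcal{M}(\mathcal{E})$ meshes with the iterative-scaling reconstruction in Theorem~\ref{par}, and handling separately the subcases where $\mathcal{M}_j$ absorbs earlier maximal sets, where $\bigcup_{h<l}\mathcal{H}_h\subsetneq\mathcal{V}$, and where the list of maximal sets changes from one stage to the next (this last point is exactly why the running intersection requirement must be imposed at every stage $j$, not only the last). For necessity the delicate point is the embedding: one must check that the cyclic marginals chosen are attainable as values of the relevant marginal log-linear \emph{sub}-parameters (not merely as marginal distributions), and that splicing them into the full sequence does not inadvertently restore compatibility.
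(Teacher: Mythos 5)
First, a point of reference: the paper itself gives no proof of this theorem --- it simply cites Theorem 4 of Bergsma and Rudas (2002a) --- so there is nothing in the text to compare your argument against line by line. Your overall strategy (telescoping joint variation independence into one-step claims, Barndorff-Nielsen's mixed parameterization for sufficiency, and pairwise-consistent but jointly non-extendable marginals for necessity) is the strategy of the cited source, and your reduction to one-step claims is correct.

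As written, however, the proposal has genuine gaps beyond the one you flag. In the sufficiency direction, the identification of the remaining degrees of freedom with exactly the components of $L_j$ is not a peripheral verification but the heart of the matter: one must prove that, under the running-intersection condition, an effect $\mathcal{E}\subseteq\mathcal{M}_j$ is contained in some earlier marginal if and only if $\mathcal{E}\subseteq\mathcal{H}_g\cap\mathcal{M}_j$, and only then does the mixed parameterization of the family on $\mathcal{M}_j$ (marginal on $\mathcal{H}_g\cap\mathcal{M}_j$ versus log-linear parameters of effects not contained in it) deliver the one-step claim; you have deferred precisely this step, together with the bookkeeping for maximal sets absorbed by $\mathcal{M}_j$. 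In the necessity direction there are two concrete errors. First, asking for values of $(L_1,\dots,L_{j-1})$ that ``determine a bona fide joint distribution on $\bigcup_{i<j}\mathcal{M}_i$'' yet ``whose induced marginals on the cyclic subsets are mutually incompatible'' is self-contradictory: marginals induced by a single joint distribution always admit a common extension, namely that joint. What is actually needed --- and what the paper's own $\{AB\},\{AC\},\{BC\}$ example illustrates --- is that the earlier blocks shrink the attainable range of $L_j$ strictly below its individual range, witnessed by a target marginal that is pairwise consistent with the fixed ones but not jointly extendable. Second, it is false that every violation of the running-intersection condition reduces, after restriction, to the triangle $\{X,Y\},\{Y,Z\},\{X,Z\}$: the chordless four-cycle $\{AB\},\{BC\},\{CD\},\{DA\}$ is non-decomposable and contains no triangle. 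The correct general tool is the Vorob'ev--Kellerer characterization, that pairwise compatibility implies joint compatibility exactly for decomposable covers; without invoking its ``only if'' half (or constructing the obstruction for an arbitrary non-decomposable configuration), your necessity argument does not go through.
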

 \begin{proof}
 This is Theorem 4 in Bergsma  and  Rudas (2002a).
 \end{proof}
 
 The property formulated in the previous theorem is called {\it ordered decomposability}. If the marginals $
 \mathcal{M}_1, \mathcal{M}_2, \ldots, \mathcal{M}_k
 $ are all incomparable with respect to inclusion, thus all are maximal, then ordered decomposability means the standard decomposability concept, see, e.g., Rudas (2018).
 
 For example, in the case discussed last, with $\mathcal{M}_1= \{AB\}$, $\mathcal{M}_2= \{AC\}$, and $\mathcal{M}_3= \{ABC\}$, ordered decomposability holds. But if $\mathcal{M}_1= \{AB\}$, $\mathcal{M}_2= \{AC\}$, $\mathcal{M}_3=\{BC\}$, and $\mathcal{M}_4= \{ABC\}$, ordered decomposability does not hold, and it is easy to find values of the marginal log-linear parameters defined in $\mathcal{M}_1$ and $\mathcal{M}_2$, which restrict the range of the parameters in  $\mathcal{M}_3$; see Bergsma  and  Rudas (2002a). The  three $2$-way marginal (frequency) distributions presented in Table \ref{incomp}  are weakly compatible but not strongly compatible, that is, although the generated $1$-way marginals are all uniform, there is no $3$-way distribution with these marginals.

 \begin{table} 
%                \begin{minipage}[c]{0.33\textwidth}
\hspace{13mm}
                \begin{tabular}{|c|c|c|}
                    \hline
                      & B=1 & B=2 \\ \hline
                    A=1 & 3 & 1 \\ \hline
                    A=2 & 1 & 3  \\ \hline
                \end{tabular}
 %               \end{minipage}
 %               \begin{minipage}[c]{0.33\textwidth}
\hspace{10mm}
                \begin{tabular}{|c|c| c|}
                    \hline
                      & C=1 & C=2 \\ \hline
                    A=1 & 1 & 3 \\ \hline
                    A=2 & 3 & 1  \\ \hline
                \end{tabular}
 %               \end{minipage}
 %               \begin{minipage}[c]{0.33\textwidth}
\hspace{10mm}
                \begin{tabular}{|c |c |c|}
                    \hline
                      & C=1 & C=2 \\ \hline
                    B=1 & 3 & 1 \\ \hline
                    B=2 & 1 & 3  \\ \hline
                \end{tabular}
 %               \end{minipage}
                \caption{Marginal distributions which are weakly compatible but not strongly compatible    }
            \label{incomp}
            \end{table}

 Indeed, if one had such a distribution, one would have for the frequencies that ${f(1,1,2) \le 1}$ (from the $BC$ marginal) and $f(1,2,2) \le 1$ (from the $AB$ marginal), but the sum of these two frequencies would have to be $3$ (from the $AC$ marginal). This means that the three $2$-way marginals, and consequently the corresponding marginal log-linear parameters, are not variation independent.
 
 This is an important difference between the standard and the marginal log-linear parameters. If the log-linear parameterization is calculated in the $\{ABC\}$ table, that is one has standard log-linear parameters, the parameter belonging to the $BC$ effect is essentially the conditional odds ratio $\COR(B,C|A=a)$ and this is variation independent of the $AB$ and $AC$ marginal distributions. But if a marginal log-linear parameterization is considered based on the marginals $\{AB\}$, $\{AC\}$, and $\{BC\}$, then the parameter belonging to the $BC$ effect is the marginal odds ratio $\OR(B,C)$ and this is not variation independent of the $AB$ and $AC$ marginal distributions.

  \begin{table} 
%                \begin{minipage}[c]{0.33\textwidth}
\hspace{33mm}
                \begin{tabular}{|c|c|c|}
                    \hline
                A=1      & C=1 & C=2 \\ \hline
                    B=1 & t & 3-t \\ \hline
                    B=2 & 1-t & t  \\ \hline
                \end{tabular}
 %               \end{minipage}
 %               \begin{minipage}[c]{0.33\textwidth}
\hspace{10mm}
                \begin{tabular}{|c|c| c|}
                    \hline
                    A=2  & C=1 & C=2 \\ \hline
                    B=1 & u & 1-u \\ \hline
                    B=2 & 3-u & u  \\ \hline
                \end{tabular}
 %               \end{minipage}
 % 
                \caption{Structure of a distribution with $AB$ and $AC$ marginals as given in Table \ref{incomp} }
            \label{condtable}
            \end{table} 
 
 To have the $AB$ and $AC$ marginal distributions as prescribed in Table \ref{incomp}, the $3$-way table has to have the  structure
 shown in Table \ref{condtable}, implying that $t \le 1$ and $u \le 1$. The conditional odds ratios are 
 $$
 \COR(B,C|A=1)= \frac{t^2}{(1-t)(3-t)}
 $$
 and
 $$
 \COR(B,C|A=2)= \frac{u^2}{(1-u)(3-u)}
 $$
 and their values are not restricted, i.e., depending on $t$ and $u$, may be anywhere on the interval $(0, \,\, \infty)$. But the marginal odds ratio is
 $$
 \OR(B,C)= \frac{(t+u)^2}{(4-t-u)^2}
 $$
 and this is restricted to be not more than $1$.

 However, even in this case, the marginal log-linear parameters calculated in the marginals  $\{AB\}$, $\{AC\}$, and $\{BC\}$  on the one hand, and the parameters calculated in $\{ABC\}$, on the other hand, are variation independent.

 \subsection{Smoothness of marginal log-linear parameters} \label{sec-smooth}

Marginal log-linear models will be defined by assuming that some marginal log-linear parameters are zero. Many of the statistical properties of these models, including the behaviour of maximum likelihood estimates and asymptotic distributions of test statistics depend on analytical properties of the parameterizations used.

A parameter is called smooth if, as a function of the (probability or frequency) distribution, it is continuous, invertible, twice continuously differentiable, and its Jacobian has full rank everywhere.

\begin{theorem}
\label{smthness}
The hierarchical and complete marginal log-linear parameters  are a smooth parameterization of the frequency distribution.
\end{theorem}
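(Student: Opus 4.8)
The plan is to check the four requirements in the definition of smoothness --- continuity, invertibility, twice continuous differentiability, and everywhere full-rank Jacobian --- treating the first three as essentially bookkeeping and concentrating the real work on the last. Invertibility is exactly Theorem~\ref{par}. For differentiability I would argue directly from the defining recursion: each marginal log-linear parameter $\lambda^{\mathcal{M}(\mathcal{E})}_{\mathcal{E}}$ is, by construction, a fixed finite linear combination of terms of the form $\log P_{\mathcal{M}(\mathcal{E})}(m)$, and each marginal probability $P_{\mathcal{M}(\mathcal{E})}(m)$ is a sum of cell probabilities (equivalently, of cell frequencies, up to the scalar $n$). On the open set of strictly positive distributions the logarithm is $C^\infty$, so the map from the distribution to the vector of marginal log-linear parameters is a composition of linear maps with $\log$, hence $C^\infty$, in particular $C^2$ and continuous. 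Passing between probabilities and frequencies only rescales by $n$ and changes nothing.

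The substantive point is that the Jacobian has full rank everywhere; note that invertibility together with smoothness of the forward map does not by itself give this, as $t\mapsto t^3$ shows. Here the plan is to revisit the mechanism behind Theorem~\ref{par}. Up to invertible \emph{linear} changes of coordinates on the domain (pass from cell values to the full vector of ordinary log-linear parameters, a linear bijection) and on the codomain (reorganise the $\lambda^{\mathcal{M}(\mathcal{E})}_{\mathcal{E}}$ according to the marginal $\mathcal{M}_i$ in which they are computed), the parameterization factors as a composition of the maps built in the iterated construction: at stage $i$ one combines the parameters already obtained --- which, by Theorem~\ref{conddistr} and its proof, carry the joint distribution on $\cup_{j<i}\mathcal{M}_j$ as mean-value-type parameters --- with the parameters of the new effects $\{\mathcal{E}:\mathcal{M}(\mathcal{E})=\mathcal{M}_i\}$, which enter as \emph{canonical} (log-linear) parameters of the relevant conditional exponential family on $\mathcal{M}_i$. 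This is precisely a mixed parameterization in the sense of Barndorff-Nielsen (1978), and for a regular exponential family a mixed parameterization is a diffeomorphism with everywhere non-singular Jacobian --- a consequence of the strict convexity of the cumulant function, equivalently positive definiteness of the Fisher information, on the interior of the parameter space. A composition of diffeomorphisms is again a diffeomorphism, so the Jacobian of the whole map is non-singular at every positive distribution.

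The forward map being $C^\infty$ and a bijection between spaces of the same (minimal) dimension with everywhere non-singular Jacobian, the inverse function theorem makes the inverse $C^\infty$ as well, so every clause of the definition holds. I expect the main obstacle to be the careful identification of the mixed-parameterization structure inside the marginal log-linear parameterization: verifying at each stage that the newly introduced components really are canonical parameters of the correct conditional family and that the quantities held fixed are mean-value parameters of a complementary sub-family, so that Barndorff-Nielsen's result applies verbatim, and handling the bookkeeping when consecutive marginals overlap nontrivially, as in the $\{AB\},\{AC\},\{ABC\}$ example. It is worth stressing that the ordered-decomposability condition of Theorem~\ref{orddec} plays no role here: variation independence may fail --- the image of the parameterization can be a curved region rather than a Cartesian product --- yet a diffeomorphism onto a curved region is still a diffeomorphism, so smoothness persists regardless.
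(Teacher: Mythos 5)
Your plan is correct and follows essentially the same route as the paper's proof (which defers to Bergsma and Rudas (2002a)): invertibility from Theorem~\ref{par}, smoothness of the forward map from the explicit log-linear formulas on positive distributions, and full rank of the Jacobian from the stagewise mixed-parameterization structure together with Barndorff-Nielsen's (1978) result for regular exponential families, exactly the mechanism the paper describes in the proof of Theorem~\ref{par}. Your closing remark that ordered decomposability plays no role here is also consistent with the paper, which invokes Theorem~\ref{orddec} only for variation independence, not smoothness.
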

\begin{proof}
This is Theorem 2 in Bergsma  and  Rudas (2002a). Note that smoothness holds only if the redundant parameter values are omitted.
\end{proof}
%Note that the redundant parameter values were removed from the marginal log-linear parameters.
To obtain a smooth parameterization of the probability distribution, the parameter referring to the empty set, $\lambda_\emptyset^{\cM(\emptyset)}$, must be omitted because its value is determined by the other parameters through the requirement that the probabilities must sum to $1$. 

Bergsma  and  Rudas (2002a) showed (their Theorem 3) that for two marginals $\mathcal{M}$ and $\mathcal{N}$ and effect $\mathcal{E} \subseteq \mathcal{M} \cap \mathcal{N}$, the partial derivatives of the parameters $\lambda^{\mathcal{N}}_\mathcal{E}$ and $\lambda^{\mathcal{M}}_\mathcal{E}$ according to the components of the probability distribution, evaluated at the uniform distribution, are equal and therefore these parameters cannot be parts of a smooth parameterization of all distributions, because the partial derivative matrix would not always be of full rank. A more detailed analysis of this issue is given by Colombi  and  Forcina (2014), using a different marginal log-linear parameterization which does not involve averaging over the categories of the conditioning variables as in (\ref{MLL}).\footnote{For alternative parameterizations see Section 5.}

One has the following result connecting marginal log-linear parameters of the same effect calculated in different marginals.

\begin{theorem}
Let all the variables be binary, and then each marginal log-linear parameter has one non-redundant value. Let further $\mathcal{E} \subseteq \mathcal{M}\subset \mathcal{N}$. Then
$$
\lambda^{\mathcal{N}}_\mathcal{E}= \lambda^{\mathcal{M}}_\mathcal{E} + f(\boldsymbol{\Lambda}_{\mathcal{N} | \mathcal{M}}),
$$
for some smooth function $f$, with 
$$
\boldsymbol{\Lambda}_{\mathcal{N} | \mathcal{M}}= \{\lambda^{\mathcal{N}}_{\mathcal{F}}: \mathcal{F} \subseteq \mathcal{N}, \mathcal{F} \nsubseteq \mathcal{M}    \}.
$$
Further, 
\begin{equation}\label{indcond}
f(\boldsymbol{\Lambda}_{\mathcal{N} | \mathcal{M}})= 0 \makebox{ if } \big(\mathcal{N} \setminus \mathcal{M}\big) \ind A \,|\, \big(\mathcal{M} \setminus A \big)
\end{equation}
for some $A  \in \mathcal{E}$.

\end{theorem}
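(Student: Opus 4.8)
The plan is to carry out everything in the symmetric $\pm1$ (Walsh) coding of the binary log-linear parameterization. For any marginal $\mathcal{L}\subseteq\mathcal{V}$ one has $\log P_{\mathcal{L}}(v)=\sum_{\mathcal{F}\subseteq\mathcal{L}}\lambda^{\mathcal{L}}_{\mathcal{F}}\,x^{\mathcal{F}}(v)$, where $x^{\mathcal{F}}(v)=\prod_{V\in\mathcal{F}}x_V(v)$ and $x_V(v)\in\{-1,+1\}$ codes the two levels of $V$; dually $\lambda^{\mathcal{L}}_{\mathcal{F}}=2^{-|\mathcal{L}|}\sum_v x^{\mathcal{F}}(v)\log P_{\mathcal{L}}(v)$. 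For binary variables this scalar is exactly the unique non-redundant (all-at-second-level) component produced by the recursion of Section~\ref{def}, as in the binary example around~(\ref{LLpar}); I would record that equivalence first and then work only with the contrast formula. Now put $\mathcal{D}=\mathcal{N}\setminus\mathcal{M}$ (nonempty, since $\mathcal{M}\subset\mathcal{N}$) and, for $v=(m,d)$ with $m\in\{1,2\}^{\mathcal{M}}$ and $d\in\{1,2\}^{\mathcal{D}}$, split the canonical exponent of $P_{\mathcal{N}}$ as $\log P_{\mathcal{N}}(m,d)=A(m)+B(m,d)$, with $A(m)=\sum_{\mathcal{G}\subseteq\mathcal{M}}\lambda^{\mathcal{N}}_{\mathcal{G}}\,x^{\mathcal{G}}(m)$ and $B(m,d)=\sum_{\emptyset\neq\mathcal{H}\subseteq\mathcal{D}}x^{\mathcal{H}}(d)\sum_{\mathcal{G}\subseteq\mathcal{M}}\lambda^{\mathcal{N}}_{\mathcal{G}\cup\mathcal{H}}\,x^{\mathcal{G}}(m)$. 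The decisive point is that $B$ is built from precisely the parameters in $\boldsymbol{\Lambda}_{\mathcal{N}|\mathcal{M}}$, via the bijection $\mathcal{F}\mapsto(\mathcal{F}\cap\mathcal{M},\mathcal{F}\cap\mathcal{D})$ between $\{\mathcal{F}\subseteq\mathcal{N}:\mathcal{F}\nsubseteq\mathcal{M}\}$ and pairs $(\mathcal{G},\mathcal{H})$ with $\mathcal{G}\subseteq\mathcal{M}$, $\emptyset\neq\mathcal{H}\subseteq\mathcal{D}$.

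Next I would marginalize over $\mathcal{D}$: $P_{\mathcal{M}}(m)=\sum_d e^{A(m)+B(m,d)}=e^{A(m)}Z(m)$, where $Z(m):=\sum_d e^{B(m,d)}>0$ depends on $m$ and on $\boldsymbol{\Lambda}_{\mathcal{N}|\mathcal{M}}$ only. Using $\log P_{\mathcal{M}}(m)=A(m)+\log Z(m)$ and the contrast formula, $\lambda^{\mathcal{M}}_{\mathcal{E}}=2^{-|\mathcal{M}|}\sum_m x^{\mathcal{E}}(m)A(m)+2^{-|\mathcal{M}|}\sum_m x^{\mathcal{E}}(m)\log Z(m)$; by orthogonality of the characters $\{x^{\mathcal{G}}\}_{\mathcal{G}\subseteq\mathcal{M}}$ on $\{1,2\}^{\mathcal{M}}$ together with $\mathcal{E}\subseteq\mathcal{M}$, the first sum collapses to $\lambda^{\mathcal{N}}_{\mathcal{E}}$. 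Hence setting $f(\boldsymbol{\Lambda}_{\mathcal{N}|\mathcal{M}}):=-2^{-|\mathcal{M}|}\sum_m x^{\mathcal{E}}(m)\log Z(m)$ yields the claimed identity $\lambda^{\mathcal{N}}_{\mathcal{E}}=\lambda^{\mathcal{M}}_{\mathcal{E}}+f(\boldsymbol{\Lambda}_{\mathcal{N}|\mathcal{M}})$, with $f$ a function of $\boldsymbol{\Lambda}_{\mathcal{N}|\mathcal{M}}$ alone. Smoothness is immediate: $Z(m)$ is a finite sum of exponentials of polynomials in the entries of $\boldsymbol{\Lambda}_{\mathcal{N}|\mathcal{M}}$, hence strictly positive and $C^\infty$, so $\log Z$ and the finite linear combination $f$ are $C^\infty$.

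For the vanishing clause, suppose $(\mathcal{N}\setminus\mathcal{M})\ind A\,|\,(\mathcal{M}\setminus A)$ for some $A\in\mathcal{E}$. For a strictly positive distribution this conditional independence is equivalent, in the log-linear expansion over $\mathcal{N}$, to the vanishing of every $\lambda^{\mathcal{N}}_{\mathcal{F}}$ with $A\in\mathcal{F}$ and $\mathcal{F}\cap\mathcal{D}\neq\emptyset$ (the standard log-linear characterization of conditional independence). Hence in $B(m,d)$ every surviving inner term has $A\notin\mathcal{G}$, so neither $B$ nor $Z$ depends on the coordinate $x_A$ of $m$; writing $x^{\mathcal{E}}(m)=x_A(m)\,x^{\mathcal{E}\setminus\{A\}}(m)$ and performing the sum over $x_A\in\{-1,+1\}$ first turns $f$ into a multiple of $\sum_{x_A\in\{-1,1\}}x_A=0$, whence $f(\boldsymbol{\Lambda}_{\mathcal{N}|\mathcal{M}})=0$.

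The bookkeeping is routine; the main thing to get right is the additive separation $\log P_{\mathcal{N}}(m,d)=A(m)+B(m,d)$ followed by the multiplicative factorization $P_{\mathcal{M}}(m)=e^{A(m)}Z(m)$. One must check that the within-$\mathcal{M}$ block $A(m)$ is literally the \emph{same} linear combination of $\{\lambda^{\mathcal{N}}_{\mathcal{G}}:\mathcal{G}\subseteq\mathcal{M}\}$ before and after summing out $\mathcal{D}$, so that all dependence of $P_{\mathcal{M}}$ on the ``new'' parameters $\boldsymbol{\Lambda}_{\mathcal{N}|\mathcal{M}}$ is funnelled through the single scalar $Z(m)$; this is exactly what decouples $f$ from $P_{\mathcal{M}}$ and forces $\lambda^{\mathcal{M}}_{\mathcal{E}}$ alone (not the other $\lambda^{\mathcal{M}}_{\mathcal{F}}$) to appear. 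The only imported facts are the equivalence of the Section~\ref{def} recursion with the symmetric contrast formula in the binary case, and the log-linear characterization of conditional independence.
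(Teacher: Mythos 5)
Your proof is correct, but note that the paper itself does not prove this theorem: it simply cites Theorem~3.1 of Evans (2015). Your argument is a self-contained derivation of both claims, and it runs along what is essentially the natural (and, as far as I can tell, the same underlying) mechanism: expand $\log P_{\mathcal{N}}$ in the saturated log-linear basis, split the exponent into a within-$\mathcal{M}$ block $A(m)$ and a block $B(m,d)$ built from exactly the parameters in $\boldsymbol{\Lambda}_{\mathcal{N}|\mathcal{M}}$, sum out $\mathcal{D}=\mathcal{N}\setminus\mathcal{M}$ to get $P_{\mathcal{M}}(m)=e^{A(m)}Z(m)$, and observe that the Walsh contrast annihilates everything in $A(m)$ except $\lambda^{\mathcal{N}}_{\mathcal{E}}$ while funnelling all dependence on the new parameters through $\log Z$. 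The orthogonality step, the smoothness of $f$ as a log-sum-exp of linear forms, and the vanishing argument (CI $\Rightarrow$ the mixed $A$--$\mathcal{D}$ interactions vanish $\Rightarrow$ $Z$ is constant in the $A$-coordinate $\Rightarrow$ the contrast over $x_A\in\{-1,+1\}$ kills $f$) are all sound; the last step rigorously generalizes the informal conditional-odds discussion the paper gives after the theorem statement. The one point you rightly flag and should nail down explicitly is the identification of the paper's recursive ANOVA-style definition of $\lambda^{\mathcal{E}}_{e}$ with the symmetric $2^{-|\mathcal{L}|}\sum_v x^{\mathcal{E}}(v)\log P_{\mathcal{L}}(v)$ contrast in the binary case: the paper's components are indexed by category combinations and differ from the all-second-level component only by a sign, so a consistent choice of the single non-redundant component (as in the paper's worked three-variable computation of $\lambda^{AB}_{21}$) makes the identification exact and leaves the statement of the theorem unaffected. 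With that convention fixed, the only external ingredient is the standard factorization criterion for conditional independence of strictly positive distributions, which the paper uses freely elsewhere, so your proof stands as a legitimate and more informative alternative to the citation.
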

\begin{proof}
This is part of Theorem 3.1 in Evans (2015).
\end{proof}

 For example, the second claim of the theorem implies that   if $A \ind B | C$, then $\lambda^{ABC}_B=\lambda^{BC}_B$. This is directly seen by noting that these log-linear parameters are simple functions of the conditional odds of the categories of $B$. For the first one, conditioning is on $A$ and $C$ and for the second one conditioning is on $C$ only. But if the conditional independence in (\ref{indcond}) holds, the conditioning on $A$ does not provide further information after conditioning on $C$ in the sense that
 $$
 P(B=j | C=k)=  P(B=j | A=i, C=k),
 $$
so the conditional probabilities entering the formulas for the log-linear parameters are the same.
In general, this implies that if condition (\ref{indcond}) holds for a distribution, then $\lambda^{\mathcal{M}}_\mathcal{E}$ and $\lambda^{\mathcal{N}}_\mathcal{E}$ cannot be both contained in a smooth parameterization, because then the Jacobian could not be of full rank.

\subsection{ Collapsibility}

 The final property of marginal log-linear parameterizations that we consider before giving the general definition of marginal log-linear models, is collapsibility.
 
 Collapsibility of a parameter is a desirable property but it cannot always be achieved. The concept of collapsibility has many variants, and it refers to the property that some aspect of the inference from a full table is identical to  the corresponding inference based on a marginal table. For example, if, in a $3$-way binary table, $\lambda^{ABC}_{AB}=0$ does not generally imply that $\lambda^{AB}_{AB}=0$, so the inference with respect to the strength of association between variables $A$ and $B$ is not the same, whether it is considered in the full table or in the $AB$ marginal.

 In general, a marginal log-linear parameter $\lambda^{\mathcal{N}}_{\mathcal{E}}$ would be called collapsible (Ghosh  and  Vellaisamy, 2019) if, for $\mathcal{M} \subseteq \mathcal{N} $,  $\lambda^{\mathcal{N}}_{\mathcal{E}} = \lambda^{\mathcal{M}}_{\mathcal{E}}$ held.
 Of course, this cannot be true in general, as in this case marginal log-linear parameters would not be different from the standard log-linear ones. Even for a much weaker requirement, called directional collapsibility, where only the direction of the association is retained, Rudas (2015) showed that there is essentially only one parameterization of multivariate binary distributions which is directionally collapsible for every distribution, and it is not a log-linear, but rather a linear function of the cell probabilities. 
 
 Thus, collapsibility is often interpreted as a property not associated with a parameter, but rather with a parameter and a particular distribution. For example, Ghosh  and  Vellaisamy (2019) gave the following result.
 
 \begin{theorem}
 Let $\emptyset \neq \mathcal{E} \subseteq \mathcal{M} \subsetneq \mathcal{N} \subseteq \mathcal{V}$ be fixed. Then, in the binary case,  collapsibility in the sense that 
 $$
  \lambda^{\mathcal{M}}_{\mathcal{F}} - \lambda^{\mathcal{N}}_{\mathcal{F}} = 0, \makebox{ for all  } \mathcal{F} \subseteq \mathcal{E}
 $$
 holds if and only if for the distribution $P$,
 $$
 \sum_{\mathcal{F} \subseteq \mathcal{E}} \frac{ (-1)^{|\mathcal{E} \setminus \mathcal{F}|}}{2^{|\mathcal{M} \setminus \mathcal{F}|}}  
 \sum_{m: (m)_{\mathcal{F}}=(m^*)_{\mathcal{F}}}  d(\mathcal{M}, m) =0 
 $$
 for all category combinations  $m^*$  of the variables in $ \mathcal{M}$,
 where
 $$
 d(\mathcal{M}, m) =  \log P_{\mathcal{M}}(m) - \frac{1}{2^{|\mathcal{N} \setminus \mathcal{M}|}} \sum_{n: (n)_{\mathcal{M}}=m} \log P_{\mathcal{N}}(n).
 $$
 \end{theorem}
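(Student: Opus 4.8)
The plan is to reduce the statement to a single algebraic identity, obtained by inverting the recursion that defines the (marginal) log-linear parameters, and then to read both directions of the equivalence off that identity.

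First I would unroll the defining recursion (Section~\ref{def}), specialised to binary variables so that $c_{\mathcal{M}\setminus\mathcal{F}}=2^{|\mathcal{M}\setminus\mathcal{F}|}$. For any $\mathcal{G}\subseteq\mathcal{M}$ and any category combination $g$ the recursion is equivalent to
$$\frac{1}{2^{|\mathcal{M}\setminus\mathcal{G}|}}\sum_{m:(m)_{\mathcal{G}}=g}\log P_{\mathcal{M}}(m)=\sum_{\mathcal{F}\subseteq\mathcal{G}}\lambda^{\mathcal{M}}_{\mathcal{F},(g)_{\mathcal{F}}},$$
so Möbius inversion over the Boolean lattice of subsets of $\mathcal{G}$ gives the closed form
$$\lambda^{\mathcal{M}}_{\mathcal{G},g}=\sum_{\mathcal{F}\subseteq\mathcal{G}}\frac{(-1)^{|\mathcal{G}\setminus\mathcal{F}|}}{2^{|\mathcal{M}\setminus\mathcal{F}|}}\sum_{m:(m)_{\mathcal{F}}=(g)_{\mathcal{F}}}\log P_{\mathcal{M}}(m),$$
and the same identity holds verbatim with $\mathcal{M}$ replaced by $\mathcal{N}$, since $\mathcal{F}\subseteq\mathcal{G}\subseteq\mathcal{E}\subseteq\mathcal{M}\subseteq\mathcal{N}$.

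The central step is to subtract these two closed forms. In the $\mathcal{N}$-term I would group the combinations $n$ of $\mathcal{N}$ with $(n)_{\mathcal{F}}=(g)_{\mathcal{F}}$ by their restriction $(n)_{\mathcal{M}}=m$ to $\mathcal{M}$; each fibre has exactly $2^{|\mathcal{N}\setminus\mathcal{M}|}$ elements, and the cardinality identity $|\mathcal{N}\setminus\mathcal{F}|=|\mathcal{N}\setminus\mathcal{M}|+|\mathcal{M}\setminus\mathcal{F}|$ rewrites the $\mathcal{N}$-average as a double average whose inner factor is precisely $2^{-|\mathcal{N}\setminus\mathcal{M}|}\sum_{n:(n)_{\mathcal{M}}=m}\log P_{\mathcal{N}}(n)$. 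Subtracting fibre by fibre, the quantity surviving at each $m$ is exactly $d(\mathcal{M},m)$, so that for every $\mathcal{G}\subseteq\mathcal{E}$ and every $g$,
$$\lambda^{\mathcal{M}}_{\mathcal{G},g}-\lambda^{\mathcal{N}}_{\mathcal{G},g}=\sum_{\mathcal{F}\subseteq\mathcal{G}}\frac{(-1)^{|\mathcal{G}\setminus\mathcal{F}|}}{2^{|\mathcal{M}\setminus\mathcal{F}|}}\sum_{m:(m)_{\mathcal{F}}=(g)_{\mathcal{F}}}d(\mathcal{M},m).$$
Taking $\mathcal{G}=\mathcal{E}$, the right-hand side is exactly the expression displayed in the theorem, and it depends on $m^{*}$ only through $(m^{*})_{\mathcal{E}}$ because every $\mathcal{F}$ in the sum lies inside $\mathcal{E}$; this already yields the equivalence between the displayed condition and the vanishing of $\lambda^{\mathcal{M}}_{\mathcal{E}}-\lambda^{\mathcal{N}}_{\mathcal{E}}$, and the binary hypothesis is used only to put the normalising factors in this clean $2^{|\cdot|}$ form.

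To reach the statement for the whole family $\{\lambda^{\mathcal{M}}_{\mathcal{F}}-\lambda^{\mathcal{N}}_{\mathcal{F}}:\mathcal{F}\subseteq\mathcal{E}\}$, I would apply the identity of the previous step at each level $\mathcal{G}\subseteq\mathcal{E}$: these right-hand sides are precisely the Möbius transforms, over the sublattices of the Boolean lattice of $\mathcal{E}$, of the single function $h(\mathcal{G},g)=2^{-|\mathcal{M}\setminus\mathcal{G}|}\sum_{m:(m)_{\mathcal{G}}=g}d(\mathcal{M},m)$, and the hierarchical completeness of the parameterization (Theorem~\ref{par}) is what lets one organise the conditions level by level rather than worry about cancellation across effects. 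I expect the delicate point to be exactly this last translation: keeping the sign factors $(-1)^{|\mathcal{E}\setminus\mathcal{F}|}$ and the denominators $2^{|\mathcal{M}\setminus\mathcal{F}|}$ bookkept correctly while passing between the per-effect identities, the averaging relation $h(\mathcal{G},g)=2^{-|\mathcal{E}\setminus\mathcal{G}|}\sum_{e:(e)_{\mathcal{G}}=g}h(\mathcal{E},e)$, and the Möbius bijection, so that the whole family of per-effect conditions matches up with the single displayed condition ranging over all category combinations $m^{*}$ of the variables in $\mathcal{M}$.
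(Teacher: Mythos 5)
Your central computation is sound: unrolling the recursion to $\sum_{\mathcal{F}\subseteq\mathcal{G}}\lambda^{\mathcal{M}}_{\mathcal{F},(g)_{\mathcal{F}}}=2^{-|\mathcal{M}\setminus\mathcal{G}|}\sum_{m:(m)_{\mathcal{G}}=g}\log P_{\mathcal{M}}(m)$, Möbius-inverting, and collapsing the $\mathcal{N}$-sum fibrewise over $(n)_{\mathcal{M}}=m$ via $|\mathcal{N}\setminus\mathcal{F}|=|\mathcal{N}\setminus\mathcal{M}|+|\mathcal{M}\setminus\mathcal{F}|$ correctly yields
$\lambda^{\mathcal{M}}_{\mathcal{G},g}-\lambda^{\mathcal{N}}_{\mathcal{G},g}=\sum_{\mathcal{F}\subseteq\mathcal{G}}(-1)^{|\mathcal{G}\setminus\mathcal{F}|}h(\mathcal{F},(g)_{\mathcal{F}})$ with your $h(\mathcal{F},f)=2^{-|\mathcal{M}\setminus\mathcal{F}|}\sum_{m:(m)_{\mathcal{F}}=f}d(\mathcal{M},m)$. (The paper itself offers no proof to compare against --- it cites Theorem 3.1 of Ghosh and Vellaisamy (2019) and even flags a typo there --- so this derivation is your own, and up to this point it is right.)

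The gap is exactly the step you defer to at the end, and it cannot be closed as you hope. Your identity at $\mathcal{G}=\mathcal{E}$ shows the displayed condition is equivalent to $\lambda^{\mathcal{M}}_{\mathcal{E}}=\lambda^{\mathcal{N}}_{\mathcal{E}}$ \emph{alone}, i.e.\ to the vanishing of the full alternating (M\"obius) transform of $h$. But the family $\{\lambda^{\mathcal{M}}_{\mathcal{F}}=\lambda^{\mathcal{N}}_{\mathcal{F}}:\mathcal{F}\subseteq\mathcal{E}\}$ is, by inverting your identity, equivalent to $h(\mathcal{F},\cdot)\equiv 0$ for all $\mathcal{F}\subseteq\mathcal{E}$, which is strictly stronger: the alternating transform annihilates any $h(\mathcal{E},\cdot)$ that is additive across the coordinates of $\mathcal{E}$, not just the zero function. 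Concretely, with $\mathcal{M}=\mathcal{E}=\{A,B\}$, $\mathcal{N}=\{A,B,C\}$, one has $d(\mathcal{M},(a,b))=\log\bigl(t_{ab}+t_{ab}^{-1}\bigr)$ with $t_{ab}=\sqrt{P(a,b,1)/P(a,b,2)}$, and choosing $t_{ab}$ to depend on $a$ only makes $d$ additive (so the displayed condition holds) while $\lambda^{\mathcal{M}}_{A}\neq\lambda^{\mathcal{N}}_{A}$. Worse, for $\mathcal{F}=\emptyset$ the AM--GM inequality gives $d(\mathcal{M},m)\geq|\mathcal{N}\setminus\mathcal{M}|\log 2>0$ pointwise, so $\lambda^{\mathcal{M}}_{\emptyset}-\lambda^{\mathcal{N}}_{\emptyset}=h(\emptyset)>0$ always and the literal left-hand side of the equivalence is never satisfiable. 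So the two sides of the stated equivalence do not match; your "delicate last translation" is not a bookkeeping issue but the place where the statement itself (inherited from the cited source, whose formula the paper already flags as containing a typo) needs repair --- either the family must be cut down to $\mathcal{F}=\mathcal{E}$, or the displayed condition must be strengthened to $h(\mathcal{E},\cdot)$ being constant (for the family over nonempty $\mathcal{F}$). A complete proof has to commit to one of these readings; as written, yours proves the equivalence only for the top-level effect.
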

 \begin{proof}
  This is part of Theorem 3.1 in Ghosh  and  Vellaisamy (2019).\footnote{Note that formula (iii) in Theorem 3.1 in Ghosh  and  Vellaisamy (2019) appears to have a typo.}
 \end{proof}

 \section{Marginal log-linear models}
 \label{MLLM}

 Marginal log-linear models are obtained from marginal log-linear parameterizations by applying a linear restriction to the components. If in the example of Section \ref{repmeas}, one wishes to assume that the strength of association between the first and second measurements are the same, that is, treatment does not affect association, then this model may be formulated by requiring that
 $$
 \lambda^{A_1B_1}_{A_1B_1}= \lambda^{A_2B_2}_{A_2B_2}. 
 $$

 % The most commonly used   marginal log-affine models assume that some of the components of a marginal log-linear parameterization are equal to zero. Such models  are called marginal log-linear models.
 
 For example, the graphical model associated with Figure 
 \ref{fig1},  which has been discussed repeatedly, is equivalent to the restrictions in (\ref{ind}) and
 (\ref{cind}) in Section \ref{grm}. Then, in Section \ref{par2}, a parameterization of the joint distribution of the variables based on the marginals 
 $$
 \{AB\}, \,\, \{ABC\}, \,\, \{ABD\}, \,\, \{ABCD\}
 $$
 was considered and  it was shown that the restrictions defining the model may be imposed by restricting some resulting parameters. In Section \ref{def} it was mentioned that the restrictions are the same as 
 \begin{equation}\label{zeros}
 \lambda^{AB}_{AB} = 0 \makebox{ and } \lambda^{ABCD}_{CD} = 0. 
 \end{equation}
 This is the marginal log-linear definition of the graphical model associated with the DAG in Figure 
 \ref{fig1}. Section \ref{graph} will discuss the marginal log-linear approach to graphical modelling in general.
 
 In general, a non-decreasing sequence of marginals is selected and the implied marginal log-linear parameterization is considered. Remember that only non-redundant parameters are included in the parameterization, which is thus smooth, see Theorem \ref{smthness}. In the generality considered in Bergsma  and  Rudas (2002a), a marginal log-linear model is obtained by assuming that the parameters belong to a linear subspace of the parameter space and marginal log-linear models are the special case when the subspace is defined by the equality-to-zero assumptions. 
 
 These models provide a rich family of generalizations of the log-linear model. The actual meaning of the model depends on the marginals selected and on the restrictions applied. Several examples will be discussed later on in the chapter.  
 
 In this section, we concentrate on the general properties of marginal log-linear models.
The first property is that these models always exist.

 \begin{theorem}\label{notempty}
 A marginal log-linear model based on a non-decreasing ordering of the marginals is never empty.
 \end{theorem}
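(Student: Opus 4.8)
The plan is to establish non-emptiness by exhibiting a single distribution that belongs to every marginal log-linear model built from the given non-decreasing ordering $\mathcal{M}_1, \ldots, \mathcal{M}_k = \mathcal{V}$. The natural candidate is the uniform distribution $P_0$ on the contingency table over $\mathcal{V}$; it is strictly positive, hence admissible under the standing assumption of Section~\ref{def}.

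First I would check that $P_0$ sends every non-empty log-linear parameter to zero, in every marginal. Since $\log P_0(v)$ does not depend on $v$, the recursion defining the classical log-linear parameters gives $\lambda^{\emptyset} = \log P_0(v)$ and then, by induction on $|\mathcal{E}|$, $\lambda^{\mathcal{E}}_e = 0$ for every $\emptyset \neq \mathcal{E} \subseteq \mathcal{V}$ and every category combination $e$ (equivalently, all odds ratios of $P_0$ equal $1$, so the odds-ratio representations of the log-linear parameters vanish). Now for any marginal $\mathcal{M} \subseteq \mathcal{V}$, the $\mathcal{M}$-marginal of $P_0$ is again uniform, so the same argument applied within $\mathcal{M}$ yields $\lambda^{\mathcal{M}}_{\mathcal{E}} = 0$ for all $\emptyset \neq \mathcal{E} \subseteq \mathcal{M}$. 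In particular, with $\mathcal{M}(\mathcal{E})$ defined by (\ref{mdef}), one gets $\lambda^{\mathcal{M}(\mathcal{E})}_{\mathcal{E}} = 0$ for every $\emptyset \neq \mathcal{E} \subseteq \mathcal{V}$; the remaining parameter $\lambda^{\mathcal{M}(\emptyset)}_{\emptyset}$ is not free, being pinned down by the requirement that the probabilities sum to $1$.

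Finally, a marginal log-linear model constrains the free parameters of the hierarchical and complete marginal log-linear parameterization to lie in a linear subspace of the parameter space --- for marginal log-linear models in the strict sense this subspace is cut out by setting a chosen subset of the parameters to zero --- and every linear subspace contains the origin. The parameter vector of $P_0$ is exactly the origin, so $P_0$ satisfies all the defining restrictions; hence the model contains $P_0$ and is non-empty. There is no real obstacle here: the only point needing care is that the all-zero parameter value is actually attained by some distribution, which is exactly what the computation with $P_0$ provides (alternatively one may invoke Theorem~\ref{par}, which guarantees that every attainable parameter vector --- in particular the origin, realised by $P_0$ --- corresponds to a distribution).
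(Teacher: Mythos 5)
Your proposal is correct and follows essentially the same route as the paper: the paper's proof defers to Theorem 7 of Bergsma and Rudas (2002a), but the remark immediately following it identifies the uniform distribution as the witnessing example, which is exactly the distribution you exhibit and verify. Your computation that all non-empty-effect marginal log-linear parameters of the uniform distribution vanish (since every marginal of the uniform distribution is uniform), together with the observation that the zero parameter vector lies in any defining linear subspace, is a sound self-contained version of that argument.
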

 \begin{proof}
 This is implied directly by Theorem 7 of Bergsma  and  Rudas (2002a).
 \end{proof}
 
 An example is the uniform distribution over a contingency table, which satisfies any marginal log-linear model referred to in the theorem. Note that variation independence is not required here. 
 
 %Theorem \ref{notempty} implies the following. While marginal log-linear parameters are variation independent only if particular conditions hold (see Theorem \ref{orddec}), imposing linear restrictions guarantees a joint distribution with the marginals satisfying the constraints is guaranteed to exist, in particular, the . 

The smoothness of the parameterization (see Theorem \ref{smthness}) from which marginal log-linear models are derived implies that the usual desirable asymptotic behaviour holds under Multinomial (see, e.g., Rudas (2018)) sampling.

\begin{theorem}
Assume a marginal log-linear model based on a non-decreasing sequence of marginals contains the true distribution. Then, under Multinomial sampling, the probability that a unique maximum likelihood estimate of the true distribution (or of its parameters) exists tends to $1$ as the sample size goes to infinity. Further, the asymptotic distribution of the maximum likelihood estimator is normal, with expected value equal to the true distribution.
\end{theorem}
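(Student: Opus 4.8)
The plan is to reduce the theorem to the classical asymptotic theory of maximum likelihood estimation in a smooth curved submodel of a regular exponential family, with Theorem~\ref{smthness} supplying the required regularity and Theorem~\ref{notempty} guaranteeing that the true distribution has a well-defined parameter representation. First I would reparametrize. By Theorem~\ref{smthness}, after deleting the redundant values and the empty-set parameter $\lambda_\emptyset^{\cM(\emptyset)}$, the hierarchical and complete marginal log-linear parameters give a smooth parameterization $\psi \mapsto p(\psi)$ of the (strictly positive) probability distributions on the table; in particular it is a $C^2$ diffeomorphism onto its image with everywhere full-rank Jacobian. A marginal log-linear model restricts $\psi$ to a linear subspace; choose a linear coordinate map $\gamma \mapsto \psi(\gamma)$ of that subspace and set $\pi(\gamma) := p(\psi(\gamma))$. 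Then $\pi$ is an injective $C^2$ map with full column-rank Jacobian $J(\gamma)$, its image is the model, and by hypothesis the true distribution is $p_0 = \pi(\gamma_0)$ for a unique interior point $\gamma_0$, with $p_0$ strictly positive. The multinomial log-likelihood $L_n(\gamma) = \sum_v n_v \log \pi_v(\gamma)$, with $n_v$ the cell counts and $n = \sum_v n_v$, is therefore a $C^2$ function of the free parameter $\gamma$.

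Second, I would establish consistency together with eventual existence and uniqueness of the estimator. Because $p_0$ is interior, $L_n$ tends to $-\infty$ as $\gamma$ approaches the boundary of the closure of the model, so any near-maximizer is eventually confined to a fixed compact set $K \ni \gamma_0$. On $K$ the law of large numbers gives $n^{-1}L_n(\gamma) \to \sum_v p_{0,v}\log\pi_v(\gamma)$ uniformly, and the limit equals $-\mathrm{KL}(p_0 \,\|\, \pi(\gamma))$ up to an additive constant, hence --- since $\pi$ is injective --- is uniquely maximized at $\gamma_0$. Standard $M$-estimation arguments then show that with probability tending to $1$ a maximizer $\hat\gamma_n$ exists and $\hat\gamma_n \to \gamma_0$ in probability. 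For uniqueness, observe that $-n^{-1}\nabla^2 L_n(\gamma) \to I(\gamma) = J(\gamma)^\top \mathrm{Diag}\big(p(\psi(\gamma))\big)^{-1} J(\gamma)$ uniformly near $\gamma_0$, and $I(\gamma_0)$ is positive definite because $J(\gamma_0)$ has full column rank; hence with probability tending to $1$ the function $L_n$ is strictly concave on a fixed neighbourhood of $\gamma_0$ that eventually contains every maximizer, so the MLE is unique.

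Third, asymptotic normality follows in the usual way. Expanding the likelihood equations $\nabla L_n(\hat\gamma_n) = 0$ around $\gamma_0$ gives $\sqrt n(\hat\gamma_n - \gamma_0) = \big(-n^{-1}\nabla^2 L_n(\tilde\gamma_n)\big)^{-1} n^{-1/2}\nabla L_n(\gamma_0)$ for an intermediate point $\tilde\gamma_n$; the multivariate central limit theorem gives $n^{-1/2}\nabla L_n(\gamma_0) \Rightarrow N(0, I(\gamma_0))$, while $-n^{-1}\nabla^2 L_n(\tilde\gamma_n) \to I(\gamma_0)$ in probability by consistency and the continuity guaranteed by smoothness, so Slutsky's theorem yields $\sqrt n(\hat\gamma_n - \gamma_0) \Rightarrow N(0, I(\gamma_0)^{-1})$. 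Finally, the delta method applied to the $C^2$ map $\pi$ gives $\sqrt n(\hat p_n - p_0) \Rightarrow N\big(0, J(\gamma_0)I(\gamma_0)^{-1}J(\gamma_0)^\top\big)$, so the MLE of the distribution is asymptotically normal and centred at the true distribution; the same step applied to any smooth function of $\gamma$ takes care of ``its parameters''.

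The step I expect to be the main obstacle is the global existence-and-uniqueness assertion rather than the local asymptotics: one must simultaneously exclude spurious stationary points away from $\gamma_0$ and rule out escape of maximizers toward the boundary of the simplex. This is exactly where the interior position of $p_0$, the properness of the multinomial log-likelihood, and the injectivity half of Theorem~\ref{smthness} are used; once the maximizer has been localized near $\gamma_0$, everything else is the routine Cramér-type argument for a curved exponential family.
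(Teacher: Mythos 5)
Your argument is correct and follows the same route the paper takes implicitly: the paper's proof is only a citation of Theorem 8 of Bergsma and Rudas (2002a), and that result rests on exactly the ingredients you spell out, namely the smoothness and invertibility of the marginal log-linear parameterization (Theorem~\ref{smthness}) combined with classical Cram\'er/Aitchison--Silvey asymptotics for a smooth submodel of the multinomial family. The only imprecision worth flagging is that $-n^{-1}\nabla^2 L_n(\gamma)$ converges to $J(\gamma)^\top\mathrm{Diag}\bigl(p(\psi(\gamma))\bigr)^{-1}J(\gamma)$ plus a second-derivative term that vanishes only at $\gamma=\gamma_0$; since the limit is continuous and positive definite at $\gamma_0$, your local strict-concavity and uniqueness argument survives unchanged.
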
 
\begin{proof}
This follows from Theorem 8 in Bergsma  and  Rudas (2002a). 
\end{proof} 
This result also implies the standard asymptotic behaviour of goodness-of-fit statistics.

 \section{Alternative parameterizations of marginal log-linear models}\label{altpar}
 
  There are several ways in which odds ratios may be defined and  used to parameterize distributions. These lead to alternative definitions of  marginal log-linear parameterizations and models, adding further flexibility of interpretation to the approach described in this chapter.
 
 It was illustrated in Section \ref{def} that marginal log-linear parameters are closely related to local odds ratios and their higher dimensional generalizations (see, e.g., Rudas (1998, 2018)).   In fact, the marginal log-linear parameters may  be derived from the local $(l-1)$th order odds ratios in the marginal tables. To define these in marginal tables, let the marginal probabilities in the marginal $\mathcal{M}$ be denoted as $P_{\mathcal{M}}$ and let $i_{\mathcal{M} \setminus \mathcal{E}}$ denote a fixed category of the variables in  $\mathcal{M} \setminus \mathcal{E}$. Further, let variable $V_j$ have indices $1, \ldots, c_j$. 
 Then,  the local odds ratio of order $l-1$ in the marginal table for every $ (i_1, \ldots , i_l): \, i_j \ge 2, \, j=1, \ldots, l $, has the form
 \begin{equation}\label{LOR}
\prod_{m_j \in \{0,1 \}, \, j=1, \ldots, l} P^{(-1)^{m_1+ \cdots + m_l}}_{\mathcal{M}} ( i_1-m_1, \ldots i_l-m_l, i_{\mathcal{M} \setminus \mathcal{E}}).
 \end{equation}
The expression in (\ref{LOR}) is a product of probabilities or their reciprocals. The probabilities involved  are in the marginal table $\mathcal{M}$ and are associated with adjacent cells which are obtained by reducing some indices in $ (i_1, \ldots , i_l)$ by $1$. Whether or not (\ref{LOR}) contains a probability or its reciprocal depends on the parity of the number of indices which were reduced.

 Instead of local odds ratios (of any order), spanning cell odds ratios could also be used to define marginal log-linear parameters. For a $2$-way $I \times J$ table, with indices of the variables $0, 1, \ldots, I-1$ and $0, 1, \ldots , J-1$, the spanning cell odds ratios are the odds ratios in the $2 \times 2$ subtables, spanned by the reference cell $(0,0)$ and the spanning cells $(i,j)$, with $i=1, \ldots, I-1$, and $j=1, \ldots, J-1$. The spanning cell odds ratios of order $l-1$ are of the form
 \begin{equation}\label{SPOR}
\prod_{m_j \in \{0,i_j-1 \}, \, j=1, \ldots, l} P^{(-1)^{m}}_{\mathcal{M}} ( i_1-m_1, \ldots i_l-m_l, i_{\mathcal{M} \setminus \mathcal{E}}),
\end{equation}
where $m$ is the number of indices $j$ where $m_j \neq 0$. In this case, the relevant cells are obtained by replacing some indices by $1$.

The intuitive meaning of the higher order odds ratios -- whether local or spanning cell -- is best understood through a recursive definition involving ratios of lower order conditional odds ratios. While local odds ratios measure the strength of association in adjacent cells, and are also relevant when the categories of the variables have orderings, spanning cell odds ratios measure the strength of association when categories are compared to the reference category coded as $1$.

 Bartolucci, Colombi,  and  Forcina (2007) considered various marginal interaction parameters which, if calculated in a non-decreasing set of marginals, may also be used to define  marginal models. These generalized marginal interactions are contrasts of logarithms of sums of (marginal) probabilities. Note that the marginal log-linear parameters considered so far in this chapter are  also contrasts of logarithms of  (marginal) probabilities.
 
 The central concept in the definition of the interaction parameters by Bartolucci et al. (2007) is the lumped table. While local and spanning cell odds ratios derive binary sub-tables from a marginal table by selecting various subsets of the cells, and then calculate the odds ratios for these subsets, the approach of Bartolucci et al. (2007) derives binary sub-tables by collapsing categories of variables. 
 The global and continuation odds ratios resulting from collapsing categories are particularly useful when the variables are ordinal.
 A table formed by the variables with collapsed categories is called a lumped table.
 
 For example, if one considers a bivariate marginal $\mathcal{M}$ with $I \times J$ categories of the variables, and probabilities $P_{\mathcal{M}}(i,j)$, then for each $i^*=2, \ldots I$ and $j^*=2, \ldots J$, one may consider the following quantities:
 $$
 Q_{\mathcal{M},i^*,j^*}(l,l) = \sum_{i=1, \ldots, i^*-1, \, j=1, \ldots, j^*-1 }  P_{\mathcal{M}}(i,j)
 $$
 $$
 Q_{\mathcal{M},i^*,j^*}(l,nl) = \sum_{i=1, \ldots, i^*-1, \, j=j^*, \ldots, J }  P_{\mathcal{M}}(i,j)
 $$
 $$
 Q_{\mathcal{M},i^*,j^*}(m,nl) = \sum_{i=i^*, \ldots, I, \, j=1, \ldots, j^*-1 }  P_{\mathcal{M}}(i,j)
 $$
 $$
 Q_{\mathcal{M},i^*,j^*}(nl,nl) = \sum_{i=i^*, \ldots, I, \, j=j^*, \ldots, J }  P_{\mathcal{M}}(i,j).
 $$
 Here, the summation of the marginal cell probabilities goes for the indices less ($l$) or not less ($nl$) than the specified $i^*$ and $j^*$.
 
 Then, the lumped table is of the size $2 \times 2 $, and the lumped distribution is $Q_{\mathcal{M},i^*,j^*}$. This kind of lumping divides the cells of the marginal table into $4$ rectangles and combines the probabilities within each. The odds ratio of the lumped distribution is
 $$ 
 \frac{Q_{\mathcal{M},i^*,j^*}(l,l)Q_{\mathcal{M},i^*,j^*}(nl,nl)}{Q_{\mathcal{M},i^*,j^*}(l,nl)Q_{\mathcal{M}}(nl,l)},
 $$
 which is called the global odds ratio belonging to cell $(i^*,j^*)$. Similar lumping is also possible for $l$-dimensional tables, and the $(l-1)$th order odds ratio in the resulting $2^l$ table is also called a global odds ratio. There are $(c_1-1)(c_2-1) \cdots (c_l-1)$ global odds ratios for an effect $\mathcal{E}$.
 
 Another type of odds ratio is obtained by the following partial lumping for $2$-way $I \times J$ tables, for each $i^*=1, \ldots I-1$, and $j^*=1, \ldots J-1$:
 $$
 R_{\mathcal{M},i^*,j^*}(e,e) =   P_{\mathcal{M},i^*,j^*}(i^*,j^*)
 $$
 $$
 R_{\mathcal{M},i^*,j^*}(n,e) =   P_{\mathcal{M}}(i^*+1,j^*)
 $$
 $$
 R_{\mathcal{M},i^*,j^*}(e,m) = \sum_{ j=j^*+1, \ldots, J}  P_{\mathcal{M}}(i^*,j)
 $$
 $$
 R_{\mathcal{M},i^*,j^*}(n,m) = \sum_{j=j^*+1, \ldots, J }  P_{\mathcal{M}}(i^*+1,j),
 $$
 where $e$ stands for equal, $n$ stands for next, and $m$ stands for more than.

 The odds ratio obtained for the lumped $2 \times 2$ distribution,
  $$ 
 \frac{R_{\mathcal{M},i^*,j^*}(e,e)R_{\mathcal{M},i^*,j^*}(n,m)}{R_{\mathcal{M},i^*,j^*}(e,m)R_{\mathcal{M},i^*,j^*}(n,e)},
 $$
 is called the continuation odds ratio. Its meaning is best seen by writing it as
 $$
 \frac{R_{\mathcal{M},i^*,j^*}(n,m)/R_{\mathcal{M},i^*,j^*}(n,e)}{R_{\mathcal{M},i^*,j^*}(e,m)/R_{\mathcal{M},i^*,j^*}(e,e)},
 $$
 which is the ratio of the conditional odds of the 'continuation' of the second variable, as opposed to not changing it, when conditioned on the next or on the current category of the first variable.

 In multivariate generalizations of the continuation odds ratios, lumping does occur for the response variables but not for the explanatory variables, if such a distinction among the variables exists.

 Bartolucci et al. (2007) define extended interaction parameters as contrasts of logarithms of generalized odds ratios including global and continuation odds ratios (and also local and spanning cell odds ratios) and show that for the models obtained by linear restrictions on these, many of the results presented so far in this chapter apply, too. They called this more general model class {\it hierarchical marginal models}.

\section{Marginal log-linear parameterization of conditional independence models}

Many of the relevant marginal models assume conditional independences in various marginals of the table. The most important group of such models are graphical models, of which models associated with DAGs have already been considered. A more detailed account will be given later in this chapter. In this section, we present general results about formulating conditional independence models as marginal models, that is, by restricting some parameters in a hierarchical and complete marginal log-linear parameterization.

For $i=1, \ldots, h$, let $\mathcal{A}_i \neq \emptyset$, $\mathcal{B}_i\neq \emptyset$, and $\mathcal{C}_i$ be pairwise disjoint sets. The goal is to formulate the following conditional independences jointly, as a marginal log-linear model:
\begin{equation}\label{coins}
\mathcal{A}_i \ind \mathcal{B}_i \, | \, \mathcal{C}_i, \makebox{ for all } i = 1, \ldots, h. 
\end{equation}
For example, the graphical model associated with the DAG in Figure \ref{fig1} is equivalent to imposing the conditional independences (\ref{ind}) and (\ref{cind}). In this case, $h=2$, $\mathcal{A}_1= \{A\}$, $\mathcal{B}_1=\{B\}$, $\mathcal{C}_1= \emptyset$ and $\mathcal{A}_2= \{C\}$, $\mathcal{B}_2=\{D\}$, $\mathcal{C}_2= \{A,B\}$.

To explore when (\ref{coins}) may be formulated as a marginal log-linear model, define
$$
\mathbb{D}_i= \mathbb{P}(\mathcal{A}_i \cup \mathcal{B}_i \cup \mathcal{C}_i) \setminus [\mathbb{P}(\mathcal{A}_i \cup  \mathcal{C}_i) \cup
\mathbb{P}( \mathcal{B}_i \cup \mathcal{C}_i)],
$$
where $\mathbb{P}(.)$ denotes the power set. That is, for every $i$, $\mathbb{D}_i$ is the collection of those subsets of $\mathcal{A}_i \cup \mathcal{B}_i \cup \mathcal{C}_i$ that contain variables from both $\mathcal{A}_i$ and $\mathcal{B}_i$. In the case of the DAG example, $\mathbb{D}_2= \{CD, ACD, BCD, ABCD \}$. A sufficient condition is given by the following result.
\begin{theorem}\label{d-th}
Let 
$$
\mathcal{M}_1, \ldots, \mathcal{M}_k =\mathcal{V}
$$
be a non-decreasing sequence of marginals with the following property:
\begin{equation}\label{d-cond}
\mathcal{C}_i \subseteq \mathcal{M}(\mathcal{E}) \subseteq \mathcal{A}_i \cup \mathcal{B}_i \cup \mathcal{C}_i, \makebox{ for all } \mathcal{E} \in \cup_{i=1}^h\mathbb{D}_i. \end{equation}

Then, the conditional independences in (\ref{coins}) define a marginal log-linear model based on these marginals. More specifically, (\ref{coins})  holds for a distribution $P$ if and only if
$$
\lambda^{\mathcal{M}(\mathcal{E})}_{\mathcal{E}} = 0 \makebox{ for all } \mathcal{E} \in \cup_{i=1}^h\mathbb{D}_i
$$
for this distribution. Further, the distributions in the model are smoothly parameterized by the remaining marginal log-linear parameters:
$$
\{\lambda^{\mathcal{M}(\mathcal{E})}_{\mathcal{E}}:  \mathcal{E} \notin \cup_{i=1}^l\mathbb{D}_i\}.
$$
\end{theorem}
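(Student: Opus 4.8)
The plan is to prove the statement in three stages. Write $\mathcal{D}=\cup_{i=1}^{h}\mathbb{D}_i$ and $\mathcal{W}_i=\mathcal{A}_i\cup\mathcal{B}_i\cup\mathcal{C}_i$. Stage one shows that~(\ref{coins}) forces $\lambda^{\mathcal{M}(\mathcal{E})}_{\mathcal{E}}=0$ for all $\mathcal{E}\in\mathcal{D}$; stage two shows the converse; stage three deduces the smooth parameterization by the remaining parameters as a formal consequence of Theorems~\ref{par} and~\ref{smthness}. Two facts are used throughout. The classical one: for pairwise disjoint $\mathcal{A},\mathcal{B},\mathcal{C}$ with $\mathcal{A},\mathcal{B}\neq\emptyset$, a positive distribution on $\mathcal{A}\cup\mathcal{B}\cup\mathcal{C}$ satisfies $\mathcal{A}\ind\mathcal{B}\mid\mathcal{C}$ if and only if the ordinary log-linear parameter of every effect that meets both $\mathcal{A}$ and $\mathcal{B}$ vanishes --- this is the log-linear description of conditional independence, and is where the sets $\mathbb{D}_i$ come from. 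The elementary one: if a positive distribution factorizes as $P(\mathcal{A}\mid\mathcal{C})\,P(\mathcal{B}\mid\mathcal{C})\,P(\mathcal{C})$, then summing over the categories of some variables in $\mathcal{A}\cup\mathcal{B}$ while keeping all of $\mathcal{C}$ yields a distribution of the same form, so $\mathcal{A}\ind\mathcal{B}\mid\mathcal{C}$ is inherited by every marginal of the shape $\mathcal{A}'\cup\mathcal{B}'\cup\mathcal{C}$ with $\mathcal{A}'\subseteq\mathcal{A}$, $\mathcal{B}'\subseteq\mathcal{B}$.

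Necessity is then immediate: fix $i$ and $\mathcal{E}\in\mathbb{D}_i$. Condition~(\ref{d-cond}) gives $\mathcal{C}_i\subseteq\mathcal{M}(\mathcal{E})\subseteq\mathcal{W}_i$, so $\mathcal{M}(\mathcal{E})=\mathcal{A}'\cup\mathcal{B}'\cup\mathcal{C}_i$ with $\mathcal{A}'=\mathcal{A}_i\cap\mathcal{M}(\mathcal{E})$, $\mathcal{B}'=\mathcal{B}_i\cap\mathcal{M}(\mathcal{E})$, and $\mathcal{E}\subseteq\mathcal{M}(\mathcal{E})$ meets both $\mathcal{A}'$ and $\mathcal{B}'$. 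If~(\ref{coins}) holds, the $\mathcal{M}(\mathcal{E})$-marginal satisfies $\mathcal{A}'\ind\mathcal{B}'\mid\mathcal{C}_i$ by the second fact, and the first fact applied inside $\mathcal{M}(\mathcal{E})$ gives $\lambda^{\mathcal{M}(\mathcal{E})}_{\mathcal{E}}=0$.

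Sufficiency is the substantive part. Fix $i$; by the first fact applied in $\mathcal{W}_i$ it suffices to show the hypothesis forces $\lambda^{\mathcal{W}_i}_{\mathcal{E}}=0$ for every $\mathcal{E}\in\mathbb{D}_i$, and I would do this by downward induction on $|\mathcal{E}|$. Base case $\mathcal{E}=\mathcal{W}_i$: since $\mathcal{W}_i\subseteq\mathcal{M}(\mathcal{W}_i)$ while~(\ref{d-cond}) gives $\mathcal{M}(\mathcal{W}_i)\subseteq\mathcal{W}_i$, we get $\mathcal{M}(\mathcal{W}_i)=\mathcal{W}_i$, hence $\lambda^{\mathcal{W}_i}_{\mathcal{W}_i}=\lambda^{\mathcal{M}(\mathcal{W}_i)}_{\mathcal{W}_i}=0$ by hypothesis. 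Inductive step at a proper $\mathcal{E}\in\mathbb{D}_i$: one writes $\lambda^{\mathcal{W}_i}_{\mathcal{E}}$ as $\lambda^{\mathcal{M}(\mathcal{E})}_{\mathcal{E}}$ plus a correction term assembled from log-linear parameters in $\mathcal{W}_i$ of effects $\mathcal{F}$ that use the variables of $\mathcal{W}_i\setminus\mathcal{M}(\mathcal{E})$ and strictly contain $\mathcal{E}$; because $\mathcal{C}_i\subseteq\mathcal{M}(\mathcal{E})$, those extra variables lie in $\mathcal{A}_i\cup\mathcal{B}_i$, so every such $\mathcal{F}$ still meets both $\mathcal{A}_i$ and $\mathcal{B}_i$, i.e.\ $\mathcal{F}\in\mathbb{D}_i$ with $|\mathcal{F}|>|\mathcal{E}|$, so the correction vanishes by the inductive hypothesis and $\lambda^{\mathcal{W}_i}_{\mathcal{E}}=\lambda^{\mathcal{M}(\mathcal{E})}_{\mathcal{E}}=0$. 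The main obstacle is to make this correction identity precise and to verify that it really involves only effects strictly containing $\mathcal{E}$ --- not effects incomparable with $\mathcal{E}$ sitting inside $\mathcal{A}_i\cup\mathcal{C}_i$ or $\mathcal{B}_i\cup\mathcal{C}_i$; this is a collapsibility/expansion statement for log-linear parameters computed in nested marginals, for which the ingredients are Theorem~3 of Bergsma and Rudas (2002a) and Theorem~3.1 of Evans (2015) together with the classical collapsibility theory, and condition~(\ref{d-cond}), by sandwiching $\mathcal{M}(\mathcal{E})$ between $\mathcal{C}_i$ and $\mathcal{W}_i$, is precisely what keeps every effect produced by the correction inside $\mathbb{D}_i$. (Alternatively one can rebuild $P$ from its marginal log-linear parameters along the non-decreasing order of marginals, using Theorem~\ref{conddistr} to see stage by stage that zeroing the $\mathbb{D}_i$-parameters assigned to a marginal is exactly the factorization constraint for the relevant conditional independences; I expect this to need the same book-keeping.)

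Granting the equivalence, the model is $\phi^{-1}\!\big(S\cap\phi(\mathcal{P})\big)$, where $\mathcal{P}$ is the open set of positive distributions, $\phi$ is the hierarchical and complete marginal log-linear parameterization --- a diffeomorphism of $\mathcal{P}$ onto an open subset of a Euclidean space by Theorem~\ref{smthness} --- and $S$ is the coordinate subspace on which the components indexed by $\mathcal{D}$ vanish. Hence the model is a smooth submanifold, non-empty by Theorem~\ref{notempty}, and composing the restriction of $\phi$ to it with the coordinate projection forgetting the (identically zero) $\mathcal{D}$-components yields a $C^2$ bijection onto an open subset of the space of $\{\lambda^{\mathcal{M}(\mathcal{E})}_{\mathcal{E}}:\mathcal{E}\notin\mathcal{D}\}$, with everywhere-nonsingular Jacobian (the forgotten components have gradients that vanish on the tangent space of the model, so the retained ones remain independent there) and $C^2$ inverse inherited from $\phi^{-1}$ --- i.e.\ the asserted smooth parameterization.
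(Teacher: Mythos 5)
Your necessity argument and your derivation of the smooth parameterization from Theorems~\ref{par} and~\ref{smthness} are sound; note that the paper itself offers no proof here, only a citation of Theorem~1 of Rudas, Bergsma and N\'emeth (2010). The problem is the sufficiency direction, which you rightly call the substantive part but do not actually close. The correction identity on which your downward induction rests --- that $\lambda^{\mathcal{W}_i}_{\mathcal{E}}-\lambda^{\mathcal{M}(\mathcal{E})}_{\mathcal{E}}$ is a function only of parameters $\lambda^{\mathcal{W}_i}_{\mathcal{F}}$ with $\mathcal{F}\in\mathbb{D}_i$ and $\mathcal{F}\supsetneq\mathcal{E}$ --- is false. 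Take $h=1$, $\mathcal{A}_1=\{A\}$, $\mathcal{B}_1=\{B_1,B_2\}$, $\mathcal{C}_1=\emptyset$, all variables binary, with the non-decreasing sequence $AB_1$, $AB_2$, $AB_1B_2=\mathcal{V}$; then $\mathbb{D}_1=\{AB_1,AB_2,AB_1B_2\}$ and (\ref{d-cond}) holds. At the step $\mathcal{E}=AB_1$ your inductive hypothesis supplies only $\lambda^{AB_1B_2}_{AB_1B_2}=0$, i.e.\ $\COR(A,B_1|B_2=1)=\COR(A,B_1|B_2=2)$; but a homogeneous conditional odds ratio can differ from the marginal odds ratio (ordinary non-collapsibility), so $\lambda^{AB_1B_2}_{AB_1}\neq\lambda^{AB_1}_{AB_1}$ in general. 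In the language of Evans's Theorem~3.1 as quoted in the paper, the correction $f$ genuinely depends on $\lambda^{AB_1B_2}_{B_2}$ and $\lambda^{AB_1B_2}_{B_1B_2}$, effects that do not contain $\mathcal{E}$, are not in $\mathbb{D}_1$, and are unconstrained; condition~(\ref{d-cond}) does nothing to remove them, contrary to your claim.

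What actually makes the example work is that the three constraints must be used together, not one at a time: they say that the conditional distribution of $(B_1,B_2)$ given $A$ has univariate margins and odds ratio that do not depend on $A$, and by the mixed parameterization of a $2\times 2$ table this forces the entire conditional distribution to be free of $A$, i.e.\ $A\ind \mathcal{B}_1$. Effects of equal cardinality ($AB_1$ and $AB_2$ here) constrain one another, so no induction on $|\mathcal{E}|$ can be well-founded; the argument has to be global. That global argument is exactly your parenthetical alternative --- reconstructing $P$ marginal by marginal via Theorem~\ref{conddistr} and the mixed parameterization, and matching the zero pattern in each marginal against the factorization characterization of the conditional independences --- and it is the route taken in Rudas et al.\ (2010). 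You should promote that from a fallback to the proof and discard the induction.
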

\begin{proof}
This is part of Theorem 1 in Rudas et al. (2010).
\end{proof}

Condition (\ref{d-cond}) means that for any effect $\mathcal{E}$ which contains variables from any two subsets $\mathcal{A}_i$ and $\mathcal{B}_i$ of variables which are assumed to be conditionally independent, the first marginal in the sequence which contains $\mathcal{E}$ has to be big enough to contain the conditioning set $\mathcal{C}_i$, but has to be small enough to be contained in $\mathcal{A}_i \cup \mathcal{B}_i \cup \mathcal{C}_i$. The condition requires the sequence of marginals to be sufficiently rich.

For example, in the case of the model defined by the DAG in Figure 1, there are various choices of the sequence of marginals with property (\ref{d-cond}). Clearly, the $AB$, $ABCD$ sequence is one such choice. But $A$, $AB$ $ABCD$ or $A$, $B$, $AB$, $ABC$, $ABD$, and $ABCD$ are also  appropriate sequences of marginals in order to be able to define the model by setting some marginal log-linear parameters to zero. However, Theorem \ref{d-th} does not imply that the DAG model would be a marginal log-linear model based on the sequence of marginals $A$, $ACD$, $ABC$, $ABCD$, because the variables $C$ and $D$, which have to be conditionally independent, are present together, without their conditioning set $AB$. In the case of the sequence of marginals $A$, $AB$,  and $ABCD$, the effects which are to be set to zero to specify the DAG model are
$$
AB, \, \, CD, \, \, ACD, \, \, BCD, \, \, ABCD;
$$
the first one in the $AB$ marginal, and the others in the $ABCD$ marginal. Note that this is the same specification as the one given in (\ref{zeros}), taking into account that the marginal log-linear parameters are log-linear parameters calculated in a marginal, thus the second equality in (\ref{zeros}) implies that
$$
\lambda^{ABCD}_{ACD}=\lambda^{ABCD}_{BCD}=\lambda^{ABCD}_{ABCD}=0;
$$
see Rudas (2018).

The marginal log-linear parameters which parameterize the distributions in the DAG model belong to the following effects:
$$
\emptyset, \,\, A, \,\, B, \,\,\ C, \,\, D, \,\, AC, \,\, BC, \,\,\ ABC, \,\, AD, \,\, BD, \,\, ABD.
$$
Out of these, the first two are calculated in the first marginal, the third one in the second marginal, and the rest in the last marginal. 

Further applications of Theorem \ref{d-th} will be given in Section \ref{graph}.

The result in Theorem \ref{d-th} raises the question of how to determine, for a given list of conditional independences, whether a smooth marginal log-linear definition and parameterization of the model is possible. This would require considering the non-decreasing sequences of marginals in which the required conditional independences in (\ref{coins}) may be formulated, and to see whether (\ref{d-cond}) holds for any such sequence. An apparent difficulty is that a particular effect $\mathcal{E}$ may be a subset of $\mathbb{D}_i$ for more than one $i$ and, thus, (\ref{d-cond}) may impose several restrictions on $\mathcal{M}(\mathcal{E})$. 
An obvious necessary condition for the existence of a smooth marginal log-linear definition is the following.
If for a subset of the variables $\mathcal{E}$, $I_{\mathcal{E}}$ denotes the indices from among $1, \ldots, h$, for which $\mathcal{E} \subseteq \mathbb{D}_i$, then $\mathcal{M}(\mathcal{E})$ should be such that
$$
\cup_{i \in I_{\mathcal{E}} } \mathcal{C}_i \subseteq \mathcal{M}(\mathcal{E}) \subseteq \cap_{i \in I_{\mathcal{E}}} \mathcal{A}_i \cup \mathcal{B}_i  \cup \mathcal{C}_i   
$$
and if
$$
\cup_{i \in I_{\mathcal{E}} } \mathcal{C}_i  \nsubseteq  \cap_{i \in I_{\mathcal{E}}} \mathcal{A}_i \cup \mathcal{B}_i  \cup \mathcal{C}_i, 
$$
then, an appropriate sequence of marginals does not exist and the sufficient condition of Theorem \ref{d-th} must hold.

Forcina et al. (2010) arrived at similar results using a different approach. They proposed an algorithm to decide whether a model defined by a given set of conditional independences admits a marginal log-linear definition in the sense discussed here, and is, thus, smooth.\footnote{A model is called smooth if it admits a smooth parameterization.}

Forcina (2012) discussed further questions related to the smoothness of models defined by various, more general, collections of conditional independence statements, when, for any ordering of the relevant marginals, the marginal log-linear parameters which have to be set to zero in order to obtain the prescribed conditional independences, cannot be specified in the first marginal where the effect occurs. An example discussed in Forcina (2012) is for four binary variables and requires that
$$
X_1 \ind X_2 \, | \, X_3,
$$
$$
X_2 \ind X_3 \, | \, X_4,
$$
$$
X_2 \ind X_4 \, | \, X_1.
$$
He showed the model is smooth, even though condition (\ref{d-cond}) does not hold, and Theorem \ref{d-th} does not apply. Indeed, if, for example, the three marginals appeared in the following order,
$$
X_1X_2X_3, \, X_2X_3X_4, \, X_1X_2X_4, 
$$
then $X_4 \nsubseteq \mathcal{M}(X_2X_3)=\{X_1X_2X_3\}$. Similarly, all other orderings of the marginals would lead to a violation of (\ref{d-cond}).

Forcina (2012) offered an iterative algorithm to construct the distributions in the model based on the mixed parameterization of exponential families, see, e.g. Rudas (2018), and proved that the convergence of the algorithm implies smoothness of the model.

\section{Estimation and testing} 

In this section, we describe the maximum likelihood (ML) and the GEE approaches to estimating marginal log-linear models. Both estimation methods provide asymptotically unbiased estimators, but ML estimators have the advantage over GEE ones of being asymptotically efficient. On the other hand, the GEE approach has the advantage of being computationally more efficient, which is important because of the large possible sizes of contingency tables. For example, 8 variables with 5 categories gives a contingency table of size $5^8=390,625$.
The ML method requires all expected cell frequencies to be estimated, whereas the GEE method only estimates first and second moments of observed marginal frequencies. Nevertheless, the ML method can handle large tables; for example, we found that tables with one million cells can be estimated without too much difficulty. 

ML estimators of marginal log-linear models are, in general, not available in closed form and iterative methods need to be used. 
There are two main approaches. Firstly, there are algorithms based on the approach developed by Aitchison and Silvey (1959), who used the Lagrange multiplier technique. Lang and Agresti (1994) first used this method for marginal models, and a modification, which seems to have improved practical performance, was given by Bergsma (1997). A difficulty with these methods is that a search is done for a saddle point, hence convergence may be difficult to monitor. Bergsma and Rapcsak (2006) resolved this problem by developing an alternative Lagrangian method, which turns the constrained maximization problem into an unconstrained one.

A second approach to ML estimation is to  maximize the likelihood parameterized in terms of a hierarchical and complete marginal log-linear parameter vector, for example, using a Fisher scoring algorithm. 
The drawback of this approach is that it involves `iteration within iteration', that is, at each Fisher scoring step, the cell probabilities need to be computed from the current estimate marginal log-linear parameter (this can be done with the iterative proportional fitting algorithm, which has guaranteed convergence). Therefore, this approach is computationally burdensome and Lagrange multiplier methods are more attractive. We describe the approach for completeness.
%In this section, we outline a new and efficient implementation of the Fisher scoring method when the marginal log-linear parameterization is based on an ordered decomposable set of marginals, which is adequate for most models occurring in practice. In particular, in the ordered decomposable case, the likelihood is a closed form function of the marginal log-linear parameterization. This is not so if the parameterization is based on non-ordered-decomposable marginals, such as the one of Glonek and McCullagh (1995). 

As far as we are aware, the GEE method has not been described in the literature for general marginal models. Section~\ref{sec-gee} gives an outline, including a suitable choice of the working covariance matrix.

%The reason this is more efficient is that typically a marginal log-linear model can be represented by a relatively small number of constraints equations. For example, if there are 8 variables with 5 categories each, marginal homogeneity of the one-dimensional marginals involves only $(5-1)(8-1)=28$ constraints. 

%However, this approach can be computationally prohibitive, because it requires explicit estimation of a potentially high-dimensional marginal log-linear parameter, while the likelihood cannot in general be expressed in closed form as a function of such a marginal log-linear parameter. Hence, ``iteration within iteration'' is needed, that is, at each iteration an iterative procedure (such as the iterative proportional fitting method) is needed to compute the vector of expected cell frequencies from the marginal log-linear parameter. A more efficient algorithm may be based on the approach  developed by Aitchison and Silvey (1959), who used the Lagrange multiplier technique. The reason this is more efficient is that typically a marginal log-linear model can be represented by a relatively small number of constraints equations. For example, if there are 8 variables with 5 categories each, marginal homogeneity of the one-dimensional marginals involves only $(5-1)(8-1)=28$ constraints. 

\subsection{Matrix formulation of marginal models}

Let $\bmm$ be a vector containing the expected cell frequencies in a contingency table. 
A marginal log-linear parameter $\blambda$ can be represented as
\begin{align} \label{llpar}
    \blambda = \bB'\log\bM'\bmm
\end{align}   
where $\bB$ and $\bM$ are appropriately defined matrices and a prime represents the transpose. 
This formulation includes the marginal log-linear parameterizations of Bergsma and Rudas (2002), see Section~\ref{MLLP}.

A marginal log-linear model is then defined by
\begin{align}\label{freedom}   \blambda = \bX\bbeta  \end{align}
for a matrix $\bX$ and parameter vector $\bbeta$ of smaller length than $\blambda$. 
Equivalently, a marginal log-linear model can be specified as
\begin{align}\label{constraint}   \bC'\blambda = \bzero  \end{align}
for an appropriate matrix $\bC$. Taking $\bC$ to be the orthogonal complement of $\bX$, in the sense that $\bC'\bX=\bzero$ and $(\bX,\bC)$ is an invertible matrix, the two formulations are seen to be equivalent.
These formulations have been called {\it freedom} and {\it constraint} specifications (Lang, 1996a)

For example, consider a $2\times 2$ table with expected cell frequencies $(m_{11},m_{12},m_{21},m_{22})$. The marginal homogeneity model in the constraint specification is $m_{i+}=m_{+i}$ ($i=1,2$), where a plus in the subscript denotes summation over that subscript. In matrix notation, this is %\marginpar{isn't this 1  & 0 & -1 & 0 \\ 0 & 1 & 0  & - 1 ?}
\begin{align}\label{constr2x2}   
    \begin{pmatrix} 1  & 0 & -1 & 0 \\ 0 & 1 & 0  & - 1  \end{pmatrix}
    \log\left[
    \begin{pmatrix} 1 & 1 & 0 & 0 \\ 0 & 0 & 1 & 1  \\ 1 & 0 & 1 & 0 \\ 0 & 1 & 0 & 1  \end{pmatrix}
    \begin{pmatrix} m_{11} \\ m_{12} \\ m_{21} \\ m_{22}  \end{pmatrix}\right]  = 0  .
\end{align}
In the freedom specification, the model is $(m_{i+},m_{+i})=(\beta_i,\beta_i)$ ($i=1,2$), which in matrix notation is
\[   
    \log\left[
    \begin{pmatrix} 1 & 1 & 0 & 0 \\ 0 & 0 & 1 & 1  \\ 1 & 0 & 1 & 0 \\ 0 & 1 & 0 & 1  \end{pmatrix}
    \begin{pmatrix} m_{11} \\ m_{12} \\ m_{21} \\ m_{22}  \end{pmatrix}
    \right]  = 
    \begin{pmatrix} 1 & 0 \\ 0 & 1  \\   1 & 0 \\ 0 & 1  \end{pmatrix}
    \begin{pmatrix} \beta_1 \\ \beta_2 \end{pmatrix}  .
\]

\subsection{Characterization of ML estimators}

In this section we give a score equation, and a Lagrangian score equation, whose solutions, under some conditions, are the ML estimators of a marginal model. Algorithms for solving these equations are postponed to Section~\ref{sec-alg}.

Let $\bn$ be a vector of observed cell counts of a contingency table. 
We assume $\bn$ has a multinomial or independent Poisson distribution with expected frequency vector $\bmm=E(\bn)$. The log-likelihood for $\bmm$ then is
\begin{align}\label{loglik}  L(\bmm|\bn) = \bn'\log(\bmm) - \bone'\bmm + c \end{align}
where $\bone$ is a vector of ones of appropriate length and $c$ is a constant. 
In the multinomial case, the constraint $\bone'\bmm=\bone'\bn$ holds, but this does not affect maximum likelihood estimation or inference in the present case (Lang, 1996b). Hence, for notational simplicity, we will ignore the multinomial constraint below. 
The maximum likelihood estimator $\hat\bmm$ of $\bmm$ under a marginal log-linear model maximizes the log-likelihood $L(\bmm|\bn)$ subject to a constraint of the form~(\ref{freedom}) or~(\ref{constraint}). 
%The freedom and constraint specifications lead to different algorithms for finding the ML estimator. 
The maximum likelihood estimator $\hat\bmm$ of $\bmm$ has been characterized in two equivalent ways, namely as the solution to (i)  equations involving Lagrange multipliers, or (ii) the score equation for $\bbeta$. The former is due to Aitchison and Silvey (1958) and Lang (1996a) and the latter was considered by Glonek and McCullagh (1995) and Colombi and Forcina (2001).

The Lagrange multiplier method seeks a stationary point of the Lagrangian log-likelihood
\[  L(\bmm,\btau|\bn) = \bn'\log(\bmm) - \bone'\bmm - \btau'\bC'\blambda \]
where $\btau$ is a vector of Lagrange multipliers and $\blambda$ is a marginal log-linear parameter of the form~(\ref{llpar}). 
Denote the Jacobian of $\blambda$ as $\bLambda$, given by
\begin{align}\label{Lambdadef}   \bLambda = \frac{d\blambda'}{d\bmm} =  \bM\bD_{\bM'\bmm}^{-1}\bB \end{align}
where $\bD$ is the diagonal matrix with its subscript on the main diagonal. 
%Note that $\bLambda$ is invertible by the assumption that the parameterization is smooth. 
Differentiating the log-likelihood $L$ with respect to $\bmm$ and equating to zero gives 
\begin{align} \label{lagr}
    \frac{\bn}{\bmm} - \bone + \bLambda\bC\btau = \bzero 
\end{align}   
where the division in $\bn/\bmm$ is element-wise. 
%where $\bD_.$ is the diagonal matrix with its subscript on the main diagonal. 

Under some conditions, the ML estimator $\hat\bmm$ is a solution to the simultaneous equations~(\ref{lagr}) and~(\ref{constraint}). 
Sufficient conditions include (i) all observed frequencies are strictly positive, and (ii) the Jacobian $\bLambda\bC$ has full column rank. For most, if not all, marginal models of practical interest, the second condition is satisfied; see Section~\ref{sec-smooth}. However, the positivity of all observed frequencies is often not satisfied in practice; for example, for many real-world problems the number of cells in the table is larger than the sample size, implying there must be some cells with zero observations. 
A heuristic solution to this problem is to replace all zero observed frequencies by a small constant, so that the total contribution to the likelihood will be negligible (Bergsma, Croon  and  Hagenaars, 2009).

To illustrate the problem with zero observed cells, note that for the marginal homogeneity model defined by~(\ref{constr2x2}),~(\ref{lagr}) becomes
\[   \frac{n_{ij}}{m_{ij}} - 1 - \frac{\lambda_i}{m_{i+}} + \frac{\lambda_j}{m_{+j}} = 0  \quad i,j=1,2. \]
Consider now the equation for $(i,j)=(1,1)$. Since $m_{1+}=m_{+1}$, we obtain
\[   \frac{n_{11}}{m_{11}} - 1  = 0 .  \]
The solution is $\hat m_{11}=n_{11}$ except if $n_{11}=0$, in which case there is no solution. The true ML estimator in this case is $\hat m_{11}=0$, and replacing $n_{11}$ by a small number makes negligible difference for inferential purposes.

An alternative to the Lagrange multiplier method for characterizing the ML estimator is by means of the score equation for $\bbeta$ in~(\ref{freedom}). The likelihood is parameterized in terms of $\bbeta$ and %\marginpar{and the ML...?} 
the ML estimator is obtained by computing the score equation and solving for $\bbeta$. 
%Using a marginal log-linear parameterization, the likelihood for $\bbeta$ in~(\ref{freedom}) can be formulated as
%\[  L = L(\bbeta|\bn) = \bn'\log(\bmm) - \bone'\bmm \]
%where $\bone$ is a vector of ones of appropriate length and $\bmm$ is a function of $\bbeta$. 
%We assume the ML estimator of $\bbeta$ is a stationary point of $L(\bbeta)$. 
This approach is facilitated if $\blambda$ is a marginal log-linear parameterization, in which case its Jacobian $\bLambda$ is invertible. Differentiating the log-likelihood then gives the score vector
\begin{align}\label{score}  \bs(\bbeta) := \frac{d L}{d\bbeta}  =  \frac{d\blambda'}{d\bbeta}\, \frac{d\bmm}{d\blambda'}\, \frac{dL}{d\bmm}  =  \frac{d\blambda'}{d\bbeta}\, \Big(\frac{d\blambda'}{d\bmm}\Big)^{-1}\, \frac{dL}{d\bmm} = \bX'\bLambda^{-1}\Big( \frac{\bn}{\bmm} - \bone \Big).
\end{align}
Provided all observed cell frequencies are positive, the ML estimator $\hat\bbeta$ satisfies $\bs(\hat\bbeta)=0$. As in the Lagrange multiplier case, we suggest replacing zero observed cell frequencies by a small constant. 
%Straightforward %\marginpar{Can you give a reference where this is done?} matrix algebra shows that 
If $\blambda$ is a smooth parameterization, then $\bLambda$ is invertible, and $\bs(\bbeta)=0$ is equivalent to the Lagrangian equation~(\ref{lagr}), since $\bC'\bX=\bzero$ and $(\bX,\bC)$ is an invertible matrix.
The score function is potentially computationally expensive to evaluate, because the matrix $\bLambda$ needs to be computed and inverted.

\subsection{Likelihood ratio tests and asymptotic distribution of ML estimators}

Suppose model~(\ref{freedom}) holds. Following results of Aitchison and Silvey (1958) and Lang (1996a), the maximum likelihood estimator $\hat\bmm$ under this model has an approximate large sample multivariate normal distribution, with mean $\bmm$ and covariance matrix
\[  \cov(\hat\bmm) \approx \bD_\bmm - \bLambda(\bLambda'\bD_\bmm\bLambda)^{-1}\bLambda   .  \]
The estimated parameter %\marginpar{Can we call this, instead, parameter vector?}
vector $\hat\bbeta=(\bX'\bX)^{-1}\bX'\log\hat\bmm$ also has a large sample multivariate normal distribution, with mean $\bbeta$ and covariance matrix
\[  \cov(\hat\bbeta) = (\bX'\bX)^{-1}\bX'\bD_\bmm^{-1}\cov(\hat\bmm-\bmm)\bD_\bmm^{-1}\bX(\bX'\bX)^{-1}  .  \] 

The usual likelihood ratio test can be used for selecting nested models. Let $H_0$ and $H_1$ be nested models, i.e., if $H_0$ is true then $H_1$ is true, and let $\hat\bmm_k$ be the ML estimate of $\bmm$ under $H_k$ ($k=0,1$). 
The log likelihood ratio test statistic is
\[   G^2 = 2\bn'\log\frac{\hat\bmm_0}{\hat\bmm_1}  \]
Under some regularity conditions, if $H_0$ is true then $G^2$ has an asymptotic chi-square distribution with degrees of freedom (df) equal to the dimension of $H_1$ minus the dimension of~$H_0$. 

Non-nested models can be compared using various information criteria, such as the Bayesion information criterion (BIC),
\[   \text{BIC} = G^2 + 2\,\text{df}\,\log(N) \]
where $N$ is the sample size.

%The Fisher information on $\bbeta$ is
%\[  \bI(\bbeta) := - E\Big[ \frac{d\bs(\bbeta)}{d\bbeta'}\Big]  =  \bX'\bLambda^{-1}\bD_\bmm^{-1}\bLambda'^{-1}\bX  \]

\subsection{Algorithms for finding ML estimators}\label{sec-alg}

\subsubsection{Lagrangian methods}

Several Lagrangian algorithms have been proposed to find the ML estimators of a marginal model. 
In a seminal paper, Aitchison and Silvey (1958) described Lagrangian methods for constrained maximum likelihood in some generality. Lang and Agresti (1994) and Lang (1996a) introduced Langrangian methods for categorical marginal models.
The algorithm we describe here is a slightly modified algorithm developed by Bergsma (1997), which practical experience indicated has improved convergence properties compared to the original Aitchison and Silvey algorithm. 

The first step of the algorithm is to choose an appropriate
starting point $\bmm^{(0)}$, after which subsequent estimates $\bmm^{(k+1)}$ ($k=0,1,2,\ldots$) are calculated iteratively using the
formula
\begin{align}\label{alg1}
   \log\bmm^{(k+1)} = \log \bmm^{(k)} + \step^{(k)}\bu\big(\bmm^{(k)}\big)
\end{align}
where $\step^{(k)}$ is an appropriately chosen step size and
\[
   \bu(\bmm) = \frac{\bn}{\bmm} - \bone - \bLambda\bC\big(\bC'\bLambda'\bD_\bmm\bLambda\bC\big)^{-1}\big[\bC'\bLambda'\bM'(\bn-\bmm)+\bC'\blambda\big]
   .
\] 
Here, $\bLambda$ is defined by~(\ref{Lambdadef}) and depends on $\bmm$. 
A suggested starting point is $\bmm^{(0)}=\bn+\epsilon$, where $\epsilon$ is some small constant, such as $10^{-6}$. 
For further details, see Bergsma, Croon and Hagenaars (2009, Section 2.3.5). 

A closely related algorithm was given by Colombi and Forcina (2001), which, being based on updating $\bbeta$ in~(\ref{freedom}), was named the `regression algorithm'. The two algorithms were shown to be equivalent by Evans and Forcina (2011). 
They showed the two algorithms have rather different numerical properties depending on whether 
the design matrix $\bX$ has a block diagonal structure, arising with continuous covariates: if this is the case, the regression algorithm tends to be much more efficient but if not, Bergsma's algorithm tends to be much more efficient in practice.

%The two algorithms have rather different numerical properties depending on whether or not there are individual level covariates \marginpar{this individual level covariate is not clear to me}: if these are absent, Bergsma's algorithm tends to be much more efficient in practice, while if these are present, the regression algorithm tends to be much more efficient. 

Although we have very good practical experience with convergence of the algorithm~(\ref{alg1}) to the ML estimator, theoretical results are lacking. 
Generally speaking, convergence properties of constrained optimization problems are more difficult to establish than those of unconstrained ones. 
The Lagrange multiplier method turns a constrained optimization problem into the problem of finding a saddle point of the Lagrangian function, but finding such a saddle point may be more difficult than finding a global (unconstrained) maximum or a minimum. 
Two ways of reformulating the ML estimation problem for marginal models as an unconstrained optimization problem have been described.

Bergsma and Rapcs\'ak (2006) provided a general method for turning a constrained optimization problem into an unconstrained one %\marginpar{the reader may think, the Lagrange multiplier method does this...} 
and applied this to ML estimation of marginal models. The advantage of this algorithm is good theoretical properties, and it is similar in computational efficiency to the algorithm defined by~(\ref{alg1}).

\subsubsection{Fisher scoring}

In this section we build on the Fisher scoring algorithm for marginal models described by Colombi and Forcina (2001).
%Below we describe the direct extension of this approach  propose an alternative implementation of Fisher scoring, which has the advantage that, under conditions normally satisfied in practice, it is numerically as efficient as the aforementioned Lagrangian algorithms,  
%for the parameter $\bbeta$ in~(\ref{freedom}) was introduced by Colombi and Forcina (2001). The latter relies on a parameterization of the joint distribution in terms of marginal log-linear parameters. Though they used the parameterization of Glonek and McCullagh (1995), 
We wish to find the value of $\bbeta$ in~(\ref{freedom}) maximizing the log-likelihood~(\ref{loglik}). Here, $\blambda$ is a marginal log-linear parameterization as described in Section 3.
Then, by Theorem~\ref{smthness}, $\bLambda$ defined by~(\ref{Lambdadef}) is invertible.
Differentiating the log-likelihood gives the score vector $\bs(\bbeta)$ given by~(\ref{score}). 
The Fisher information on $\bbeta$ is
\[  \bI(\bbeta) := - E\Big[ \frac{d\bs(\bbeta)}{d\bbeta'}\Big]  =  \bX'\bLambda^{-1}\bD_\bmm^{-1}\bLambda'^{-1}\bX.  \]
The Fisher scoring algorithm is given by
\begin{align} \label{fish1}   \bbeta^{(k+1)} = \bbeta^{(k)} + \text{step}^{(k)}\bI(\bbeta^{(k)})^{-1}\bs(\bbeta^{(k)}). \end{align}
At each iteration, the vector of expected cell frequencies $\bmm$ needs to be computed from $\blambda$, which can be done using the iterative proportional fitting algorithm (Bergsma and Rudas, 2002). However, the Newton-Raphson scheme proposed by Glonek and McCullagh (1995) may be numerically more efficient.

A major potential numerical bottleneck for~(\ref{fish1}) is that $\bLambda$ needs to be stored and inverted at each iteration. In particular, if there are $K$ cells in the table $\bLambda$ is a $K\times K$ matrix. 
A normally much more efficient algorithm can be obtained by updating $\blambda=\bX\bbeta$ directly. We obtain the updating step
\begin{align*}
	\blambda 
	&\to \blambda + \step\, \bX\cdot \bI(\bbeta)^{-1}\bs(\bbeta)  \\
	&= \blambda - \bX(\bX'\bLambda^{-1}\bD_\bmm^{-1}\bLambda'^{-1}\bX)^{-1}\bX'\bLambda^{-1}\bX'\bLambda^{-1}\bD_\bmm^{-1}( \bn - \bmm )  \\
	&= \blambda - \big[ \bLambda'\bD_\bmm\bLambda - \bLambda'\bD_\bmm\bLambda \bC(\bC'\bLambda'\bD_\bmm\bLambda\bC)^{-1}\bC'\bLambda'\bD_\bmm\bLambda \big] \bD_\bmm^{-1}( \bn - \bmm ) , 
\end{align*}
where $\bC$ is an orthogonal complement of $\bX$ (see~(\ref{constraint})). 
In practice, the matrix $\bC$ typically has low column rank, making the latter updating step relatively efficient if implemented well; see Colombi and Forcina for details. 
%It can be verified that this algorithm is numerically of the same order of efficiency as the Lagrangian algorithm described above. 

%First we describe the standard (inefficient) Fisher scoring algorithm for $\bbeta$ in~(\ref{freedom}), where $\blambda$ is a parameterization of Bergsma and Rudas (2002) as described in Section 3.
%Then by Theorem~\ref{smthness}, $\bLambda$ defined by~(\ref{Lambdadef}) is invertible.
%Differentiating the log-likelihood gives the score vector $\bs(\bbeta)$ defined by~(\ref{score}). 
%The Fisher information on $\bbeta$ is
%\[  \bI(\bbeta) := - E\Big[ \frac{d\bs(\bbeta)}{d\bbeta'}\Big]  =  \bX'\bLambda^{-1}\bD_\bmm^{-1}\bLambda'^{-1}\bX  \]
%The standard Fisher scoring algorithm is given by
%\[   \bbeta^{(k+1)} = \bbeta^{(k)} + \text{step}^{(k)}\bI(\bbeta^{(k)})^{-1}\bs(\bbeta^{(k)}) \]
%An additional numerical step is needed, which is to compute $\bmm^{(k)}$ from $\bbeta^{(k)}$. This can be done using the iterative proportional fitting algorithm. If the set of marginals of interest is ordered decomposable, as is the case in most practical settings we encountered, this algorithm converges in a single step, i.e., there is a closed form expression for $\bmm^{(k)}$ in terms of $\bbeta^{(k)}$.

Overall, the Fisher scoring algorithm appears more cumbersome to implement than Lagrangian algorithms, in particular if numerical efficiency is desired. Furthermore, due to the required `iteration within iteration', Fisher scoring algorithms can be expected to be slower than Lagrangian algorithms. If parameterizations based on a set of marginals which is not ordered decomposable are used, out-of range estimates (negative probabilities) can be obtained (Colombi and Forcina, 2001). 

%There are however two difficulties of this approach: (i) it involves `iteration within iteration', that is, at each iterative step an iterative procedure needs to be used to compute the cell probabilities from the current estimate of the marginal log-linear parameters, and (ii) there is a risk of out-of-range estimates at each iteration, that is, a risk of negative probabilities.

In more general settings, a drawback of Fisher scoring is that it requires a parameterization of the distribution in terms of parameters of interest. Such parameterizations are available for marginal log-linear models, but not for the more general models based on non-log-linear parameters considered by Bergsma (1997), Lang (2005), and Bergsma, Croon and Hagenaars (2009).

\subsubsection{Software}

The following three \texttt{R} packages are available for marginal modelling: \texttt{cmm} by Wicher Bergsma and Andries van der Ark,  \texttt{mph.fit} by Joseph Lang, and \texttt{hmmm} by Roberto Colombi, Sabrina Giordano, and Manuela Cazzaro. A detailed description of the \texttt{cmm} package can be found at \url{stats.lse.ac.uk/bergsma/cmm/index.html}. The website contains R code with explanations for all the data examples in Bergsma et al.\ (2009). 
Documentation for \texttt{mph.fit} can be found at \url{homepage.stat.uiowa.edu/~jblang/mph.fitting/index.htm} and for \texttt{hmmm} at \url{rdrr.io/cran/hmmm/}; see also Colombi, Giordano,  and  Cazzaro (2014).
All three packages can estimate a wide variety of models.
A special feature of \texttt{cmm} is that it can handle marginal models with latent variables, while \texttt{hmmm} can handle hidden Markov models and inequality constraints. For features of \texttt{hmmm},   see also   Section \ref{altpar}.

%\marginpar{Could \texttt{cmmm} be also mentioned? Perhaps a separate subsection?}

\subsection{The GEE method}\label{sec-gee}

A drawback of ML estimation of marginal models is that all cells in the contingency table need to be estimated, making it  computationally infeasible if the number of cells is large. The GEE method is a quasi-likelihood method which models the covariance matrix between marginal observations, while ignoring higher order associations, allowing greater computational efficiency at the cost of some statistical efficiency. A detailed general overview of the GEE methodology is provided by Molenberghs and Verbeke (2005, Chapter 8). In most literature on GEE, the association is modelled using correlations. Lipsitz, Laird, and Harrington (1991) developed the GEE methodology based on odds ratios for univariate binary responses. Touloumis, Agresti, and Kateri (2013) gave a more general development for multinomial responses. Below, we adapt the GEE procedure for general marginal models as described in this paper, i.e., the association is modelled using log-linear parameters and the marginals of interest may be multivariate. 

%and involves specification of estimating equations involving first and second moments of the marginal observations. Assumptions are made on the second moments to allow simplified estimation. Even if the assumptions on the second moments are wrong, consistent estimators of the marginal parameters can still be obtained, and Huber's so-called sandwich estimator can be used to obtain the correct standard errors. A drawback is that if the model for the second moments is wrong, asymptotic efficiency is potentially lost. 

The GEE method derives from the score vector for a generalized linear model for a multivariate marginal mean; if $\by\sim\text{MVN}(\bmu,\bV_\by)$ and $\bmu=g(\bX\bbeta)$ for some link function $\bbeta$, the score equation yielding the maximum likelihood estimator of $\bbeta$ is 
\begin{align} \label{normalscore}
   \frac{d\bmu'}{d\bbeta}\bV_\by^{-1}(\by-\bmu) = \bzero.
\end{align}
This equation can also yield a consistent estimator of $\bbeta$ if $\by$ is non-normal (Wedderburn, 1974). However, there is the difficulty that $\bV_\by$ is typically unknown and potentially difficult to estimate. 
Liang and Zeger (1986) proposed replacing $\bV_\by$ with a potentially incorrect `working' covariance matrix $\tilde\bV_\by$, giving the GEE
\begin{align} \label{gee0}
   \frac{d\bmu'}{d\bbeta}\tilde\bV_\by^{-1}(\by-\bmu) = \bzero.
\end{align}
Here, $\tilde\bV_\by$ can depend on parameters, in particular $\bmu$ and parameters describing the correlation structure of $\by$. 
%For example, $\tilde\bV_\by$ is sometimes assumed to be a diagonal matrix. 
Liang and Zeger showed that under some conditions, the GEE yields a consistent estimator $\tilde\bmu$ of $\bmu$. Huber's (1967) large sample sandwich estimator of the covariance matrix of $\tilde\bmu$ is then
\begin{align*} %\label{s}
   \tilde\cov(\tilde\bbeta) = \tilde\bI^{-1}\tilde\bJ\tilde\bI^{-1}
\end{align*}
where 
\begin{align*} %\label{s}
   \tilde\bI = \frac{d\bmu'}{d\bbeta}\tilde\bV_\by^{-1}\frac{d\bmu}{d\bbeta'} \Big|_{\bbeta=\tilde\bbeta}
   \quad\quad 
   \tilde\bJ = \frac{d\bmu'}{d\bbeta}\tilde\bV_\by^{-1}\bV_\by^*\tilde\bV_\by^{-1}\frac{d\bmu}{d\bbeta'} \Big|_{\bbeta=\tilde\bbeta}.
\end{align*}
Here, $\bV_\by^*$ is a consistent estimator of $\bV_\by$.

Let us now give the GEE method for estimating $\bbeta$ in the marginal model~(\ref{freedom}), denoting the marginal observed frequency vector by $\by=\bM'\bn$ and the corresponding expected frequency vector by $\bmu=E(\by)=\bM'\bmm$. Then
\begin{align}\label{vy} \bV_\by=\bM'\bD_\bmm\bM - N^{-1}\bmu\bmu' \end{align}
where $N=\bone'\bn$ is the sample size.
We can write the marginal model~(\ref{freedom}), with $\blambda$ given by~(\ref{llpar}), as
\begin{align*}
    \bmu = \exp(\bU\bX\bbeta)
\end{align*}
where $\bU$ is an orthogonal complement of $\bB$, that is, $\bB'\bU=\bzero$ and $(\bB,\bU)$ is an invertible matrix. 
Hence,
\[  \frac{d\bmu'}{d\bbeta} = \bX'\bU'\bD_\bmu  \]
so that~(\ref{normalscore}) becomes
\begin{align} \label{normalscore1}
   \bX'\bU'\bD_\bmu\tilde\bV_\by^{-1}(\by-\bmu) = \bzero.
\end{align}

A difficulty is that $\bV_\by$ is typically not invertible, in which case we can replace~(\ref{normalscore1}) by
\begin{align} \label{normalscore2}
    \by - \bmu  + \bV_\by\bD_\bmu^{-1}\bB\bC\btau = \bzero 
\end{align}
where $\btau$ is a parameter to be estimated. 
Straightforward calculations show that if $\bV_\by$ is invertible,~(\ref{normalscore1}) and~(\ref{normalscore2}) are equivalent.
Note that~(\ref{normalscore2}) follows from the Lagrangian score equation~(\ref{lagr}) by pre-multiplying the left- and right-hand sides by $\bM'\bD_\bmm$. Replacing $\bV_\by$ in~(\ref{normalscore2}) by a working covariance matrix $\tilde\bV_\by$ gives a GEE for marginal models. 
A consistent estimator $\bV_\by^*$ of $\bV_\by$ is needed to compute $\tilde\bJ$, and for this we can take $\bV_\by^*=\bM'\bD_\bn\bM$.

It remains to find a working covariance $\tilde\bV_\by$. A simple way to do this is as follows. 
%We propose to estimate $\tilde\bV_\by$ by assuming a log-linear model for the joint distribution. 
Suppose the marginal model is based on non-nested marginals $\cM_1,\ldots,\cM_k$. Then $\bV_\by$ given by~(\ref{vy}) is a function of the expected marginal frequencies for the following marginals
\begin{align}\label{margij}   \{\cM_i\cup\cM_j|i,j=1,\ldots,k\}. \end{align}
A simple choice of working covariance matrix is obtained by assuming a (potentially incorrect) conditional independence model for the marginal $\cM_i\cup\cM_j$:
\[    (\cM_i\setminus\cM_j) \ind (\cM_j\setminus\cM_i) | \cM_i \cap \cM_j . \]
This gives a closed-form expression for the expected marginal frequencies in the $\cM_i\cup\cM_j$ in terms of the expected marginal frequencies in the $\cM_i$, so that~(\ref{normalscore2}) subject to~(\ref{freedom}) can be solved for $\bbeta$, using, for example, the Newton-Raphson method. 
%\[  m^{\cM_i\cup\cM_j}_{\bi_{\cM_i\cup\cM_j}} =  \]

%Let $\tilde\bmm$ be the frequency vector with the same $\cM$-marginals as $\bmm$, and satisfying the log-linear model with generator $\cM$. For example, if $\cM=(\{1,2\},\{2,3\})$, the log-linear model is conditional independence, and $\tilde m_{ijk}=m_{ij+}m_{+jk}/m_{+j+}$.  %\marginpar{Sorry, this is not clear to me. Isn't the goal to find $\tilde\bmm$?}
%We can then choose $\tilde\bV_\by=\bM'\bD_{\tilde\bmm}\bM$. Since $\tilde\bmm$ is a function of $\bmu$, which is a function of $\bbeta$, the GEE can be solved for $\bbeta$, which was to be estimated. 

\subsubsection{Remarks on the GEE method}

If the working covariance matrix is incorrect, the GEE method loses asymptotic efficiency compared to the asymptotically optimal ML method.
Above, we proposed a simple working covariance, which for univariate marginals corresponds to an independence working assumption. Touloumis et al.\ (2013) showed that this leads to a potentially big loss of efficiency if there is a strong dependence among the marginal observations. 
Efficiency can be improved by specifying a working covariance matrix that is closer to the truth, which can be done by specifying and estimating an appropriate parametric model for the marginals in~(\ref{margij}); 
Touloumis et al.\ obtained major improvements for univariate marginal models by modelling the bivariate marginals using homogeneous association models (see Agresti, 2013, Chapter 9, or Forcina and Kateri, 2021, for overviews of association models). 

The GEE method is a {\it quasi-likelihood} method. Another popular quasi-likelihood method is {\it composite likelihood}, which is based on a quasi-likelihood defined by multiplying certain marginal likelihoods; see, e.g., Molenberghs and Verbeke (2005, Chapter 9) for an overview. 
Composite likelihood has the advantage that it can be used both for marginal and conditional models. The GEE method has the advantage that, by improving the specification of the working covariance matrix, its asymptotic efficiency can be arbitrarily close to that of the ML method. 

Model comparison using GEE estimation is more difficult than using ML estimation. Model comparison and goodness-of-fit tools were developed by Rotnitzky and Jewell (1990), and the quasi-likelihood information criterion (QIC) developed by Pan (2001) is particularly popular.

\section{Areas of application}

\subsection{Directed graphical models} \label{graph}

%A {\it directed acyclic graph} (DAG) is a 

Graphical models for categorical data associated with DAGs, or the more general chain graphs (Lauritzen, 1996), are marginal log-linear models in the sense of Bergsma and Rudas (2002). Parameterizations of these models have received considerable attention recently, see Rudas, Bergsma, and N\'emeth (2010), Marchetti and Lupparelli (2011), Evans and Richardson (2013), N\'emeth and Rudas (2013), and Nicolussi and Colombi (2017). For DAGs, the Markov property is
\begin{equation}\label{dag}
V_i \ind \nd(V_i) \mid \pa(V_i).
\end{equation}
Here, for every variable $V_i$, $\nd(V_i)$ denotes the non-descendants and $\pa(V_i)$ denotes the parents of $V_i$. The marginal log-linear parameterization of such models given in Rudas et al.\ (2010) is based on a well-numbering of the variables (Lauritzen et al., 1990), such that (\ref{dag}) is equivalent to
\begin{equation}\label{wellnumb}
V_i \ind  \pre(V_i)\setminus \pa(V_i) \mid \pa(V_i),
\end{equation}
where $\pre(V_i)$ is the set of variables preceding $V_i$ in the well-numbering. The parameterization proposed by
Rudas et al.\ (2010) is based on the marginals $\{V_i\} \cup \pre(V_i)$ which allows a parameterization as in Theorem~1.

Early work on statistical models associated with chain graphs includes Lauritzen and Wermuth (1989), Frydenberg (1990), Cox and Wermuth (1996), Andersson, Madigan, and Perlman (2001), Richardson (2003), Wermuth and Cox (2004), and Drton (2009).
For a component $\cK \subseteq \cV$ of a chain graph,  $\ND (\cK)$ is the set of nondescendants of $\cK$, i.e., the union of those
components, except $\cK$, for which no semi-directed path leads from any node in $\cK$ to any node in these components. $\PA(\cK)$ is the set of parents of $\cK$, i.e., the union of those components from which an arrow points to a node in $\cK $. The set of neighbours of $\cX \subseteq \cK$, $\nb(\mathcal{X})$, is the set of nodes in $\cK$ that are connected  to a node in $\mathcal{X}$
and $\pa(\mathcal{X})$ is the set of nodes from which an arrow points to any node in $\mathcal{X}$.

Chain graph models are defined by combinations of some of the following properties.

\vspace{1mm}

\begin{enumerate}
\item[P1] For all components $\cK$,
$
   \cK  \ind \left\{\ND(\cK ) \setminus \PA(\cK )\right\} \mid \PA(\cK),
$

\item[P2a] For all $\cK$ and $\mathcal{X} \subseteq \cK$,
$
\mathcal{X} \ind   \left\{ \cK \setminus \mathcal{X} \setminus
\nb(\mathcal{X}) \right\} \mid \left\{
 \PA(\cK) \cup \nb(\mathcal{X}) \right\},
$

\item[P2b] For all $\cK$ and $\mathcal{X} \subseteq \cK$,
$
 \mathcal{X} \ind   \left\{\cK  \setminus \mathcal{X} \setminus
\nb(\mathcal{X}) \right\} \mid
 \PA(\cK),
$

\item[P3a] For all  $\cK$ and  $\mathcal{X}
\subseteq \cK $,
$
\mathcal{X} \ind \left\{ \PA(\cK ) \setminus \pa(\mathcal{X}) \right\} \mid
\left\{ \pa(\mathcal{X}) \cup \nb(\mathcal{X}) \right\},
$

\item[P3b] For all  $\cK$ and  $\mathcal{X}
\subseteq \cK $,
$
\mathcal{X} \ind \left\{ \PA(\cK ) \setminus  \pa(\mathcal{X}) \right\}\mid
\pa(\mathcal{X}).
$
\end{enumerate}

\vspace{1mm}

The Type I Markov property (P1, P2a, P3a) is also called the Lauritzen--Wermuth--Frydenberg block-recursive
Markov property, see Lauritzen and Wermuth (1989) and Frydenberg (1990), and
the Type II Markov property (P1, P2a, P3b) is also called the Andersson--Madigan--Perlman block-recursive
Markov property, see Andersson et al.\ (2001).

Smoothness of Type I models is implied by the results of Frydenberg (1990) and is also easily obtained applying Theorem 1.

\begin{figure}[!]
  \vspace{-20mm}
    \hspace{6mm}
  \includegraphics[width=180mm]{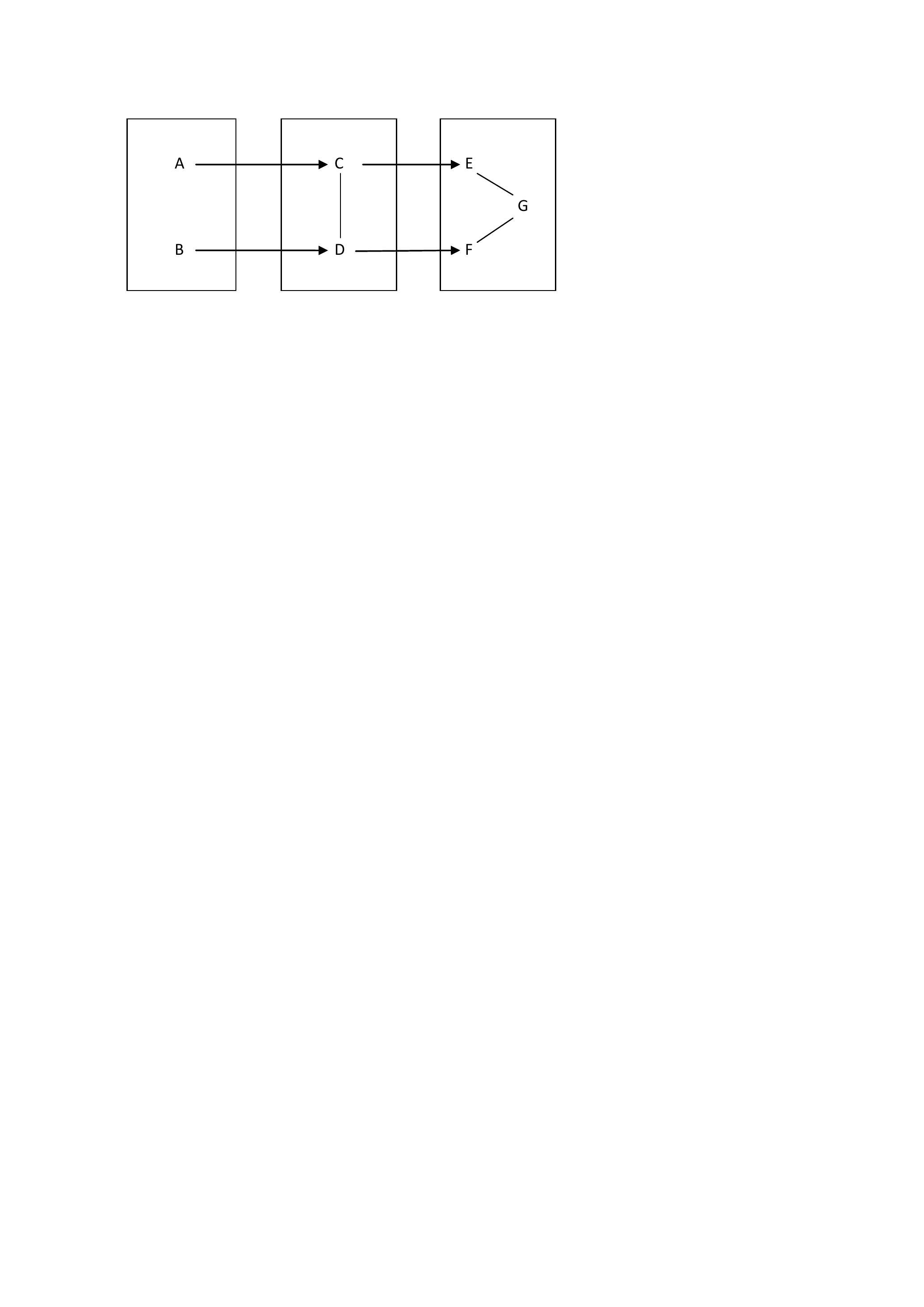}
\vspace{-195mm}
  \caption{Chain graph whose Andersson--Madigan--Perlman interpretation is a smooth model}
\label{ampgraph}
\end{figure}

The following example illustrates that marginal log-linear parameterizations may be used to establish smoothness of chain graph models belonging to model classes which also contain nonsmooth models.
The graph in Figure~\ref{ampgraph} with Type II  interpretation is a smooth  model and may be parameterized using the marginals  $AB$, $ABC$, $ABD$, $CDE$, $CDF$, $CDG$, $CDEG$, $CDFG$, $CDEFG$, $ ABCDEFG $. Type II models are not smooth in general, see Drton (2009), but in this case Theorem 1 implies  smoothness immediately.

Drton (2009) showed that Type IV models (P1, P2b, P3b) are smooth and gave a parameterization.  Lupparelli, Marchetti, and Bergsma (2009) illustrated through examples that these models are marginal log-linear.
We now apply the general method in Theorem~1 to obtain smoothness based on an interpretable parameterization, also implying  the number of degrees of freedom associated with a Type IV model.

\begin{theorem}
Assuming strictly positive discrete distributions, a  Type IV model for a chain graph is a hierarchical marginal log-linear model, and is, therefore, smooth. If the chain graph has components $\cK_1, \ldots , \cK_T$, that are well-numbered, the parameterization is based on the
marginals
\begin{equation}\label{margsth2}
\{ \PA(\cK_{t}) \cup \cX : \cX \subseteq \cK_{t} \}^*, \, \cK_1 \cup \ldots \cup \cK_{t} ,\,\,t=1, \ldots, T,
\end{equation}
where $\{ \,\,\,\}^*$ denotes a non-decreasing ordering of the elements of the set. The parameters set to zero to define the model are those associated with the effects in
\begin{equation}\label{zer}
\left\{ \dD(\cX, \cK_t \setminus \cX \setminus \nb(\cX),\PA(\cK_t)): \cX \subseteq \cK_t \right\} \cup
\end{equation}
$$
\left\{ \dD(\cX, \PA(\cK_t) \setminus \pa(\cX),\pa(\cX) ): \cX \subseteq \cK_t \right\} \cup
%$$
%$$
\dD(\cK_t, \PRE(\cK_t) \setminus \PA(\cK_t),\PA(\cK_t )),
$$
for all components $ \cK_t $, where $ \PRE(\cK_{t}) $ is the set of components that precede  $ \cK_{t} $.
\end{theorem}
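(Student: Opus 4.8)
The plan is to derive everything from Theorem~\ref{d-th}. The first step is to write the defining conditional independences of the Type~IV Markov property (P1, P2b, P3b), ranging over all components $\cK_t$ and all $\cX\subseteq\cK_t$, as a single family of the form~(\ref{coins}), discarding any statement whose right‑hand side is empty (these are vacuous and have empty $\mathbb{D}_i$). Using the well‑numbering of the components, P1 is replaced by the equivalent recursive statements $\cK_t\ind\bigl(\PRE(\cK_t)\setminus\PA(\cK_t)\bigr)\mid\PA(\cK_t)$: the forward direction is the decomposition property of conditional independence, since $\PRE(\cK_t)\subseteq\ND(\cK_t)$, and the converse is the standard ordered‑to‑directed Markov argument applied to the component DAG, exactly as (\ref{dag}) versus (\ref{wellnumb}) for ordinary DAGs. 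Reading off the triples $(\cA_i,\cB_i,\cC_i)$ and forming $\mathbb{D}_i=\mathbb{P}(\cA_i\cup\cB_i\cup\cC_i)\setminus[\mathbb{P}(\cA_i\cup\cC_i)\cup\mathbb{P}(\cB_i\cup\cC_i)]$ then reproduces, term by term, the three families of effects in~(\ref{zer}); here one uses the identities $\cK_t\cup(\PRE(\cK_t)\setminus\PA(\cK_t))\cup\PA(\cK_t)=\cK_1\cup\cdots\cup\cK_t$ and $\cX\cup(\PA(\cK_t)\setminus\pa(\cX))\cup\pa(\cX)=\cX\cup\PA(\cK_t)$, and the fact that all three triples are pairwise disjoint (which uses $\PA(\cK_t)\cap\cK_t=\emptyset$).

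Next I would check that~(\ref{margsth2}) is a legitimate non‑decreasing and complete sequence of marginals, after deleting any entry contained in an earlier one — for instance $\PA(\cK_{t+1})\subseteq\cK_1\cup\cdots\cup\cK_t$ is redundant — a deletion that changes neither the parameterization nor the assignment $\mathcal{E}\mapsto\mathcal{M}(\mathcal{E})$; completeness holds because the last marginal is $\cK_1\cup\cdots\cup\cK_T=\cV$. The core of the proof is to pin down $\mathcal{M}(\mathcal{E})$ for each effect $\mathcal{E}$ in the union of the $\mathbb{D}_i$. For $\mathcal{E}$ in the D‑set of a P2b‑ or P3b‑statement for $\cK_t$ one has $\mathcal{E}\setminus\cK_t\subseteq\PA(\cK_t)$ and $\mathcal{E}\cap\cK_t\neq\emptyset$, so the inclusion‑minimal member of the block $\{\PA(\cK_t)\cup\cX:\cX\subseteq\cK_t\}$ containing $\mathcal{E}$ is $\PA(\cK_t)\cup(\mathcal{E}\cap\cK_t)$, and since every marginal preceding this block lies in $\cK_1\cup\cdots\cup\cK_{t-1}$ we get $\mathcal{M}(\mathcal{E})=\PA(\cK_t)\cup(\mathcal{E}\cap\cK_t)$. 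For $\mathcal{E}$ in the D‑set of the P1‑statement for $\cK_t$ we get $\mathcal{M}(\mathcal{E})=\cK_1\cup\cdots\cup\cK_t$, because $\mathcal{E}$ meets $\PRE(\cK_t)\setminus\PA(\cK_t)$ and so is not contained in any local marginal $\PA(\cK_t)\cup\cX$. Condition~(\ref{d-cond}) then reduces, in each case, to $\PA(\cK_t)\subseteq\mathcal{M}(\mathcal{E})$ (immediate, using $\pa(\cX)\subseteq\PA(\cK_t)$ in the P3b case) together with $\mathcal{M}(\mathcal{E})\subseteq\cA_i\cup\cB_i\cup\cC_i$, which follows from $\PA(\cK_t)\cap\cK_t=\emptyset$ and the set identities above.

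The main obstacle I anticipate is not any single computation but the bookkeeping required because an effect $\mathcal{E}$ may lie in several $\mathbb{D}_i$, so the single marginal $\mathcal{M}(\mathcal{E})$ must satisfy~(\ref{d-cond}) for all of them at once. Using that the components partition $\cV$ and that every parent of $\cK_t$ lies in some $\cK_s$ with $s<t$, one shows that the P1‑type D‑sets are pairwise disjoint and disjoint from all P2b/P3b‑type D‑sets (a P1‑type effect is contained in $\cK_1\cup\cdots\cup\cK_t$ and meets $\PRE(\cK_t)\setminus\PA(\cK_t)$, neither of which is compatible with membership in a P2b/P3b D‑set), and that the P2b/P3b‑type D‑sets attached to distinct components are disjoint (membership in both would force $\cK_t$ to contain a parent of $\cK_{t'}$ and vice versa). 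Hence the only genuine overlaps are among the P2b‑ and P3b‑sets of one component $\cK_t$, where $\mathcal{M}(\mathcal{E})=\PA(\cK_t)\cup(\mathcal{E}\cap\cK_t)$ is forced and, by the previous paragraph, satisfies~(\ref{d-cond}) for every such membership. With this dealt with, everything else is formal.

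Finally, with~(\ref{d-cond}) verified, Theorem~\ref{d-th} applies directly: the Type~IV model coincides with the set of strictly positive distributions satisfying $\lambda^{\mathcal{M}(\mathcal{E})}_{\mathcal{E}}=0$ for all $\mathcal{E}$ in the union of the $\mathbb{D}_i$, that is, for precisely the effects in~(\ref{zer}); this is a linear (zero) restriction on the hierarchical and complete marginal log-linear parameterization attached to~(\ref{margsth2}), which is smooth by Theorem~\ref{smthness} once redundant components are omitted, and the model is smoothly parameterized by the remaining marginal log-linear parameters. The number of degrees of freedom is then the total number of non-redundant components of the parameters set to zero, i.e.\ the number of non-redundant values carried by the effects listed in~(\ref{zer}) (in the binary case, simply the cardinality of that union of effects), with the overlaps among the P2b/P3b‑sets of each component accounted for by inclusion–exclusion.
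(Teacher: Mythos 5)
Your proposal is correct and follows exactly the route the paper intends: the paper defers the proof to Rudas et al.\ (2010) but explicitly frames the result as an application of Theorem~\ref{d-th}, which is what you carry out — rewriting P1 via the well-numbering, matching the $\mathbb{D}_i$ to the effect families in~(\ref{zer}), identifying $\mathcal{M}(\mathcal{E})$ within the marginal sequence~(\ref{margsth2}), and checking condition~(\ref{d-cond}) for every membership of an effect in the various $\mathbb{D}_i$. Your treatment of the overlap bookkeeping (disjointness of the P1-type D-sets from the P2b/P3b-type ones and across components) is the right and necessary supplementary observation.
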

The proof is given in Rudas et al.\ (2010). 
The parameters not set to zero, i.e., the ones not corresponding to~(\ref{zer}), parameterize the model. These parameters are associated with the same effects as those found by Marchetti and Lupparelli (2008) to have non-zero values in the examples they investigated, although the marginals used for the parameterization are different.
%For every component $\cK_t$, the  marginal log-linear parameters not set to zero, i.e., the ones not listed in (\ref{zer}), are measures of the strength of influence:

%$-$ measures of the strength of influence of $\PA(\cK_t)$ on   $\cY \subseteq \cK_t$, so that  a $\cY$  that contains variables from an $\cX \subseteq \cK_t$ and also from the complement of the closure of $\cX$, has no influence;

%$-$  measures of the strength of influence of $\pa(\cX)$ on $\cX$ and of its associations with  $\PA(\cK_t)\setminus \pa(\cX)$;

%$-$ measures of the strength of the separate influence  of  $\PA(\cK_t)$ on $\cK_t$ and on the  nondescendants of $\cK_t$ preceding it.

Further relevant work includes Marchetti and Lupparelli (2011), who described marginal log-linear parameterizations of chain graph models of the multivariate regression type. 
Evans and Richardson (2013) introduced a class of marginal models corresponding to Acyclic Directed Mixed Graphs (ADMGs), which contain both directed and bidirected edges. These models were shown to possess a smooth parameterization, and conditions were given for the parameterization to have a variation independence property. 
Nicolussi and Colombi (2017) considered  Type II chain graph models. This class of models is known to be not smooth, in general, but, by using a marginal log-linear parameterization, a smooth subclass could be identified.

\subsection{Path models}
\label{pathsection}

\begin{figure}
\caption{The graph of a path model}
\vspace{-10mm}
\hspace{30mm}
\includegraphics[trim=0 600 0 0, clip, width=110mm]{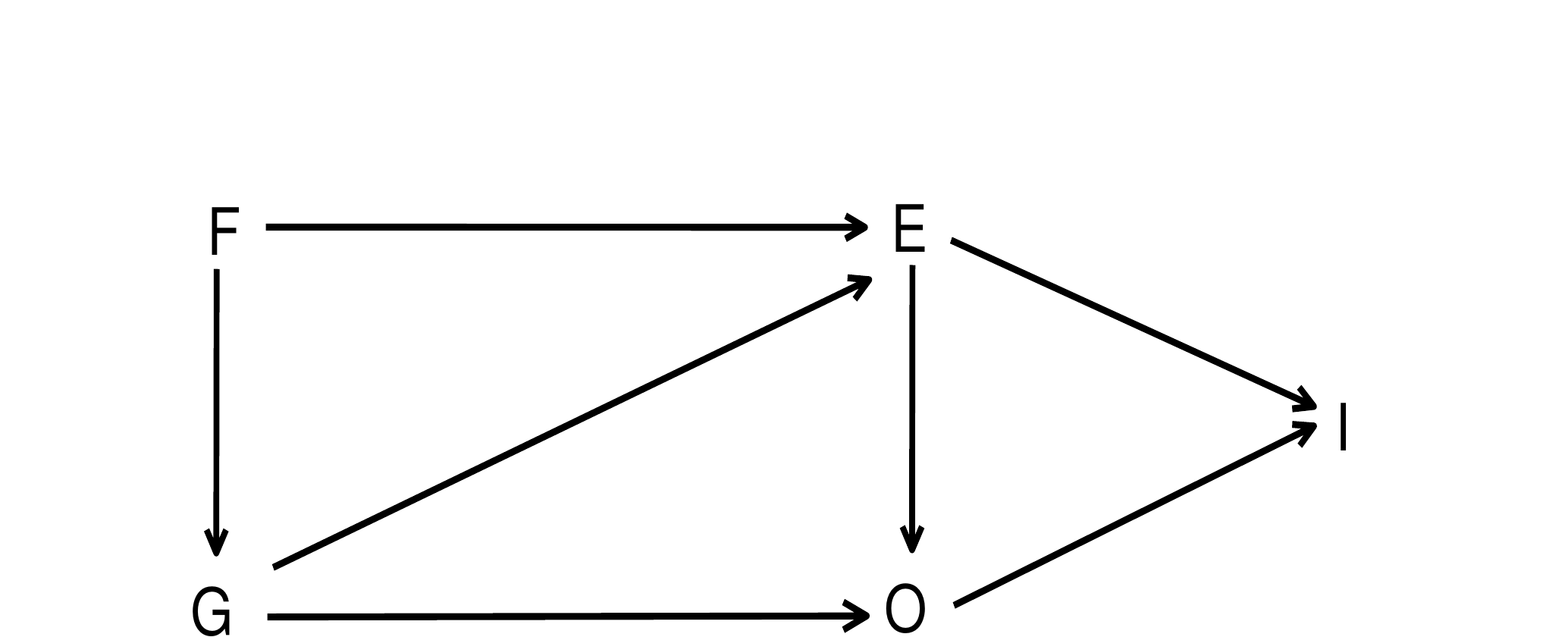}
%\caption{A directed acyclic graph}
\label{fig2}
\end{figure}

Path models have a long history in statistics and the basic idea is illustrated using Figure~\ref{fig2}. Intuitively, one may wish to use path models to describe a situation when variable $F$ influences $E$ and $G$, $G$ influences $E$ and $O$, $E$ influences $O$ and $I$, and $O$ influences $I$.\footnote{The notations for the variables are going to be clarified later.} In this case, however, one may say that in addition to the direct influence of $F$  on $E$, $F$ also has an indirect influence on $E$ through $G$. Similarly, $E$ influences  $I$ directly and indirectly. Also, one may wish to assume that only the influences  depicted in the graph exist among the variables. There is a further assumption, which is often made but usually remains implicit, namely that variables not taken into account only have negligible influences on those analysed. Path analysis aims at formulating these assumptions precisely and also at quantifying the magnitudes of the influences. 

To achieve these goals, motivated by Goodman (1973), Rudas et al. (2006) proposed the following $2$-step approach.

First, interpret the graph in Figure \ref{fig2} as a graphical Markov model and parameterize the distributions in this model as a marginal log-linear model.

The conditional independences associated with the graph are
$$
O \ind F | G E
$$
and
$$
I \ind F G | E O .
$$

These conditional independences may be conveniently imposed in a marginal log-linear model based on the marginals
$$
FGEO \makebox{ and } FGEOI,
$$
and are obtained, as implied by Theorem \ref{d-th}, by setting to zero the following marginal log-linear parameters
$$
\lambda^{FGEO}_{FO}, \, \lambda^{FGEO}_{FEO},  \, \lambda^{FGEO}_{FGO}, \, \lambda^{FGEO}_{FGEO},
$$
in the $FGEO$ marginal, and also
$$
\lambda^{FGEOI}_{FI}, \, \lambda^{FGEOI}_{GI}, \, \lambda^{FGEOI}_{FGI}, \, \lambda^{FGEOI}_{FEI}, \ \lambda^{FGEOI}_{FOI}, \, \lambda^{FGEOI}_{GEI}, 
$$
$$
\lambda^{FGEOI}_{GOI}, \lambda^{FGEOI}_{FGEI}, \,\lambda^{FGEOI}_{FGOI}, \,\lambda^{FGEOI}_{FEOI}, \, \lambda^{FGEOI}_{GEOI}, \, \lambda^{FGEOI}_{FGEOI}
$$
in the $FGEOI$ marginal, that is, in the whole table.

There is a total of $2^5=32$ parameters, and $16$ out of them are set to zero to imply the conditional independences. The remaining parameters parameterize the distributions in the  Markov model. The parameters in the present case are interpreted as measuring the strength of influence instead of association because of the inherent assumption behind using a directed graph to formulate the research hypothesis.

The remaining parameters belong to four disjoint groups, depending on the number of variables included in their effects:
the marginal log-linear parameter of the empty effect;
the marginal log-linear parameters with a single variable in their effects;
the marginal log-linear parameters with two variables in their effects; and
the marginal log-linear parameters with more than two variables in their effects. 

The parameters with more than $2$ variables in their effects quantify the joint influence of several variables on one variable, as these parameters are log-linear parameters (determined in a particular marginal) and possess the standard properties of log-linear parameters. For example, $\lambda^{FGEO}_{FGE}$, which is not set to zero, is a measure of the joint influence of $F$ and $G$ on $E$, in addition to  their individual influences.

Although the intention of the path model was to assume that such higher-order influences do not exist, their existence is not yet excluded. Indeed, it is very easy to find distributions for, say, three categorical variables, where there are no individual influences (all $2$-way marginal distributions are uniform), but two variables together completely determine the third one, illustrating that joint influences on top of the individual influences do exist (see, e.g., Rudas, 2018).

Therefore, in the next step of the path model definition, such higher-order interactions are excluded.

Second, assume that among the marginal log-linear parameters not set to zero in the first step, all those with more than two variables in their effect are equal to zero.

In the example, this implies setting to zero the following marginal log-linear parameters
$$
\lambda^{FGEO}_{FGE}, \, \lambda^{FGEO}_{GEO},
$$
and
$$
\lambda^{FGEOI}_{EOI}.
$$

To define a path model from  the graphical model, a further three parameters are set to zero. This means that the existence of the joint influence of $F$ and $G$ on $E$ and the joint influence of $G$ and $E$ on $O$ in the $FGEO$ marginal, and of the joint influence of $E$ and $O$ on $I$ in the $FGEOI$ marginal, are excluded.

The remaining $32-(16+3)=13$ marginal log-linear parameters parameterize all the distributions in the path model associated with the graph in Figure \ref{fig2}. These parameters are the univariate distributions of the variables and the strengths of the influences associated with the arrows in Figure \ref{fig2}.

The steps of the definition and  parameterization of the model are summarized in Table~\ref{pt}.

It has to be pointed out that these parameters provide a parameterization of all distributions in the path model. In a practical data analytic situation, this means that if a particular path model is used, then all relevant information from the data is summarized by the estimates of these parameters obtained from the data. N\'emeth  and   Rudas (2013a) provide such an example in the context of social status attainment with variables $F$ - father's education, $G$ - father's occupation, $E$ - son's education, $O$ - son's occupation, and $I$ - son's income. They found the path model associated with the graph in Figure \ref{fig2} well fitting to data for several countries, and gave estimates and interpretations of the parameters of the model. For further details of applications of marginal models to social mobility research, see   N\'emeth  and   Rudas (2013b).

\begin{table}
\[
\begin{array}{|c|c|c|}
\hline
\makebox{Marginals}  & \shortstack{ $FGEO$} & \shortstack{ $FGEOI$} \\ 
\hline
\shortstack{\makebox{Effects}} & \shortstack{$\emptyset$, {\it F}, {\it G}, {\it E}, {\it O}, {\it FG}, {\it FE}, {\it FO}, \\ {\it GE}, {\it  GO}, {\it EO}, {\it FGO},  {\it FGE},  \\ {\it FEO}, {\it GEO}, {\it FGEO}} & \shortstack{{\it I}, {\it FI}, {\it GI}, {\it EI}, {\it OI}, {\it FGI}, {\it FEI}, {\it FOI}, \\ {\it GEI},  {\it GOI}, {\it EOI}, {\it FGEI}, {\it FGOI}, \\ {\it FEOI}, {\it GEOI}, {\it FGEOI}} \\ 
\hline
\shortstack{\makebox{Effects set to zero to} \\  \makebox{define the graphical model}}  &\shortstack{ {\it FO}, {\it FEO}, {\it FGO}, {\it FGEO}} & \shortstack{ {\it FI}, {\it GI}, {\it FGI}, {\it FEI}, {\it FOI}, \\ {\it GEI}, {\it GOI}, {\it FGEI}, {\it FGOI},\\ {\it FEOI}, {\it GEOI}, {\it FGEOI}}   \\ 
\hline
\shortstack{\makebox{Effects set to zero to} \\  \makebox{define the path model}  } &\shortstack{ {\it FGE}, {\it GEO}} & \shortstack{{\it EOI}} \\ 
\hline
 \shortstack{\makebox{Remaining effects which} \\  \makebox{parameterize the path model}}  & \shortstack{$\emptyset$, $F$, {\it G}, {\it E}, {\it O}, {\it FG}, \\ {\it FE}, {\it GE}, {\it GO}, {\it EO} }& \shortstack{{\it I}, {\it EI}, {\it OI}} \\ 
\hline
\end{array}
\]
\caption{The definition and parameterization of the path model associated with Figure~\ref{fig2}}
\label{pt}
\end{table}

\subsection{Latent variable models}

When some relevant variables in an analysis cannot be observed, i.e., are latent, then the analysis of the observed variables applies to a marginal of the entire table. Therefore, latent variable models and marginal models are closely related. Under certain modelling assumptions, the joint distribution of the latent and observed variables may be estimated, but even in this case, testing of the model has to be restricted to a comparison of the estimated and observed marginal distributions.

For example, if the true position of someone on a left-right political scale, say $X$, is difficult to observe, then one may ask two related questions, say $A$ and $B$, which are indicators of $X$, but may not measure it precisely, rather with some measurement error. Thus, if  $A$ and $B$ are equal to $X$ perturbed by  measurement errors independent of $X$ and of each other, then one has
\begin{equation}\label{LV}
A \ind B | X.
\end{equation}
Measurement errors are usually assumed to be additive when the observations are numerical. For categorical data, measurement errors may also take different forms. For example, in the case of a binary variable the measurement error may change the category with a given probability. Then, the error is independent of the true category, if the probability of change does not depend on it. Or, for variables with multiple categories, the independent error may alter the category, so that reporting any category other than the true one has the same probability, which does not depend on the true category.

As $X$ is latent, and $A$ and $B$ are observed, (\ref{LV}) is a simple latent variable model. As seen above, it has many straightforward marginal log-linear model definitions. One can use the $X$ and $XAB$ marginals, or the $X$, $A$, $B$, $XAB$ marginals, but the definition may also be based on the $XAB$ whole table. In either case, the model is defined as
$$
\lambda^{XAB}_{AB} =\lambda^{XAB}_{XAB} = 0.
$$
To test the latent variable model (\ref{LV}), one has to rely on the observed data for the $AB$ marginal. The usual procedure is to specify the number of categories of the latent variable $X$ and to obtain estimates for the distribution $AB$ , subject to  (\ref{LV}) so that the likelihood of the observed data is maximized. To determine such estimates, usually the EM algorithm is applied (see, e.g., Rudas, 2018). Then, the estimates and the actual observations are compared using some statistical test.

Several more involved applications of the marginal modelling approach to latent variable models are described by Bergsma et al. (2009), Bergsma, Croon and Hagenaars (2013), and Hagenaars, Bergsma and Croon (2019). In one problem, there are two latent variables, $Y$ and $Z$, which are related. Their example refers to election forecasting and $Y$ is political party preference out of three parties and $Z$ is candidate preference, out of their respective candidates. These are seen as latent variables, which may only be observed in an imprecise way. The observed variables are the responses in two waves of a panel study to the party preference ($A$ and $B$) and to the candidate preference ($C$ and $D$) questions. This setup may be seen as an instance of the repeated measurement designs considered in Section \ref{repmeas}.  The model they fit is a graphical model of the path analysis type in the sense that the  highest order interactions allowed are $YZ, YA, YB, ZC, ZD$. 

To provide a marginal log-linear definition of this model, one may use the marginals $YZ, XA, XB, ZC, ZD, YZABCD$ and set all marginal log-linear parameters which are defined in the $YZABCD$ marginal equal to zero.

Estimates for the univariate marginal distributions of $Y$ and $Z$ show the relative popularities of the parties and of their respective candidates. Bergsma et al. (2009) investigate further the hypothesis that these marginal distributions are identical, i.e., the candidates are just as popular as the parties nominating them. This hypothesis is called latent marginal homogeneity.

To formulate latent marginal homogeneity, it is easiest to use the following marginals: $Y, Z, YZ, YA, YB, ZC, ZD, YZABCD$. Here also, the path model is imposed by setting to zero all parameters which are calculated in the $YZABCD$
marginal, and latent marginal homogeneity is imposed by requiring that
$$
\lambda^{Y}_{Y} = \lambda^{Z}_{Z},
$$
which is a marginal log-linear model.

Manifest variables are often considered indicators of the latent variables. The reliability of such an indicator is the extent to which the manifest variable is determined by the latent variable. This, of course, may be measured in many ways; one of these is based on the conditional distribution of the manifest variable, given the latent variable.

To consider a very simple model, let $Y$ and $Z$ be latent variables with manifest indicators $A$ and $
C$ respectively. We are not interested now in how the latent or the manifest variables are related, rather only in to what extent $Y$ determines $A$ and to what extent $Z$ determines $C$. By Theorem 2, if the $YZ$, $YA$ and $ZC$ marginals are used in a marginal log-linear parameterization, then
$\lambda^{YA}_A$ and $\lambda^{YA}_{YA}$ determine the conditional distribution of $A$ given $Y$, and $\lambda^{ZC}_C$ and $\lambda^{ZC}_{ZC}$  determine the conditional distribution of $C$ given $Z$.

If, now, all the variables are assumed to have identical categories, like party and candidate preference in the example above, then the requirement that
$$
\lambda^{YA}_A = \lambda^{ZC}_C \makebox{ and } \lambda^{YA}_{YA} = \lambda^{ZC}_{ZC}, 
$$
means equal reliability of the two manifest variables.

The strength of this approach to analysing reliability is that it can be combined with any other modelling assumption, given that the relevant marginals may be written in a non-decreasing order. For details and applications see Bergsma et al. (2009).

Marginal log-linear models with latent variables have also been considered in the context of capture-recapture models (Stanghellini and van der Heijden (2004), Bartoluci and Forcina (2006)). In this case also, observed variables are not necessarily independent conditionally on the latent variables.

\subsection{Further applications and extensions}

This section gives brief summaries of some further theoretical developments and interesting applications published in the literature.

Qaqish  and  Ivanova (2006) consider multivariate logistic parameterizations, which are generalized by the marginal log-linear parameterizations defined above, and provide results for the strong compatibility of such parameters.

Bartoluci  and  Forcina (2006) apply marginal log-linear parameterizations to develop models for the capture/recapture problem. For related work see also Turner (2007).

Forcina (2008) develops a marginal log-linear parameterization of latent class models with covariates and obtains identifiability results.

Bartolucci et al. (2012) develop a Bayesian approach to selecting the model best supported by the data from among a wide class of marginal models defined by equality or inequality constraints on generalized logits or generalized odds ratios. They use the Bayes factor to govern model selection.

Dardanoni et al. (2012) analyse intergenerational socioeconomic mobility tables for many countries, to test the monotonicity hypothesis stating that a higher socioeconomic class is never less advantageous than a lower one. They formulate this monotonicity as a marginal model, using the parameterization proposed by Bartolucci et al. (2007).

Shpitser et al. (2013) develop marginal log-linear parameterizations for nested Markov models and impose sparsity similar to the idea described in Section \ref{pathsection}.

Kuijpers et al. (2013a) propose methods to formulate and test hypotheses for the widely used measure of test score reliability, Cronbach's alpha, as marginal log-linear models.
\mbox{Kuijpers} et al. (2013b) provide standard errors of scalability coefficients in the case when the items are not binary, and also for large numbers of items, using a marginal modelling approach.

Colombi  and  Giordano (2015) parameterize the two components of a latent Markov model (the observed tie series and the unobserved Markov chain) with marginal log-linear parameters and show that relevant hypotheses may be formulated by setting some to zero.

Colombi  and  Forcina (2016) test inequality hypotheses for marginal log-linear parameters. They propose a likelihood-based procedure to test a set of equality constraints against positive departures from equality (the inequality constraints) and then the latter against the saturated model.  

%Nicolussi  and  Colombi (2017) consider so called Type II chain graph models. This class of models is known to be not smooth, in general, but, by using a marginal log-linear parameterization, a smooth subclass can be identifed. OR MOVE UP TO GRAPHICAL?

Ntzoufras et al. (2019) discuss aspects of Bayesian inference for graphical marginal log-linear models. They provide a strategy to perform Markov chain Monte Carlo to obtain posterior densities. Their method also takes into account the requirement that the parameter values should be selected in a way which provides compatible marginal distributions.

Colombi et al. (2019) model the latent behaviour of raters with a binary variable indicating either one of two possible strategies. A marginal parameterization is used to link responses to underlying explanatory factors.

Bon et al. (2020) deal with disclosure limitation of sensitive or confidential data. They model the partial information provided by the data custodians as log-linear models on possibly overlapping marginals of a super-table and investigate methods of combining the available information. They also provide an application to Australian housing tenure transition data.

Nicolussi  and  Cazzaro (2020) analyse context specific independences, that is, independences which only hold in certain but not all category combinations of the variables involved, and show that hierarchical multinomial marginal models may be used to model such relationships.

Roverato, Lupparelli, and La Rocca (2013) and Lupparelli and Roverato (2017) consider {\it log-mean linear} parameterizations of marginal models for binary data. These are alternative marginal log-linear parameterizations to the ones considered in the present paper. Log-mean linear parameterizations have the interesting advantage of a closed-form likelihood.

Bergsma, Croon, and Hagenaars (2013) showed how marginal modelling methods can be extended to deal with complex sampling designs; in particular, they analysed a data set collected via a rotating panel design. The analysis there is carried out on data that are partially dependent. Furthermore, they showed how marginal modelling can be used for complex statistical models, giving an example of a data analysis using latent variables and both log-linear and non-log-linear constraints on the cell probabilities.

%CUB models

%capture-recapture

%Bayes

%regression?

%hierarchica.

%hidden Markov?

%matching / data fusion

%logistic

\end{document}